\newtheorem{theorem}{Theorem}
\newtheorem{lemma}{Lemma}
\newtheorem{corollary}{Corollary}
\newtheorem{definition}{Definition}
\theoremstyle{remark}
\newcommand{\bigo}[1]{\mathcal{O}#1}
\newcommand{\undecided}{\textsc{Undecided-State}\xspace}
\newcommand{\poly}{poly}
\newcommand{\nat}{\mathbb{N}}
\newcommand{\realnum}{\mathbb{R}}
\newcommand{\prob}{\mathbb{P}}
\newcommand{\pr}[1]{\mathbb{P}\left(#1\right)}
\newcommand{\mean}{\mathbb{E}}
\newcommand{\config}{\mathbf{x}}
\newcommand{\expect}[1]{\mathbb{E}\left[#1 \right]}
\newcommand{\udyn}{\undecided dynamics\xspace}
\newcommand{\uproc}{\undecided process\xspace}
\newcommand{\pruned}{\textsc{Pruned}\xspace process\xspace}
\newcommand{\voter}{\textsc{Voter}\xspace}
\newcommand{\averaging}{\textsc{Averaging}\xspace}
\newcommand{\abs}[1]{\lvert #1 \rvert}
\newcommand{\mesalpha}{\emph{Alpha}\xspace} 
\newcommand{\mesbeta}{\emph{Beta}\xspace}
\newcommand{\stubborn}{\textsc{stub}\xspace}
\newcommand{\stubproc}{\stubborn process\xspace}
\newcommand{\PULL}{Uniform-PULL\xspace}
\newcommand{\PUSH}{Uniform-PUSH\xspace}
\newcommand{\pnoise}{p_{noise}}
\newcommand{\additionalNodes}{n_{stub}}
\begin{document}

\title{Phase Transition of a Non-Linear Opinion Dynamics with Noisy Interactions}

\author[1]{Francesco d'Amore} 
\author[2]{Andrea Clementi} 
\author[1]{Emanuele Natale}

\affil[1]{Universit\'e C\^ote d'Azur, Inria, CNRS, I3S, France. {\tt $\{$francesco.d-amore,emanuele.natale$\}$@inria.fr}}
\affil[2]{University of Rome Tor Vergata, Rome, Italy {\tt clementi@mat.uniroma2.it}}

\date{}

\maketitle              

\begin{abstract}
 
 In   several real \emph{Multi-Agent Systems} (MAS), it has been  
observed that only weaker forms of
\emph{metastable consensus} are achieved, in which  a large majority of agents agree on some  opinion  while other opinions continue to be supported  by  a (small) minority of agents. 
In this work, we take a step towards the investigation of metastable consensus for   complex (non-linear) \emph{opinion dynamics} by considering the famous \undecided dynamics in the binary setting, which is known to reach consensus exponentially faster than the \voter dynamics.   
We propose a simple form of uniform noise in which each message can change to another one with probability $p$ and we prove that the persistence of a \emph{metastable consensus} undergoes a \emph{phase transition} for $p=\frac 16$. In detail, below this threshold,  we prove the system  reaches with high probability a metastable regime where  a large majority of agents keeps supporting the same    opinion  for polynomial time. Moreover, this opinion turns out to be    the   initial majority opinion, whenever the initial bias is slightly  larger than its standard deviation.
On the contrary, above the threshold, we show that the information about the initial majority opinion is  ``lost'' within  logarithmic time  even when the initial bias is maximum.
Interestingly, using a simple coupling argument, we show the equivalence between our noisy model above and the model where a subset of   agents behave in a \emph{stubborn} way.

\end{abstract}

\newpage

\tableofcontents

\section{Introduction}
\medskip

\medskip
We consider a fully-decentralized \emph{Multi-Agent Systems} (for short, MAS) formed by a set of $n$ agents (i.e. nodes) which  mutually interact  by exchanging messages over an underlying communication  graph.
In this setting, 
\emph{opinion dynamics} are mathematical models to  investigate 
the way a fully-decentralized  MAS is able to  reach some form of \emph{Consensus}. 
Their study is a hot topic touching  several research areas such as MAS \cite{CHK18,DNAW00}, Distributed Computing \cite{BCN20,cruciani_distributed_2019,ghaffari_nearly-tight_2018}, Social Networks \cite{acemoglu_opinion_2012,mossel_opinion_2017}, and System Biology \cite{boczkowski_limits_2018,cardelli_cell_2012}.
Typical examples of opinion dynamics are the Voter Model, the averaging rules, and the majority rules.  
Some of such  dynamics   share a surprising efficiency and resiliency that seem to exploit common \emph{computational principles}  
 \cite{BCN20,cruciani_distributed_2019,ghaffari_nearly-tight_2018}. 

Within such framework, the tasks of \emph{(valid) Consensus} and \emph{Majority Consensus} 
 have attracted a lot of attention within different application domains  in social networks \cite{mossel_opinion_2017}, 
 in biological systems \cite{feinerman_breathe_2017}, 
passively-mobile sensor networks \cite{angluin_simple_2007} and chemical reaction networks \cite{condon_approximate_2019}. In the  Consensus task, the system is required to converge to a  stable
configuration where all agents supports the same opinion and this opinion must be \emph{valid}, i.e., it must be supported by at least one agent in the initial configuration. While, in the Majority Consensus task, starting from an initial configuration where there is some positive bias towards one \emph{majority  opinion}, the system is required to converge to the  configuration where all agents support the initial majority opinion. Here, the \emph{bias}
of a configuration is defined as the difference between the number of agents supporting the majority opinion (for short, we   name this number as   \emph{majority}) and the number of agents supporting the second-largest opinion.

Different opinion dynamics have  been studied  in a variety of settings 
\cite{cooper_fast_2017,emanuele_natale_computational_2017}, 
and then used as subroutine to solve more complex computational tasks 
\cite{cruciani_phase_2018,shimizu_phase_2019,boczkowski_minimizing_2018}.

In the aforementioned applicative scenarios, it has been nevertheless observed that only weaker forms of \emph{metastable} consensus are achieved, in which the large majority of agents rapidly achieves a consensus  (while other opinions continue to be supported by  a small set of agents), and this setting is preserved for a relatively-long regime. 
Models that have been considered to study such phenomenon include MAS where: i) agents follow a linear dynamics, such as the \voter model or the  \averaging dynamics and
ii)  a small set of  \emph{stubborn agents} are present in the system \cite{mobilia_does_2003,mobilia_role_2007,yildiz_binary_2013}, or the 
local interactions are affected by \emph{communication noise} \cite{lin_robust_2007}.

We emphasize that the \voter model has a slow (i.e. polynomial in the number $n$ of agents) convergence time even in a fully-connected network (i.e. in the complete graph) and it does not guarantee a high probability to reach consensus on  the initial majority opinion, even starting from a large   initial bias (i.e.   $\Theta(n)$, where $n$ is the number of the agents of the system) \cite{HP01}. On the other hand, averaging dynamics requires agents to perform numerical operations and, very importantly, to have a large local memory (to guarantee a good-enough approximation of real numbers). For the  reasons above,  linear opinion dynamics cannot explain 
fast and reliable metastable consensus  phenomena observed in some MAS \cite{boczkowski_minimizing_2018,feinerman_breathe_2017,condon_approximate_2019}. 

The above discussion naturally leads us to investigate the behaviour of other, non-linear dynamics in the presence of stubborn agents and/or communication noise. Over a MAS having the $n$-node complete graph as the underlying graph, we  introduce a simple model of  \emph{communication noise} in the stochastic  process yielded by   a popular dynamics, known as the \undecided dynamics. In some previous papers~\cite{perron_using_2009}, this protocol has been  called the
\emph{Third-State Dynamics}. We here prefer the term ``undecided''  since it well captures the role of  this additional state.

According to this simple dynamics,      the state of every  agent can be either an opinion (chosen from a finite set $\Sigma$) or  the \emph{undecided state}.
At every discrete-time step (i.e., round),  
every agent ``pulls'' the state of a random neighbor and updates its state according to the following  rule:
if a non-undecided agent pulls a different opinion from its current one, then it will get undecided, while in all other cases it keeps its opinion; moreover, if the node is  undecided  then it will get the state of the pulled neighbor.

This non-linear   dynamics   is known to compute Consensus (and Majority Consensus) on the complete network within a logarithmic number of rounds \cite{angluin_simple_2007,clementi_consensus_2018} and, very importantly, it is optimal in terms of local memory since it requires just one extra  state/opinion \cite{mertzios_determining_2016}.

While communication noise is a common feature of real-world systems and its effects have been 
thoroughly investigated in physics and information theory \cite{cover_elements_2006}, 
its study has been mostly focused on settings in which communication happens over \emph{stable links} 
where the use of error-correcting codes is feasible since message of large size are allowed;
it has been otherwise noted that when interactions among the agents are 
random and opportunistic and consists of very-short messages, 
classical information-theoretic arguments do not carry on and new phenomena calls for a
theoretical understanding \cite{boczkowski_limits_2018}.

\subsubsection*{Our Contribution.} \label{ssec:contrib}
In this work, we   show that, under a simple model of uniform noise, the \undecided dynamics exhibits 
an interesting \emph{phase transition}. 

   We consider the     binary case (i.e., $|\Sigma| =2$) together with  an \emph{oblivious} and \emph{symmetric} action of noise over messages:  
 any sent message is changed  upon being received to any other value, independently and uniformly at random with probability $p$ (where $p$ is any fixed positive
constant smaller than $1/2$).

On one hand, if $p < 1/6$, starting from an arbitrary configuration of the complete network of $n$ agents,
 we prove that the system \emph{with high probability}\footnote{An event $E$ holds \emph{with high probability} if a constant $\gamma > 0$ exists such that $\mathbf{P}(E) \geq 1 - (1/n)^{\gamma}$.}
 (\emph{w.h.p.}, for short)  reaches, within $\bigo(\log n)$ rounds, a metastable almost consensus regime where  the bias towards one fixed valid opinion
keeps  large, i.e. $\Theta(n)$, for at least a $\poly(n)$ number of rounds (see Theorem \ref{thm:symbre}).
In particular, despite the presence of   random communication noise, our result implies that the \undecided dynamics  is able to rapidly break the initial symmetry of any balanced configuration and  reach a metastable regime of almost consensus (e.g., the perfectly-balanced configuration with $n/2$ agents having one opinion and the other $n/2$ agents having the other opinion).

  Importantly enough, our  probabilistic analysis  also shows that,  for any $p <1/6$, the system is able to ``compute'' the task of almost Majority Consensus.  Indeed, in Theorem \ref{theorem_almost_plurality}, 
starting from an arbitrary configuration
with     bias $\Omega(\sqrt{n\log n})$,\footnote{We remark that, when every agent chooses its initial binary opinion uniformly at random, the standard deviation of the bias is $\Theta(\sqrt{n})$.} we prove that the system w.h.p. reaches, within $\bigo(\log n)$ rounds, a metastable regime where the bias towards the  initial majority opinion  keeps large, i.e. $\Theta(n)$,  for at least a $\poly(n)$ number of rounds (see Theorem \ref{theorem_almost_plurality}). For instance, our analysis for $p = 1/10$ implies that the process rapidly reaches a metastable regime where the 
bias keeps size larger than $n/3$.

On the other hand, if $p > 1/6$, even when the initial bias is maximum (i.e., when the system starts from any full-consensus configuration),  after a logarithmic number of rounds,  the information about the initial majority opinion is  ``lost'': in Theorem \ref{theorem_victory_of_noise}, we indeed show that  the system w.h.p.  enters into a regime  where the bias keeps bounded  by  $\bigo(\sqrt{n\log n})$. We also performed some computer simulations that confirm our theoretical results, showing  that  the majority opinion  switches continuously during this regime (see Section \ref{sec:exp} for further details).

Interestingly, in Subsection \ref{ssec:oblivious_noise} 
we  show that our noise model is equivalent to a noiseless setting in which \emph{stubborn} agents are present in the system \cite{yildiz_binary_2013} (that is, agents that never change their state): we thus obtain an analogous phase transition in this setting. 
The obtained phase transition thus  separates qualitatively the behavior of the \undecided dynamics from that of the \voter model which is, to the best of our knowledge, the only opinion dynamics (with a finite opinion set) which has been rigorously analyzed in the presence of communication noise or stubborn agents \cite{mobilia_role_2007,yildiz_binary_2013}: this  hints at a more general phenomenon for dynamics with fast convergence to some metastable consensus.


We believe this work contributes to the research endeavour of exploring the interplay between
communication noise and stochastic interaction pattern in MAS. 
As we will discuss in the Related Work, 
despite the fact that these two characteristics are quite common in real-world MAS, their combined effect is still far from being understood and poses novel mathematical challenges. 
Within such framework,   we have identified and rigorously analyzed a phase transition 
behaviour of the famous \uproc in the presence of communication noise (or, of stubborn agents) on the complete graph. 

\subsubsection*{Related Work.} \label{ssec:related}

The \undecided dynamics has been originally studied as an efficient majority-consensus protocol by 
\cite{angluin_simple_2007} and independently by \cite{benezit_interval_2009} for the binary case
(i.e. with two initial input values). 
They proved that w.h.p., within a logarithmic
number of rounds, all agents support the initial majority opinion. 
Some works have then extended the analysis of the \undecided dynamics to non-complete topologies. 
In the Poisson-clock model (formally equivalent to the Population Protocol model), \cite{draief_convergence_2012}
derive an upper bound on the expected convergence time of the dynamics that holds for arbitrary connected graphs, 
which is based on the location of eigenvalues of some contact rate matrices. 
They also instantiate their bound for particular network topologies. 
Successively, \cite{mertzios_determining_2016} provided an analysis when  the initial states of agents 
are assigned independently at random, and they also derive ``bad'' initial configurations on certain graph topologies such that the initial minority opinion eventually becomes the majority one. 
As for the use of \undecided as a generic consensus protocol,
\cite{DBLP:conf/mfcs/ClementiGGNPS18} recently proved that, 
in the synchronous uniform PULL model in which all agents update their state in parallel by observing the 
state of a random other node the convergence time of the \undecided dynamics is  w.h.p. logarithmic.

The motivation to investigate opinion dynamics is twofold: 
they can be regarded as simplistic models of several real-world phenomena or as building blocks for more complex algorithms. 
While on the modelling side the \undecided dynamics is an appealing model of opinion dynamics and it has 
also been considered as a  model of some mechanism occurring in the biology of a cell \cite{cardelli_cell_2012},
it has been employed as a sub-routine of efficient \emph{Majority Consensus} protocols:
\cite{ghaffari_polylogarithmic_2016}, \cite{berenbrink_efficient_2016} and 
\cite{elsasser_brief_2017} consider  Majority Consensus   in the Uniform-PULL, 
and design protocols (based on the \undecided dynamics) which w.h.p. converge in poly-logarithmic time even if the number 
of initial opinions   is very  large.

Notably, communication noise in random-interacting MAS appears to be a neglected area
of investigation \cite{lin_robust_2007,mas_complex_2016,mas_engineering_2018,CHK18}. 
Such shortage of studies contrasts with the
vast literature on communication noise over \emph{stable}  networks\footnote{For stable networks, we here mean a network
where  communication between agents can be modeled as a classical \emph{channel}  
the agents can use to exchange messages at will \cite{cover_elements_2006}.}. 
Among the few investigations of communication noise in MAS, we note the Vicsek model \cite{Vicsek95}, 
where  agents are driven with a constant absolute
velocity, and at each time step assume the average direction of motion of the agents
in their neighborhood: this strategy is then combined with  some random perturbation. The authors show that the average velocity of their model exhibits a phase transition around 
some critical value of the model parameters which include the noise. 

More recently, in \cite{feinerman_breathe_2017},
the authors consider a settings in which agents interact uniformly at
random by exchanged binary messages which are subject to noise.
In detail, the authors provide simple and efficient protocols to solve the classical 
distributed-computing problems of Broadcast (a.k.a Rumor Spreading) and Majority Consensus, 
in the \PUSH model with binary messages, in which each message can be changed upon being received
with probability $1/2-\epsilon$. 
Their results have been generalized to the Majority Consensus Problem for the multi-valued case
in \cite{fraigniaud_noisy_2018}. 
When the noise is constant, \cite{feinerman_breathe_2017} proves that in their noisy version of the \PUSH model, 
the Broadcast Problem can be solved in \emph{logarithmic} time.
Rather surprisingly, \cite{boczkowski_limits_2018} and \cite{clementi_consensus_2018} 
prove that solving the Broadcast Problem in the \PULL model takes linear time, 
while the time to perform Majority Consensus remains logarithmic in both models.

The fact that real-world systems such as social networks fail to converge to consensus has been extensively studied
in various disciplines; formal models developed to investigate the phenomenon include the multiple-state Axelrod model
\cite{axelrod_dissemination_1997} and the bounded-compromise model by Weisbuch et al \cite{weisbuch_meet_2002}; 
the failure to reach consensus in these models is due to the absence of interaction among agent opinions which are 
\emph{``too far apart''}. 
A different perspective is offered by models which investigate the effect of stubborn agents 
(also known as \emph{zealotry} in the literature), in which some \emph{stubborn/zealot} agents
never update their opinion. 

Several works have been devoted to study such effect under linear models of opinion dynamics.
Starting with \cite{mobilia_does_2003} 
which proposed a statistical-physics method in order to study the \voter model 
under the presence of a stubborn agent, followed by \cite{mobilia_role_2007} which considers 
the case of several stubborn agents in the system. 
Later investigations analyzed various aspects of the stationary distribution of the systems,
such as \cite{acemoglu_opinion_2012,AFF19} which investigate the relationship between
the behavior of the opinion dynamics and the structure of the underlying
interaction graph, 
or \cite{yildiz_binary_2013}, in which the authors consider the \voter dynamics 
and study the first and second moments of the number of the average agents' opinion. 

\smallskip
\noindent 
\textbf{Roadmap of the paper.}
In Section \ref{sec:prel}, we give some preliminaries and the equivalence result between   communication noise and stubborn agents. In Section \ref{sec:phasetrans}, we provide the probabilistic analysis of the \uproc when the initial bias is relatively large and its consequences on almost Majority Consensus. This analysis will be then combined with the analysis of the symmetry-breaking phase given in Section \ref{sec:symbreak} to obtain our results on almost Consensus. Some computer simulations validating experimentally our theoretical results are shown in Section \ref{sec:exp}.

\section{Preliminaries}\label{sec:prel}

We study the discrete-time, parallel version of the  \undecided dynamics on the complete graph in the binary setting \cite{DBLP:conf/mfcs/ClementiGGNPS18}. In detail, there is an additional state/opinion, i.e.\ the \emph{undecided state}, besides  the  two possible opinions (say, opinion \mesalpha and opinion \mesbeta) an agent can support, and, in the absence of noise, the updating rule works as follows: at every round $t\ge 0, t\in \mathbb{N}$, each agent $u$ chooses a neighbor $v$ (or, possibly, itself) independently and uniformly at random and, at the next round,   it 
gets  a new opinion  according to the rule given in  Table \ref{table1}.\footnote{Notice that this dynamics requires no labeling of the agents, i.e., the network can be anonymous.}
	\begin{table}[htb]
		\centering
		\begin{tabular}{c|c c c}
			$u\setminus v$ & undecided & \mesalpha & \mesbeta \\ \hline 
			undecided      & undecided & \mesalpha       & \mesbeta       \\ 
			\mesalpha      & \mesalpha       & \mesalpha       & undecided \\ 
			\mesbeta      & \mesbeta       & undecided & \mesbeta       \\ 
		\end{tabular}
		\caption{The update rule of the USD.}
		\label{table1}
	\end{table}
The definition of noise we consider is the following.

\begin{definition}[Definition of noise]\label{def:noise}
Let $p$ be a real number in the interval $\left(0,1/2\right]$. When an agent $u$ chooses a neighbor $v$ and looks at (pulls) its opinion, it sees $v$'s opinion with probability $1-2p$, and,  with probability $p$, it sees one of the two other opinions.
\end{definition}
For instance, if $v$ supports opinion \mesalpha, then $u$ sees \mesalpha with probability $1-2p$, it sees \mesbeta with probability $p$, and it sees the undecided state with probability $p$. In this work, the terms \emph{agent} and \emph{node} are interchangeable.

\subsubsection{Notation, Characterization, and Expected Values.}

Let us name $C$ the set of all possible configurations; notice that, since the graph is complete and its nodes are anonymous, a configuration $\mathbf{x}\in C$ is uniquely determined by giving the number of \emph{Alpha}  nodes, $a(\mathbf{x})$ and the number of \emph{Beta}  nodes, $b(\mathbf{x})$. Accordingly to this notation, we call $q(\mathbf{x})$ the number of undecided nodes in configuration $\mathbf{x}$, and $s(\mathbf{x})=a(\mathbf{x}) - b(\mathbf{x})$ the \emph{bias} of the configuration $\mathbf{x}$. When the configuration is clear from the context, we will omit $\mathbf{x}$ and write just $a,b,q$, and $s$ instead of $a(\mathbf{x}),b(\mathbf{x}),q(\mathbf{x})$, and $s(\mathbf{x})$. The \udyn defines a finite-state non reversible Markov chain $\{\mathbf{X}_t\}_{t\ge 0}$ with state space $C$ and no absorbing states.

The stochastic process yielded by  the \udyn, starting from a given configuration, will be denoted  as   \uproc. Once a configuration $\mathbf{x}$ at a round $t\ge 0$ is fixed, i.e.\ $\mathbf{X}_t = \mathbf{x}_t$, we use the capital letters A, B, Q, and S to refer to random variables $a(\mathbf{X}_{t+1})$, $b(\mathbf{X}_{t+1})$, $q(\mathbf{X}_{t+1})$, and $s(\mathbf{X}_{t+1})$. Notice that we consider the bias as $a(\config)-b(\config)$ instead of $\abs{a(\config)-b(\config)}$ since the expectation of $\abs{A-B}$ is much more difficult to evaluate than that of $A-B$.

The   expected values of the above key random variables can be written as follows:
\begin{align}
    \mean\left[A\bigm| \mathbf{x} \right] = \ & \frac{a}{n}(a+2q)(1-2p)\nonumber
    \\ & + \left[a(a+b)+(a+q)(b+q)\right]\frac{p}{n}, \label{expectation_A}\\
    \mean\left[B\bigm| \mathbf{x} \right] = \ & \frac{b}{n}(b+2q)(1-2p)\nonumber
    \\ & + \left[b(a+b)+(a+q)(b+q)\right]\frac{p}{n}, \label{expectation_B}\\
    \mean\left[S\bigm| \mathbf{x} \right] = \ & s\left(1-p+(1-3p)\frac{q}{n}\right), \label{expectation_S}\\
    \mean\left[Q\bigm| \mathbf{x} \right] = \ & pn+\frac{1-3p}{2n}\left[2q^2+(n-q)^2-s^2\right]. \label{expectation_Q}
\end{align}
The proof of equations \ref{expectation_S} and \ref{expectation_Q} can be found in Appendix \ref{app:preliminaries}.

\subsubsection{Oblivious Noise and Stubborn Agents.}\label{ssec:oblivious_noise}

We can now consider the following more general 
message-\emph{oblivious} model of noise.
\begin{definition}
    We say that the communication is affected by \emph{oblivious noise} if the value of any sent message
    changes according to the following scheme:
    \begin{itemize}
        \item[(i)] with probability $1-\pnoise$ independent from the value 
            of the sent message, the message remains unchanged;
        \item[(ii)] otherwise, the noise acts on the message and it changes its value according to a 
            fixed distribution $\mathbf{p} = p_1,...,p_m$ over the possible message values $1,...,m$. 
    \end{itemize}
\end{definition}
In other words, according to the previous definition of noise (Definition \ref{def:noise}), the probability that the noise changes any message to message $i$ 
is $\pnoise \cdot p_i$. 
It is immediate to verify that the definition of noise adopted in  Theorems \ref{theorem_almost_plurality} 
and \ref{theorem_victory_of_noise} corresponds to the aforementioned model of oblivious noise 
in the special case $m=3$, $\pnoise = p$, and $p_{\mesalpha}=p_{\mesbeta}=p_{undecided}=\frac 13$. 

Recalling that an agent is said to be \emph{stubborn} if it never updates its state \cite{yildiz_binary_2013}, we now observe that the above noise model   is in fact equivalent to consider   the behavior
of the same dynamics \emph{in a noiseless setting with stubborn agents}. 
\begin{lemma}\label{lem:equiv}
    Consider the \udyn
    on the complete graph with opinions (i.e. message values) in $\Sigma=\{1,...,m\}$. 
    The following two processes are equivalent.
    \begin{itemize}
        \item[(a)] the \uproc with $n$ agents in the presence of oblivious noise with parameters $\pnoise$ and $\mathbf{p} = p_1,...,p_m$;
        \item[(b)] the \uproc with $n$ agents and $\additionalNodes =\frac{\pnoise}{1-\pnoise}n$ additional \emph{stubborn agents} present in the system, of which:
        $\additionalNodes \cdot p_{1}$ are stubborn agents supporting opinion 1, $\additionalNodes \cdot p_{2}$ are stubborn agents supporting opinion 2, and so on. 
    \end{itemize}
\end{lemma}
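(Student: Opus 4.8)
The plan is to establish the equivalence by constructing an explicit coupling between the two processes so that, conditioned on matching configurations of the $n$ ``ordinary'' agents at round $t$, the distribution of the configuration at round $t+1$ is identical in both models. The key observation is that the \undecided update rule only depends, for each ordinary agent $u$, on the \emph{value of the opinion it pulls}; hence it suffices to show that the distribution of the pulled opinion seen by a fixed ordinary agent $u$ is the same in process (a) and process (b), and that these pulled values are mutually independent across agents in both models (which they are, since each agent samples its neighbor independently).

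First I would fix an arbitrary ordinary agent $u$ and compute, in process (a), the probability that $u$ perceives opinion $i$ when it pulls. In the oblivious-noise model, $u$ picks a uniformly random neighbor among the $n$ ordinary agents, reads its true opinion $j$, and then with probability $1-\pnoise$ keeps $j$ while with probability $\pnoise$ the value is resampled to $i$ with probability $p_i$. Writing $n_j$ for the number of ordinary agents currently supporting opinion $j$, the probability that $u$ perceives $i$ is therefore
\begin{equation*}
    (1-\pnoise)\frac{n_i}{n} + \pnoise\, p_i.
\end{equation*}
Next I would compute the same perception probability in process (b). Here there is no noise, but the total population has size $\ntot = n + \additionalNodes = n + \frac{\pnoise}{1-\pnoise}n = \frac{n}{1-\pnoise}$, and $u$ pulls a uniformly random agent from this enlarged population, reading its true (unaltered) opinion. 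The number of agents supporting opinion $i$ is $n_i + \additionalNodes\, p_i$, so the probability $u$ perceives $i$ equals
\begin{equation*}
    \frac{n_i + \additionalNodes\, p_i}{\ntot}
    = (1-\pnoise)\frac{n_i + \frac{\pnoise}{1-\pnoise} n\, p_i}{n}
    = (1-\pnoise)\frac{n_i}{n} + \pnoise\, p_i,
\end{equation*}
which matches the expression for process (a). Since the perceived-opinion distribution coincides and the samples are independent across ordinary agents in both models, the one-step transition kernels on the ordinary agents agree; the stubborn agents never change, so the reduced chain on the $n$ ordinary agents evolves identically. A straightforward induction on $t$ then gives equality of the full distributions of the two processes, establishing the equivalence.

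I do not expect a genuine obstacle here, as the argument is essentially a bookkeeping identity; the one point requiring care is the \emph{integrality} of the stubborn counts. The statement implicitly assumes that $\additionalNodes = \frac{\pnoise}{1-\pnoise}n$ and each $\additionalNodes\, p_i$ are integers, so I would either note this as a divisibility hypothesis on $n$, $\pnoise$, and the $p_i$, or remark that one passes to a common multiple of $n$ (rescaling the whole system) so that all counts are integral without affecting the dynamics. The only other subtlety is to phrase ``equivalence'' precisely, namely as equality in distribution of the sequences $\{\mathbf{X}_t\}_{t\ge 0}$ describing the ordinary agents, which the coupling argument above delivers directly.
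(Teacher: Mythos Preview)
Your proposal is correct and follows essentially the same approach as the paper: both arguments reduce the equivalence to checking that the probability an ordinary agent perceives opinion $i$ in one pull is $(1-\pnoise)\frac{n_i}{n}+\pnoise\,p_i$ in each model, via the identical algebraic computation, and then conclude that the one-step transition kernels coincide. Your write-up is in fact slightly more careful than the paper's (you explicitly note independence across agents and flag the integrality assumption on $\additionalNodes\,p_i$, which the paper leaves implicit).
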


\begin{proof}[Proof of Lemma \ref{lem:equiv}]
    The equivalence between the two processes is showed through a coupling. Consider the complete graph of $n$ nodes, $K_n$, over which the former process runs. Consider also the complete graph $K_{n+\additionalNodes}$, which contains a sub-graph isomorphic to $K_n$ we denote as $\tilde{K}_n$. Let $H = K_{n+\additionalNodes} \setminus \tilde{K}_n$. The nodes of $H$ are such that $\additionalNodes \cdot p_{1}$ are stubborn agents supporting opinion 1,  $\additionalNodes \cdot p_{\mesbeta}$ are stubborn agents supporting opinion 2, and so on. Observe that $\sum_{i=1}^m p_i = 1$, so this partition of $K_{n+\additionalNodes}$ is well defined. 
    
    The \udyn behaves in exactly the same way over $K_{n+\additionalNodes}$, with the exception that the stubborn agents never change their opinion and that there is no noise perturbing communications between agents. Let $C$ and $\tilde{C}$ be the set of all possible configurations of, respectively, $K_n$ and $K_{n+\additionalNodes}$. Let $\phi: K_n \to \tilde{K}_n$ be any bijective function. The coupling is a bijection $f: C \to \tilde{C}$ such that, for any node $v \in K_n$ in the configuration $\config \in C$, the corresponding node $\phi(v) \in \tilde{K}_n$ in the configuration $f(c) \in \tilde{C}$ supports $v$'s opinion. Consider the two resulting Markov processes $\{\mathbf{X}_t\}_{t\ge 0}$ over $K_n$ and $\{\mathbf{X}_t'\}_{t\ge 0}$ over $K_{n+\additionalNodes}$, denoting the opinion configuration at time $t$ in $K_n$ and in $K_{n+\additionalNodes}$, respectively. It is easy to see that the two transition matrices are exactly the same, namely the probability to go from configuration $c
    \in C$ to configuration $c' \in C$ for $\mathbf{X}_t$ is the same as that to go from configuration $f(c) \in \tilde{C}$ to configuration $f(c') \in \tilde{C}$ for $\mathbf{X}'_t$.  
    
    Indeed, in the former model (a), the probability an agent pulls opinion $j\in\{1,\dots, m\}$ at any given round is
    \[
        (1-\pnoise)\frac{c_j}{n} + \pnoise \cdot p_j\ ,
    \]
    where $c_j$ is the size of the community of agents supporting opinion $j$; in the model defined in (b), the probability a non-stubborn agent pulls opinion $j$ at any given round is 
    \[
        \frac{c_j+\additionalNodes\cdot p_j}{n+\additionalNodes} = \frac{c_j+\frac{\pnoise}{1-\pnoise}n\cdot p_j}{n+\frac{\pnoise}{1-\pnoise}n} = (1-\pnoise)\cdot\frac{c_j}{n}+\pnoise\cdot p_{j}\ .
    \]
\end{proof}
Basically, this equivalence implies that any result we state for the process defined in (a) has an analogous statement for the process defined in (b).

\subsubsection{Probabilistic Tools.}

Our analysis makes use of the following probabilistic result which states that the intersection of some polynomial number of events holding w.h.p.\ is still an event wich holds w.h.p.
\begin{lemma}\label{lemma:whp_intersection}
    Consider any family of events $\{\xi_i\}_{i\in I}$ with $\abs{I} \le n^\lambda$, for some $\lambda > 0$. Suppose that each event $\xi_i$ holds with probability at least $1-n^\eta$, with $\eta > \lambda$. Then, the intersection $\cap_{i\in I}\xi_i$ holds w.h.p.
\end{lemma}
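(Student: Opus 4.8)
The plan is to prove the statement by a single application of the union bound to the complementary events, after reading the failure probability correctly. As written, the hypothesis ``$\xi_i$ holds with probability at least $1-n^\eta$'' cannot be literally intended (a probability cannot exceed $1$); the natural reading, which I would adopt, is that each $\xi_i$ \emph{fails} with probability at most $n^{-\eta}$, i.e.\ $\pr{\overline{\xi_i}} \le n^{-\eta}$ with $\eta > \lambda$. Under this reading the proof is elementary and requires no structural insight beyond De Morgan's law.

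First I would pass to complements: by De Morgan, $\overline{\bigcap_{i\in I}\xi_i} = \bigcup_{i\in I}\overline{\xi_i}$. Then a union bound gives
\[
    \pr{\overline{\bigcap_{i\in I}\xi_i}} = \pr{\bigcup_{i\in I}\overline{\xi_i}} \le \sum_{i\in I}\pr{\overline{\xi_i}} \le \abs{I}\cdot n^{-\eta} \le n^{\lambda}\cdot n^{-\eta} = n^{-(\eta-\lambda)},
\]
where the third inequality uses the bound on each individual failure probability together with $\abs{I}\le n^{\lambda}$.

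To conclude I would set $\gamma = \eta - \lambda$, which is a strictly positive constant precisely because of the hypothesis $\eta > \lambda$; this is the only place that hypothesis is used, and it is exactly what is needed to obtain a polynomially small total failure probability. We then have $\pr{\bigcap_{i\in I}\xi_i} \ge 1 - n^{-\gamma}$ with $\gamma > 0$, which matches the definition of \emph{with high probability} given in the introduction. I do not expect a genuine obstacle here: the argument is a textbook union bound, and the only subtlety worth flagging is the bookkeeping of the exponent $\gamma = \eta-\lambda$ and the correction of the sign in the stated per-event probability. The inequality $\abs{I}\le n^\lambda$ for a fixed $\lambda>0$ is exactly the ``polynomially many events'' condition that keeps the sum from overwhelming the individual $n^{-\eta}$ bounds.
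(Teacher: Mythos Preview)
Your proof is correct and essentially identical to the paper's: both pass to complements and apply the union bound, obtaining a total failure probability of at most $n^{\lambda-\eta}$. The only cosmetic difference is that the paper sets the final exponent to $\delta=(\eta-\lambda)/2$ rather than your $\gamma=\eta-\lambda$; both choices work, and your observation about the sign typo in the hypothesis is on point.
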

\begin{proof}[Proof of Lemma \ref{lemma:whp_intersection}]
    By the union bound, $\Pr(\cap_{i\in I} \xi_i)=1-\Pr(\cup_{i\in I} \bar\xi_i) \geq 1-\sum_{i\in I} n^{-\eta} = 1-n^{\lambda-\eta} \geq 1- n^{-\delta}$, where $\bar\xi_i$ denotes the negation of $\xi_i$ and $\delta = \frac{\eta - \lambda}{2}$.
\end{proof}

\section{Process analysis for biased initial configurations}\label{sec:phasetrans}

In this section, we analyze the \uproc when the system starts from 
biased configurations. The following two theorems show the phase transition 
exhibited by this process. We remind that our notion of noise is that of Definition \ref{def:noise}.

\begin{theorem}[Almost Majority Consensus]\label{theorem_almost_plurality}
Let $\mathbf{x}$ be any initial configuration having bias $s(\mathbf{x}) \ge \gamma\sqrt{n\log n}$ for some constant $\gamma>0$, and let $\epsilon\in\left(0,1/6\right)$ be some absolute  constant. If $p=1/6 - \epsilon$ is the noise probability, then the \uproc reaches a configuration $\mathbf{y}$ having bias $s(\mathbf{y}) \in \Delta = \left[\frac{2\sqrt{\epsilon}}{1+6\epsilon}n, \left(1-2\left(\frac{1-6\epsilon}{12}\right)^3\right)n\right]$  within $\bigo(\log n)$ rounds, w.h.p. 
Moreover, starting from $\mathbf{y}$, the \uproc enters  a (metastable) phase of length $\Omega\left(n^\lambda\right)$ rounds (for some  constant $\lambda>0$)\footnote{The constant $\lambda$ depends only on the values of $\epsilon$ and $\gamma$. The same holds for the constant $\lambda'$ in Theorem \ref{theorem_victory_of_noise}.} where  the bias  
remains in  the range $\Delta$, w.h.p.
\end{theorem}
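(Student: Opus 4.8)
The plan is to analyse the two–dimensional mean-field dynamics of the pair $(s,q)$ through the exact conditional expectations \eqref{expectation_S}--\eqref{expectation_Q}, and then to transfer the mean-field conclusions to the true random process via one-step concentration and a union bound over a polynomial horizon. The crucial observation from \eqref{expectation_S} is that the multiplicative drift of the bias equals $1-p+(1-3p)q/n$, so in expectation the bias grows exactly when the undecided fraction exceeds $\rho^\star := p/(1-3p)$ and is stationary when $q/n=\rho^\star$. Substituting $q=\rho^\star n$ into the fixed-point equation $\expect{Q\mid\config}=q$ obtained from \eqref{expectation_Q} yields a unique nonzero stationary bias $\sigma^\star n$ with $(\sigma^\star)^2=(1-3\rho^\star)(1+\rho^\star)$; a direct computation gives $\sigma^\star=\frac{4\sqrt{\epsilon(1+3\epsilon)}}{1+6\epsilon}$, which lies strictly inside $\Delta$ with a constant-size margin from both endpoints (the lower one being $\tfrac{2\sqrt\epsilon}{1+6\epsilon}n<\sigma^\star n$). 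The hypothesis $p<1/6$ is precisely what makes $\rho^\star$ strictly smaller than the undecided equilibrium $\rho_0$ reached from a balanced configuration (at $p=1/6$ one has $\rho_0=\rho^\star=1/3$), so the drift coefficient $1-p+(1-3p)\rho_0$ exceeds $1$ and bias amplification can start.

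First I would establish a one-step concentration lemma: conditioned on $\config$, the counts $A,B,Q$ are sums of $n$ independent per-agent contributions, so by Chernoff/Hoeffding bounds $\abs{S-\expect{S\mid\config}}$ and $\abs{Q-\expect{Q\mid\config}}$ are $\bigo{(\sqrt{n\log n})}$ w.h.p. Next I would show that $q$ is the ``fast'' coordinate: from \eqref{expectation_Q} the map $q\mapsto\expect{Q\mid\config}$ (for fixed $s$) is a contraction toward its fixed point $q_{eq}(s)$, with derivative $(1-3p)(3q/n-1)$ of modulus bounded away from $1$ on the relevant range. Hence within $\bigo{(\log n)}$ rounds $q$ enters, and then stays in, an $\bigo{(\sqrt{n\log n})}$-band around the slow manifold $q_{eq}(s)$; since $s$ changes by only a constant factor per round, this ``slaving'' of $q$ to the current $s$ is self-consistent.

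Conditioning on $q$ lying in its band, \eqref{expectation_S} gives $S=s\,m(s/n)\pm\bigo{(\sqrt{n\log n})}$ w.h.p., where $m(\sigma)=1-p+(1-3p)q_{eq}(\sigma)/n$ is decreasing with $m(\sigma^\star)=1$. For $\sigma$ below the lower endpoint of $\Delta$ one has $m(\sigma)\ge 1+\delta$ for a constant $\delta=\delta(\epsilon)>0$, so as long as $s\ge\frac{2c}{\delta}\sqrt{n\log n}$ the bias obeys $S\ge(1+\delta/2)s$; iterating this geometric growth (the assumption $s(\config)\ge\gamma\sqrt{n\log n}$ places us above the threshold where growth dominates the additive noise) drives the bias into $\Delta$ within $\bigo{(\log n)}$ rounds w.h.p. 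For the metastable phase I would prove confinement of the mean map: using the explicit endpoints of $\Delta$, one checks that the two-dimensional map $(s,q)\mapsto(\expect{S\mid\config},\expect{Q\mid\config})$ sends the product region $\Delta\times\{q\text{-band}\}$ strictly into its own interior, with a margin of order $n$ on each side (the restoring drift toward $\sigma^\star$ dominates, and $\sigma^\star$ is attracting because $\abs{g'(\sigma^\star)}<1$ for $g(\sigma)=\sigma m(\sigma)$). Adding the $\bigo{(\sqrt{n\log n})}=o(n)$ per-round fluctuation and invoking Lemma \ref{lemma:whp_intersection} over the horizon of $n^\lambda$ rounds, all the one-step concentration events hold simultaneously w.h.p., and on this good event a straightforward induction keeps $(s,q)$ inside the region for the whole horizon.

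The main obstacle is the rigorous two-timescale confinement in the metastable phase: because $q$ is only quasi-stationary and lags the slowly drifting $s$, confinement cannot be reduced to a clean one-dimensional drift argument but requires controlling the genuinely two-dimensional mean map and verifying that its contraction margin on $\Delta$ strictly exceeds the $\bigo{(\sqrt{n\log n})}$ stochastic fluctuations, uniformly over the polynomial horizon. This is exactly where the specific, asymmetric endpoints of $\Delta$ and the attractivity estimate $\abs{g'(\sigma^\star)}<1$ are needed. By comparison the growth phase is routine once the $q$-slaving and one-step concentration are in place, the only delicate point being the first few rounds, where the initial bias $\gamma\sqrt{n\log n}$ is merely a constant factor above the noise floor.
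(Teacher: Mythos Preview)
Your fixed-point/slow-fast framework is a genuinely different route from the paper's. The paper never attempts to slave $q$ to a narrow band; it works instead with coarse one-sided thresholds: (i) a conditional lower bound $Q\ge\frac{1-4\epsilon}{3(1+6\epsilon)}n$ whenever $s\le\beta n=\frac{2\sqrt{3\epsilon}}{1+6\epsilon}n$ (Lemma~\ref{lemma:thm1:enoughundecided}); (ii) geometric growth $S\ge(1+\epsilon/6)s$ under that lower bound (Lemma~\ref{lemma:thm1:biasincrease}); (iii) a uniform ``bias cannot drop too fast'' bound $S\ge(5/6+\epsilon/2)s$ valid for \emph{all} $q$ (Lemma~\ref{lemma:thm1:biasdecrease}); and (iv) a universal bound $Q\ge\frac{1-6\epsilon}{12}n$ (Lemma~\ref{lemma:thm1:lowboundundecided}) feeding a lower bound on $B$ (Lemma~\ref{lemma:thm1:lowboundminopinion}). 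The metastable lower bound on $s$ is an oscillation argument: once $s$ dips below $\beta n$, (i) restores enough undecided mass in one round and (ii) pushes $s$ back up, while (iii) guarantees that two consecutive rounds of decay from $\beta n$ still leave $s\ge\frac{2\sqrt\epsilon}{1+6\epsilon}n$. The upper endpoint of $\Delta$ is obtained not from any drift of $S$ but from the identity $s=n-q-2b$ together with (iv) and the $B$-bound.

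Your slaving step has a real gap. The claim that $q$ stays in an $O(\sqrt{n\log n})$-band around $q_{eq}(s)$ cannot hold once $s$ is large: when $s$ changes by a constant \emph{factor} per round the absolute increment is $\Theta(s_t)$, and implicit differentiation of the fixed-point equation gives $q_{eq}'(s)=\Theta(s/n)$, so the target $q_{eq}(s)$ jumps by $\Theta(s_t^2/n)$ each round---which is $\Theta(n)$ as soon as $s_t=\Theta(n)$. A constant-factor contraction then keeps $q$ only within $\Theta(n)$ of the moving target, not $O(\sqrt{n\log n})$, and your reduction to the one-dimensional drift $m(\sigma)=1-p+(1-3p)q_{eq}(\sigma)/n$ is unjustified precisely where you need it (near the boundaries of $\Delta$, where $s$ is $\Theta(n)$ away from $\sigma^\star$ and hence still moving by $\Theta(n)$ per round). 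The same issue makes ``$\Delta\times\{q\text{-band}\}$'' ill-defined as a product region, since the band's center varies over $\Delta$; to show invariance you would have to specify an explicit two-dimensional region with fixed boundaries and check all four sides, which is effectively what the paper's thresholds (i)--(iv) do. Two smaller points: the $q$-contraction itself fails near $q=n$, where the derivative $(1-3p)(3q/n-1)=1+6\epsilon>1$; and you have no analogue of (iii), so nothing prevents the initial bias from being wiped out during the rounds before $q$ has settled---this is exactly the ``delicate first few rounds'' you flag but do not resolve.
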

Observe that if the theorem is true, then it also holds analogously for the symmetrical case in which $s(\mathbf{x})\le -\gamma\sqrt{n\log n}$.

\begin{theorem}[Victory of Noise]\label{theorem_victory_of_noise}
Let $p=1/6 + \epsilon$ be  the  noise probability for some absolute constant  $\epsilon\in\left(0,1/3\right]$. Assume the system starts from any configuration $\mathbf{x}$ with $\lvert s(\mathbf{x})\rvert \ge \gamma\sqrt{n\log n}$, for some constant $\gamma > 0$. Then, the \uproc reaches a configuration $\mathbf{y}$ having bias $\lvert s(\mathbf{y})\rvert = \bigo(\sqrt{n\log n})$ in $\bigo(\log n)$ rounds, w.h.p. Furthermore, starting from such a configuration, the \uproc enters a (metastable) phase of length $\Omega\left(n^{\lambda'}\right)$ rounds (for some constant $\lambda'>0$) where the absolute value of the bias  keeps bounded by $\bigo(\sqrt{n\log n})$, w.h.p.
\end{theorem}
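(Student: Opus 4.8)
The entire argument is driven by the bias recursion \eqref{expectation_S}, which I would rewrite as $\mean[S\mid\mathbf{x}] = \rho(\mathbf{x})\,s$ with \emph{contraction factor}
\[
\rho(\mathbf{x}) = 1 - p + (1-3p)\tfrac{q}{n}.
\]
The plan is to show that, once the number $q$ of undecided agents has settled near its equilibrium value, $\rho$ is bounded away from $1$ by a positive constant $c=c(\epsilon)$, so that the signed bias is a contracting process, while the per-round fluctuations around the conditional expectation are only $\bigo(\sqrt{n\log n})$; together these pin $|s|$ at that scale. The threshold $p=1/6$ enters exactly here: a direct computation with \eqref{expectation_Q} shows that the symmetric ($s=0$) equilibrium fraction of undecided agents tends to $q/n=1/3$, at which $\rho = 1-p+(1-3p)/3$ equals $1$ precisely when $p=1/6$, whereas for $p=1/6+\epsilon$ one gets $\rho = 1 - 2\epsilon + \bigo(\epsilon^2) < 1$.

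First I would analyze the evolution of $q$ in isolation. Equation \eqref{expectation_Q} exhibits $\mean[Q\mid\mathbf{x}]/n$ as a quadratic map $G$ in $q/n$, with $s/n$ entering only through a $-(s/n)^2$ term. Two facts I would extract from $G$: (i) after a single step $q \le (1-2p)n + \bigo(\sqrt{n\log n})$ w.h.p., so $q$ immediately leaves the dangerous regime $q\approx n$ in which $\rho>1$; and (ii) $G$ has a super-attracting fixed point $\alpha^*(s/n)\le 1/3$ with $|G'|\le 1/2$ on the relevant interval, so that $q$ reaches within $\bigo(\sqrt{n\log n})$ of $\alpha^*(s/n)\,n$ within $\bigo(\log n)$ rounds. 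Crucially, for $p<1/3$ the factor $\rho$ is increasing in $q$ and $\alpha^*(s/n)$ is maximized (equal to $1/3$) exactly at $s=0$, so the worst case for $\rho$ along the whole trajectory is the symmetric configuration, where $\rho\le 1-2\epsilon+o(1)$. (For $\epsilon\ge 1/6$ one has $p\ge 1/3$ and $1-3p\le 0$, hence $\rho\le 1-p<1$ uniformly in $q$, so this step is unnecessary and the argument simplifies.) Per-round concentration is provided by the observation that, conditioned on $\mathbf{x}$, the variables $A,B,Q,S$ are sums of independent bounded-range per-agent contributions, so Chernoff/Hoeffding gives $|S-\mean[S\mid\mathbf{x}]|,\ |Q-\mean[Q\mid\mathbf{x}]| = \bigo(\sqrt{n\log n})$ w.h.p.

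I would then run a two-phase argument. In Phase~A (the first $\bigo(\log n)$ rounds) I track only $q$, using (i)--(ii) and the per-round concentration to show that $q$ enters and remains in the band $|q-\alpha^*(s/n)n| = \bigo(\sqrt{n\log n})$; during this phase I make no claim on $s$ beyond the trivial $|s|\le n$. In Phase~B, with $q$ in that band, the bias obeys $|s_{t+1}| \le (1-c/2)|s_t| + \bigo(\sqrt{n\log n})$ w.h.p. (combining $|\rho|\le 1-c$ with concentration), so after $\bigo(c^{-1}\log n)=\bigo(\log n)$ further rounds $|s|$ is driven down to $\bigo(\sqrt{n\log n})$, yielding the target configuration $\mathbf{y}$. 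All the per-round high-probability events across these $\bigo(\log n)$ rounds are combined through Lemma~\ref{lemma:whp_intersection}.

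For metastability I would exhibit an invariant region $\mathcal{R} = \{\,|s|\le K\sqrt{n\log n},\ |q-\alpha^*(s/n)n|\le K\sqrt{n\log n}\,\}$ and show it is w.h.p. preserved for one step: the contracting drift ($\mean[S\mid\mathbf{x}]=\rho s$ with $|\rho|\le 1-c$, and the analogous attracting drift for $q$) pulls the process back toward the interior whenever it approaches the boundary, whereas the step fluctuations are only $\bigo(\sqrt{n})$, so a single-step escape has probability $n^{-\Omega(K^2)}$. Choosing $K$ large enough as a function of the desired $\lambda'$ and applying Lemma~\ref{lemma:whp_intersection} over the $n^{\lambda'}$ rounds of the window shows that $\mathcal{R}$ is never left, which is precisely the metastable bound $|s|=\bigo(\sqrt{n\log n})$. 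The main obstacle I anticipate is the coupled control of the pair $(q,s)$: the bias $s$ contracts only once $q$ has reached its slow manifold, yet the manifold $\alpha^*(s/n)$ itself moves as $s$ changes, so I must show that $q$ tracks $\alpha^*(s/n)n$ with $\bigo(\sqrt{n\log n})$ error throughout, and that $\rho$ stays uniformly below $1$ even during the transient in which $q$ has not yet settled (where $\rho$ may momentarily exceed $1$). Making $\mathcal{R}$ simultaneously invariant in both coordinates, with w.h.p. bounds strong enough to survive a union bound over $n^{\lambda'}$ rounds, is the delicate part.
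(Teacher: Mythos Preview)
Your identification of the contraction factor $\rho(\mathbf{x}) = 1-p+(1-3p)q/n$ and of the threshold via $\rho(n/3)=1-2\epsilon$ is exactly the mechanism the paper exploits, and your metastability argument via an invariant region is close to the paper's (Lemma~\ref{lemma2_9} together with re-contraction). The gap is in the Phase~A/Phase~B decoupling, precisely at the obstacle you yourself flag. The map $q\mapsto\mean[Q\mid\mathbf{x}]$ is \emph{not monotone}: rewriting \eqref{expectation_Q} gives $\mean[Q\mid\mathbf{x}]=n/3+\tfrac{3(1-3p)}{2n}(q-n/3)^2-\tfrac{1-3p}{2n}s^2$, a convex parabola in $q$ with minimum at $n/3$. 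Hence an upper bound on $q_t$ alone does not propagate to an upper bound on $q_{t+1}$; you also need a lower bound on $q_t$, and that lower bound is governed by $s_t$ through the $-s^2$ term. With only the trivial $|s|\le n$ you get $q\gtrsim n/12$, whence $(q/n-1/3)^2$ can be as large as $1/16$ and so $q_{t+1}/n\le 1/3+3/64+o(1)$, which for small $\epsilon$ sits \emph{above} the threshold $q/n<\tfrac{1+6\epsilon}{3(1-6\epsilon)}\approx 1/3+4\epsilon$ required for $\rho<1$. So Phase~A, ``tracking only $q$'', cannot by itself produce the uniform bound $\rho\le 1-c$; the transient in which $\rho$ may exceed $1$ is not a side issue but the heart of the proof.

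The paper handles this by a genuinely coupled argument. It first disposes of $\epsilon>1/12$, where $\sup_q\rho=4/3-4\epsilon<1$ and no control of $q$ is needed (your analogous shortcut only covers $\epsilon\ge 1/6$). For $0<\epsilon\le 1/12$ it partitions the $s$-range into $O(\log(1/\epsilon))$ geometric intervals $S_{-1},\dots,S_k$ together with a companion decreasing sequence of $q$-lower-bounds $\bar q_i$, and Lemmas~\ref{lemma2_3}--\ref{lemma2_7} establish that while $s\in S_i$ and $\bar q_{i+1}\le q\le \tfrac{n(1+3\epsilon)}{3(1-6\epsilon)}$ the bias contracts and $q$ moves into $[\bar q_i,\tfrac{n(1+3\epsilon)}{3(1-6\epsilon)}]$, that $s$ cannot skip more than one interval per step (Lemma~\ref{lemma2_6}), and that $q$ is pushed back into range whenever it exits on either side (Lemmas~\ref{lemma2_4} and~\ref{lemma2_7}). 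This cascade is exactly the ``$q$ tracks $\alpha^*(s/n)$ while $s$ contracts'' statement you are after, but proved jointly rather than in two decoupled phases; replacing your Phase~A by this partition argument (or by an equivalent Lyapunov function in $(q,s)$) would close the gap, after which the rest of your outline goes through.
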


The next subsections are devoted  to the proof of Theorem \ref{theorem_almost_plurality} (Subsection \ref{ssec:almostplurality}) and Theorem \ref{theorem_victory_of_noise} (Subsection \ref{ssec:victorynoise}). We here just remark that the adopted  arguments in the two proofs are similar. 

Let us now consider the equivalent model with stubborn agents according to Lemma \ref{lem:equiv}, in which $p_{noise} = 3p$ and $p_{\emph{Alpha}}=p_{\emph{Beta}}=p_{undecided}=\frac{1}{3}$. We thus have $n_{stub}=\frac{3p}{1-3p}n$ additional stubborn nodes, of which $n_{stub}\cdot \frac{1}{3}=\frac{p}{1-3p}n$ support opinion \emph{Alpha}, $n_{stub}\cdot \frac{1}{3}=\frac{p}{1-3p}n$ opinion \emph{Beta}, and $n_{stub}\cdot \frac{1}{3}=\frac{p}{1-3p}n$ are undecided. On this new graph of $ n+n_{stub}$ nodes, let the \undecided dynamics run and call the resulting process the \stubproc. The next result is an immediate corollary of the two previous theorems.

\begin{corollary}\label{corollary_theorems}
    Let $\frac 12 >p>0$ be a constant, and let the \stubproc start from any configuration having bias $s \ge \gamma\sqrt{n\log n}$ for some constant $\gamma >0$. 
        If $p < \frac 16$, then, in $\bigo(\log n)$ rounds, the \stubproc enters   a metastable phase of almost consensus of length $\Omega\left(n^\lambda\right)$ for some constant $\lambda > 0$,  in which the bias is $\Theta(n)$, w.h.p.
        If $p \in (\frac 16, \frac 12]$, then, in $\bigo(\log n)$ rounds, the \stubproc enters   a metastable phase of length $\Omega\left(n^{\lambda'}\right)$ for some constant $\lambda' > 0$ where the absolute value of the bias keeps bounded by $\bigo(\sqrt{n\log n})$, w.h.p.
\end{corollary}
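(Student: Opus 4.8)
The plan is to derive Corollary \ref{corollary_theorems} directly from the equivalence established in Lemma \ref{lem:equiv} together with the two phase-transition theorems, Theorem \ref{theorem_almost_plurality} and Theorem \ref{theorem_victory_of_noise}. The key observation is that the \stubproc defined just above the corollary is precisely the process (b) of Lemma \ref{lem:equiv} specialized to $m=3$, $p_{noise}=3p$, and the uniform distribution $p_{\mesalpha}=p_{\mesbeta}=p_{undecided}=\frac13$. By the lemma, this process is equivalent (via the exhibited coupling) to the \uproc with $n$ agents subject to oblivious noise with those same parameters, which in turn is exactly the noise model of Definition \ref{def:noise} with noise probability $p$ (since $p_{noise}\cdot p_i = 3p\cdot\frac13 = p$). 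Hence every probabilistic statement proven for the noisy \uproc transfers verbatim to the \stubproc.

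First I would make the parameter identification explicit: setting $p_{noise}=3p$ requires $3p<1$, i.e.\ $p<\frac13$, for $n_{stub}=\frac{3p}{1-3p}n$ to be a well-defined nonnegative quantity; for $p\in[\frac13,\frac12]$ the stubborn-agent reformulation degenerates, but one can either note that the corollary's second regime is still covered by applying Theorem \ref{theorem_victory_of_noise} directly to the noisy process and invoking the equivalence only where $n_{stub}$ is finite, or restrict the clean coupling to $p<\frac13$ and handle the remaining sliver through the noise formulation. I would then split into the two cases exactly as the theorems do. For $p<\frac16$, write $p=\frac16-\epsilon$ with $\epsilon\in(0,\frac16)$; Theorem \ref{theorem_almost_plurality} gives that, starting from bias $s\ge\gamma\sqrt{n\log n}$, the noisy \uproc reaches within $\bigo(\log n)$ rounds a configuration with bias in the interval $\Delta=\left[\frac{2\sqrt{\epsilon}}{1+6\epsilon}n,\left(1-2\left(\frac{1-6\epsilon}{12}\right)^3\right)n\right]$ and remains there for $\Omega(n^\lambda)$ rounds, w.h.p. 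Since both endpoints of $\Delta$ are constant multiples of $n$ for fixed $\epsilon$, this is the claimed $\Theta(n)$ bias. Transporting this through the coupling of Lemma \ref{lem:equiv} yields the first regime for the \stubproc.

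For $p\in(\frac16,\frac13)$ (the portion of the second regime where the stubborn reformulation is literally valid), write $p=\frac16+\epsilon$ with $\epsilon\in(0,\frac16)$ and apply Theorem \ref{theorem_victory_of_noise}: within $\bigo(\log n)$ rounds the noisy process reaches bias $\bigo(\sqrt{n\log n})$ and stays bounded by that quantity for $\Omega(n^{\lambda'})$ rounds, w.h.p.; the equivalence then delivers the same for the \stubproc. For the remaining interval $p\in[\frac13,\frac12]$ I would apply Theorem \ref{theorem_victory_of_noise} (whose hypothesis is $\epsilon\in(0,\frac13]$, i.e.\ $p\in(\frac16,\frac12]$, already covering this range) to the noisy \uproc and invoke the equivalence only as a statement about the noisy model itself. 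I do not anticipate a genuine mathematical obstacle here, as the corollary is a direct translation; the only point requiring care is the boundary behavior at $p=\frac13$ where $n_{stub}\to\infty$, so the main subtlety is ensuring the equivalence of Lemma \ref{lem:equiv} is applied only within its domain of validity and that the statement for $p\in[\frac13,\frac12]$ is phrased as a consequence of the underlying noisy theorem rather than of a finite stubborn-agent system.
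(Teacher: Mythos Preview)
Your proposal is correct and matches the paper's approach exactly: the paper simply states that this is ``an immediate corollary of the two previous theorems'' (Theorems \ref{theorem_almost_plurality} and \ref{theorem_victory_of_noise}) via the equivalence of Lemma \ref{lem:equiv}, without giving any further argument. Your observation that the stubborn-agent reformulation with $n_{stub}=\frac{3p}{1-3p}n$ is only literally meaningful for $p<\tfrac13$ is a genuine subtlety that the paper glosses over; your proposed handling of it is reasonable, though the paper does not address this point at all.
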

Trivially, the corollary holds also in the symmetrical case in which $s\le -\gamma\sqrt{n\log n}$.

\subsection{Proof of Theorem \ref{theorem_almost_plurality}}\label{ssec:almostplurality}

Informally, while the analysis is technically involved, it can be appreciated from it that the phase transition phenomenon at hand  relies ultimately on the exponential drift of the \undecided towards the majority opinion in the absence of noise: 
as long as the noise is kept within a certain threshold, the dynamics manages to quickly amplify and sustain the 
bias towards the majority opinion; as soon as the noise level reaches the threshold, the expected increase of the majority bias 
abruptly decreases below the standard deviation of the process and the ability of the dynamics to preserves a \emph{signal}
towards the initial majority rapidly vanishes.

We now proceed with the formal analysis.
Wlog,  in the sequel, for a given starting configuration $\mathbf{x}$, we will assume $a(\mathbf{x})\ge b(\mathbf{x})$. Indeed, as it will be clear from the results, if $s(\mathbf{x}) \ge \gamma\sqrt{n\log n}$, then the plurality opinion does not change for $\Omega(n^\lambda)$ rounds, w.h.p., and the argument for the case $b(\mathbf{x})>a(\mathbf{x})$ is symmetric.
First notice that, for any fixed  $\epsilon\in(0,1/6)$ and $p=1/6-\epsilon$,      Equations \eqref{expectation_S} and \eqref{expectation_Q} become
\begin{align}
    \mean\left[S\bigm| \mathbf{x}\right] = \ & s\left(\frac{5}{6}+\epsilon+\frac{1}{2}(1+6\epsilon)\frac{q}{n}\right), \label{expectation_S_oknoise}\\
    \mean\left[Q\bigm| \mathbf{x}\right] = \ & \frac{3}{4}\left(\frac{1+6\epsilon}{n}\right)q^2 - \frac{1+6\epsilon}{2}q + \frac{5+6\epsilon}{12}n - \frac{1+6\epsilon}{n}\left(\frac{s}{2}\right)^2. \label{expectation_Q_oknoise}
\end{align} 
The key-point to prove the first claim of the theorem is to show that, if the bias of the configuration is less than $\beta n$ (for some suitable   constant $\beta$), and the number of undecided nodes is some   constant factor of $n$, then the bias at the next round  increases by a constant factor, w.h.p. At the same time, as long as the bias is below $\beta n$, the number of undecided nodes in the next round is sufficiently large, w.h.p.

\begin{lemma}\label{lemma:thm1:biasincrease}
    Let $\mathbf{x}$ be a configuration such that $q\ge \frac{1-4\epsilon}{3(1+6\epsilon)} n$ and $s\ge \gamma\sqrt{n\log n}$ for some constant $\gamma>0$. Then, in the next round, $S\ge s\left(1+\frac{\epsilon}{6}\right)$, w.h.p.
\end{lemma}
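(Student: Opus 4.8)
The plan is to establish the claimed one-step drift in two stages: first bound the conditional expectation $\mean[S\mid\mathbf{x}]$ from below using the hypothesis on $q$, and then upgrade this expectation estimate to a with-high-probability statement via a concentration inequality.

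For the first stage I substitute the lower bound $q \ge \frac{1-4\epsilon}{3(1+6\epsilon)}n$ into Equation \eqref{expectation_S_oknoise}. Since $s \ge \gamma\sqrt{n\log n} > 0$ and the coefficient $\tfrac{1}{2}(1+6\epsilon)$ of $q/n$ is positive, the bracket is nondecreasing in $q$, so replacing $q/n$ by its lower bound gives
\begin{align*}
\mean[S\mid\mathbf{x}] \ge s\left(\frac{5}{6} + \epsilon + \frac{1}{2}(1+6\epsilon)\cdot\frac{1-4\epsilon}{3(1+6\epsilon)}\right) = s\left(1 + \frac{\epsilon}{3}\right).
\end{align*}
Here $\epsilon < 1/6 < 1/4$ guarantees $1-4\epsilon > 0$, so the hypothesis on $q$ is meaningful. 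Setting $t := s\,\epsilon/6$, the bound above yields $\mean[S\mid\mathbf{x}] - t \ge s(1+\epsilon/6)$, i.e.\ the downward deviation from the mean that I can afford before falling below the target value is a constant fraction of $s$.

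For the second stage I exploit the fact that, in the parallel \undecided dynamics, the agents update independently: the next opinion of each agent $u$ is a deterministic function of its current opinion, of its independent uniform choice of neighbor, and of its independent noise outcome. Writing $Y_u \in \{-1,0,1\}$ for the contribution of $u$ to the next bias ($Y_u = +1$ if $u$ becomes \mesalpha, $Y_u = -1$ if $u$ becomes \mesbeta, and $Y_u = 0$ otherwise), we have $S = \sum_{u} Y_u$, a sum of $n$ independent random variables each bounded in $[-1,1]$. Hoeffding's inequality then gives
\begin{align*}
\pr{S < s\left(1+\tfrac{\epsilon}{6}\right)} \le \pr{S \le \mean[S\mid\mathbf{x}] - t} \le \exp\left(-\frac{t^2}{2n}\right) \le n^{-\gamma^2\epsilon^2/72},
\end{align*}
where the last inequality uses $t = s\epsilon/6 \ge \tfrac{\gamma\epsilon}{6}\sqrt{n\log n}$, so that $t^2/(2n) \ge \tfrac{\gamma^2\epsilon^2}{72}\log n$. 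As $\gamma^2\epsilon^2/72$ is a positive constant, this proves $S \ge s(1+\epsilon/6)$ w.h.p.

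The individual computations are routine arithmetic; the one point that genuinely needs care is the independence of the single-round agent updates, which is precisely what lets me treat $S$ directly as a sum of independent bounded increments and invoke Hoeffding, rather than handling the negatively correlated counts $A$ and $B$ separately and then controlling the fluctuation of their difference. Making this independence explicit (it follows from the synchronous \PULL formulation, where each agent's pull and noise are drawn independently) is the main conceptual step; the rest is bookkeeping of constants.
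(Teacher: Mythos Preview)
Your proof is correct and follows the same two-stage template as the paper: first lower-bound $\mean[S\mid\mathbf{x}]$ by $s(1+\epsilon/3)$ via the hypothesis on $q$, then upgrade to a w.h.p.\ bound by concentration. The only difference is in the concentration step: the paper applies the additive Chernoff bound separately to $A$ and $B$ (each a sum of independent indicators) and then combines the two tail events $\{A\le\mean[A]-\tfrac{\epsilon}{12}\gamma\sqrt{n\log n}\}$ and $\{B\ge\mean[B]+\tfrac{\epsilon}{12}\gamma\sqrt{n\log n}\}$ by a union bound, whereas you write $S$ directly as a sum of independent $\{-1,0,1\}$-valued contributions and apply Hoeffding once. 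Your route is slightly cleaner and yields the same exponent $\gamma^2\epsilon^2/72$; the paper's route has the minor advantage of only invoking the $\{0,1\}$ Chernoff bound already stated in its appendix, at the cost of the extra union-bound step.
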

\begin{proof}[Proof of Lemma \ref{lemma:thm1:biasincrease}]
    We first notice that  Equation \eqref{expectation_S_oknoise}  implies  
    $\mathbb{E}\left[S\bigm| \mathbf{x} \right] \ge s\left(1+\epsilon/3\right)$.
    Then, consider  the events  
    \[ E_1 = \left\{A \le \mathbb{E}\left[A \mid \mathbf{x}\right] - \frac{\epsilon}{12}\gamma\sqrt{n\log n}\right\} \, \mbox{ and } \, E_2 = \left\{B \ge \mathbb{E}\left[B \mid \mathbf{x}\right] + \frac{\epsilon}{12}\gamma\sqrt{n\log n}\right\} \]
    For the additive form of Chernoff bound (Theorem \ref{chernoff:additive} in Appendix \ref{tools}), it holds that
    \begin{align*}
        \mathbb{P}\left(E_1 \bigm| \mathbf{x}\right) \le e^{-\frac{2n\log n}{144n}}= n^{-\frac{1}{77}} \ \ \text{and} \ \
        \mathbb{P}\left(E_2 \bigm| \mathbf{x}\right) \le e^{-\frac{2n\log n}{144n}}= n^{-\frac{1}{77}}.
    \end{align*}
   It follows that
    \begin{align*}
        & \mathbb{P}\left(S \ge s\left(1+\frac{\epsilon}{6}\right) \mid \mathbf{x} \right) 
        =   \mathbb{P}\left(S \ge s\left(1+\frac{\epsilon}{3}\right) - \frac{\epsilon}{6}s \mid \mathbf{x} \right)  \\
        \ge \ &   \mathbb{P}\left(S \ge \mathbb{E}\left[S \mid \mathbf{x}\right] - \frac{\epsilon}{6}\gamma\sqrt{n\log n} \mid \mathbf{x} \right)  \\
        = \ &   \mathbb{P}\left(A-B \ge \mathbb{E}\left[A-B \mid \mathbf{x}\right] - 2\frac{\epsilon}{12}\gamma\sqrt{n\log n} \mid \mathbf{x} \right) 
        \ge  \mathbb{P}\left(E_1^C \cap E_2^C \bigm| \mathbf{x}\right) \\
        = \ & \mathbb{P}\left(E_1^C\bigm| \mathbf{x} \right) + \mathbb{P}\left(E_2^C\bigm| \mathbf{x}\right) - \mathbb{P}\left(E_1^C\cup E_2^C \bigm| \mathbf{x}\right) 
        \ge  1-2n^{-\frac{1}{77}},
    \end{align*}
    where in the last inequality we bounded the probability of the union with 1.
\end{proof}

We now fix  $\beta = \frac{2\sqrt{3\epsilon}}{1+6\epsilon}$ and show the following bound.
    
\begin{lemma}\label{lemma:thm1:enoughundecided}
Let $\mathbf{x}$ be a configuration such that $s\le \beta n$. Then, in the next round, $Q\ge \frac{1-4\epsilon}{3(1+6\epsilon)}n$, w.h.p.
\end{lemma}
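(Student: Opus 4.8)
The plan is to pair a clean deterministic lower bound on the conditional expectation $\mean\left[Q \mid \mathbf{x}\right]$, coming from Equation \eqref{expectation_Q_oknoise}, with a standard concentration step. Since the lemma constrains only the bias ($s \le \beta n$) and leaves $q$ free, I would first regard the right-hand side of \eqref{expectation_Q_oknoise} as a function of the single variable $q$ and minimize it over all admissible $q$. The coefficient of $q^2$ is $\frac{3(1+6\epsilon)}{4n} > 0$, so this is a convex (upward) parabola, and its minimum is attained at the vertex. A direct computation places the vertex at $q = n/3$, and there the $q$-dependent terms collapse to give the identity
\[
    \mean\left[Q \bigm| \mathbf{x}\right] \ge \frac{n}{3} - \frac{1+6\epsilon}{4n}\,s^2 ,
\]
valid for every $q$. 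The neatness of this collapse reflects the fact that $p = 1/6$ is precisely the critical noise level.

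Next I would insert the hypothesis $s \le \beta n$ with $\beta = \frac{2\sqrt{3\epsilon}}{1+6\epsilon}$, so that $s^2 \le \frac{12\epsilon}{(1+6\epsilon)^2}n^2$. Substituting and simplifying yields
\[
    \mean\left[Q \bigm| \mathbf{x}\right] \ge \left(\frac13 - \frac{3\epsilon}{1+6\epsilon}\right)n = \frac{1-3\epsilon}{3(1+6\epsilon)}\,n .
\]
The decisive observation is that this lower bound overshoots the target value $\frac{1-4\epsilon}{3(1+6\epsilon)}n$ by a \emph{positive constant fraction} of $n$, namely by $\frac{\epsilon}{3(1+6\epsilon)}n = \Theta(n)$. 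This $\Theta(n)$ slack is exactly what makes the concentration step trivial.

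Finally I would invoke concentration, mirroring the use of Chernoff in Lemma \ref{lemma:thm1:biasincrease}. Conditioned on $\mathbf{X}_t = \mathbf{x}$, each of the $n$ agents updates independently (it pulls a uniformly random neighbor, with the noise acting independently on each reading), so $Q$ is a sum of $n$ independent $\{0,1\}$ indicators and the additive Chernoff bound (Theorem \ref{chernoff:additive}) applies directly. Because the gap between $\mean\left[Q \mid \mathbf{x}\right]$ and the target is $\Theta(n)$, a downward deviation of that magnitude has probability at most $e^{-\Theta(n)}$, which is far stronger than the required w.h.p.\ guarantee, and the proof closes.

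The main obstacle is the first step: carrying out the quadratic minimization correctly (convexity, vertex at $q=n/3$, and the term cancellation) and then checking that after substituting the \emph{exact} value of $\beta$ the residual slack is a strictly positive $\Theta(n)$ quantity rather than something vanishing with $\epsilon$. Had the slack been only $o(n)$, one would be forced into a more delicate Chernoff argument tuned to the scale $\sqrt{n\log n}$; as it stands, the constant-fraction margin makes the concentration step essentially free.
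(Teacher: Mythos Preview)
Your proposal is correct and follows essentially the same route as the paper: minimize the convex quadratic in $q$ from Equation~\eqref{expectation_Q_oknoise} at its vertex $q=n/3$, substitute $s\le\beta n$ to obtain $\mean[Q\mid\mathbf{x}]\ge \frac{1-3\epsilon}{3(1+6\epsilon)}n$, and then apply the additive Chernoff bound with the $\Theta(n)$ slack $\frac{\epsilon}{3(1+6\epsilon)}n$. Your additional justification that $Q$ is a sum of $n$ independent $\{0,1\}$ indicators (so Theorem~\ref{chernoff:additive} applies) is exactly the point the paper leaves implicit.
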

\begin{proof}[Proof of Lemma \ref{lemma:thm1:enoughundecided}]
Since Equation \eqref{expectation_Q_oknoise} has its minimum in $\bar{q}=\frac{n}{3}$, 
 \begin{align*}
    \mathbb{E}\left[Q\bigm| \mathbf{x}\right] \ge \  (1+6\epsilon)\frac{n}{12}-(1+6\epsilon)\frac{n}{6}+(5+6\epsilon) \frac{n}{12} - (1+6\epsilon)\left(\frac{\beta}{2}\right)^2n  \\
    = \  \frac{n}{12}\left(1+6\epsilon-2-12\epsilon+5+6\epsilon-\frac{36\epsilon}{1+6\epsilon}\right)  
    = \  \frac{1-3\epsilon}{3(1+6\epsilon)}n.
\end{align*}
Hence,  we can apply the additive form of Chernoff bound (Theorem \ref{chernoff:additive} in Appendix \ref{tools}), and get  $Q\ge \frac{1-4\epsilon}{3(1+6\epsilon)}n$, w.h.p.\ \big(actually, with probability $1-\exp(\Theta(n))$\big). Formally,
\begin{align*}
    & \mathbb{P}\left(Q \le \frac{1-4\epsilon}{3(1+6\epsilon)}n \mid \mathbf{x}\right)
    =  \mathbb{P}\left(Q \le \frac{1-3\epsilon}{3(1+6\epsilon)}n - \frac{\epsilon}{3(1+6\epsilon)}n \mid \mathbf{x}\right) \\
    \le \ & \mathbb{P}\left(Q \le \mathbb{E}\left[Q \mid \mathbf{x}\right] - \frac{\epsilon}{3(1+6\epsilon)}n \mid \mathbf{x}\right) 
    \le \  e^{-\frac{2}{n}\frac{\epsilon^2}{9(1+6\epsilon)^2}n^2} = e^{-\frac{2\epsilon^2}{9(1+6\epsilon)^2}n}. \qedhere
\end{align*}
\end{proof}
   The two lemmas above ensure that the system eventually reaches a configuration $\mathbf{y}$ with bias $s(\mathbf{y}) > \beta n$ within $\bigo(\log n)$ rounds, w.h.p. (see the proof of Theorem \ref{theorem_almost_plurality}).
We now consider configurations in which $s>\beta n$ and  derive a useful bound on the possible decrease of $s$.

\begin{lemma}\label{lemma:thm1:biasdecrease}
Let $\mathbf{x}$ be any configuration such that $s\ge \gamma\sqrt{n\log n}$ for some constant $\gamma>0$. Then, in the next round, it holds that $S\ge s\left(\frac{5}{6}+\frac{\epsilon}{2}\right)$ w.h.p.
\end{lemma}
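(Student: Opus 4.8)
The plan is to reproduce the concentration argument of Lemma~\ref{lemma:thm1:biasincrease}, the only structural difference being that here we are given \emph{no} lower bound on the number of undecided agents $q$. Consequently we cannot exploit the $q/n$ term appearing in the conditional expectation of the bias; instead we rely on the trivial bound $q\ge 0$, which already suffices precisely because the target inequality $S\ge s\left(\frac{5}{6}+\frac{\epsilon}{2}\right)$ leaves a comfortable slack of $\frac{\epsilon}{2}s$ below the expectation.

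First I would lower-bound $\mean[S\mid\mathbf{x}]$. Dropping the nonnegative term $\frac{1}{2}(1+6\epsilon)\frac{q}{n}\ge 0$ in Equation~\eqref{expectation_S_oknoise} gives
\[
    \mean[S\mid\mathbf{x}]=s\left(\frac{5}{6}+\epsilon+\frac{1}{2}(1+6\epsilon)\frac{q}{n}\right)\ge s\left(\frac{5}{6}+\epsilon\right),
\]
so the expected bias overshoots the target by exactly $\frac{\epsilon}{2}s$, which is at least $\frac{\epsilon}{2}\gamma\sqrt{n\log n}$ by the hypothesis $s\ge\gamma\sqrt{n\log n}$. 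Next I would control the downward fluctuation of $S=A-B$ by distributing the admissible deviation equally between $A$ and $B$. Define
\[
    E_1=\left\{A\le \mean[A\mid\mathbf{x}]-\frac{\epsilon}{4}\gamma\sqrt{n\log n}\right\},\qquad E_2=\left\{B\ge \mean[B\mid\mathbf{x}]+\frac{\epsilon}{4}\gamma\sqrt{n\log n}\right\}.
\]
Conditioned on $\mathbf{x}$, each of $A$ and $B$ is a sum of $n$ independent $\{0,1\}$ indicators (one per agent's pull), so the additive Chernoff bound (Theorem~\ref{chernoff:additive} in Appendix~\ref{tools}) yields $\pr{E_1\mid\mathbf{x}},\pr{E_2\mid\mathbf{x}}\le \exp\!\left(-\frac{\epsilon^2\gamma^2}{8}\log n\right)=n^{-\epsilon^2\gamma^2/8}$, a negative power of $n$ for the fixed constants $\epsilon,\gamma$.

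Finally I would assemble the pieces: on the good event $E_1^C\cap E_2^C$,
\[
    S=A-B\ge \mean[A\mid\mathbf{x}]-\mean[B\mid\mathbf{x}]-\frac{\epsilon}{2}\gamma\sqrt{n\log n}=\mean[S\mid\mathbf{x}]-\frac{\epsilon}{2}\gamma\sqrt{n\log n}\ge s\left(\frac{5}{6}+\epsilon\right)-\frac{\epsilon}{2}s=s\left(\frac{5}{6}+\frac{\epsilon}{2}\right),
\]
where the last inequality uses $\frac{\epsilon}{2}\gamma\sqrt{n\log n}\le\frac{\epsilon}{2}s$ together with the expectation bound above. A union bound over $E_1$ and $E_2$ then gives $\pr{S\ge s\left(\frac{5}{6}+\frac{\epsilon}{2}\right)\mid\mathbf{x}}\ge 1-2n^{-\epsilon^2\gamma^2/8}$, which is the claimed w.h.p.\ statement.

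I do not expect any genuine obstacle here, since the proof is essentially a slackened version of Lemma~\ref{lemma:thm1:biasincrease}. The only point requiring care is the choice of the per-variable deviation $\frac{\epsilon}{4}\gamma\sqrt{n\log n}$, so that the two contributions sum to the available margin $\frac{\epsilon}{2}s$ while keeping the Chernoff exponent a strictly positive constant; this is exactly what makes the bound robust to the absence of any hypothesis on $q$, and it is precisely the worst-case (small-$q$) guarantee needed to control the bias throughout the metastable phase.
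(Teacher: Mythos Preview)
Your proposal is correct and follows essentially the same approach as the paper's own proof: lower-bound $\mean[S\mid\mathbf{x}]\ge s\bigl(\tfrac{5}{6}+\epsilon\bigr)$ by dropping the nonnegative $q/n$ term in Equation~\eqref{expectation_S_oknoise}, then apply the additive Chernoff bound separately to $A$ and $B$ and combine via a union bound, exactly as in Lemma~\ref{lemma:thm1:biasincrease}. The paper merely sketches this and refers back to that lemma, while you spell out the deviation constants; the arguments are otherwise identical.
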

\begin{proof}[Proof of Lemma \ref{lemma:thm1:biasdecrease}]
    Observe   that  Equation \eqref{expectation_S_oknoise} implies
    $\mathbb{E}\left[S\bigm| \mathbf{x} \right] \ge s\left( 5/6+\epsilon\right)$.
    By the additive form of Chernoff bound and the union bound (as we did in the proof of Lemma \ref{lemma:thm1:biasincrease}), we get  $S\ge s\left(\frac{5}{6}+\frac{\epsilon}{2}\right)$, w.h.p.
\end{proof}

Lemma \ref{lemma:thm1:biasdecrease} is used to show the metastable phase of almost consensus, which lasts for a polynomial number of rounds and in which the bias keeps lower bounded by $\frac{2\sqrt{\epsilon}}{1+6\epsilon}n$ (see the proof of Theorem \ref{theorem_almost_plurality}). The next two lemmas provide an upper bound on the bias during this phase.

\begin{lemma}\label{lemma:thm1:lowboundundecided}
    Let $\mathbf{x}$ be any configuration. Then, in the next round, $Q\ge \frac{n}{12}(1-6\epsilon)$, w.h.p.
\end{lemma}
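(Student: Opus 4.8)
The plan is to derive a lower bound on $\mean[Q \mid \config]$ that holds \emph{uniformly} over all configurations, and then upgrade it to a high-probability statement by concentration. The essential difference from Lemma \ref{lemma:thm1:enoughundecided} is that here there is no restriction on the bias $s$, so I cannot simply discard the last (negative) term of Equation \eqref{expectation_Q_oknoise}; for large bias it is not negligible. Instead I would exploit the basic feasibility constraint $s^2 = (a-b)^2 \le (a+b)^2 = (n-q)^2$, which controls exactly how damaging the term $-\frac{1+6\epsilon}{n}\left(\frac{s}{2}\right)^2$ can be, expressed purely in terms of $q$.

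Substituting $s^2 \le (n-q)^2$ into \eqref{expectation_Q_oknoise} turns the lower bound into a function of $q$ alone,
\begin{align*}
    \mean[Q \mid \config] \ge \ & \frac{3(1+6\epsilon)}{4n}q^2 - \frac{1+6\epsilon}{2}q + \frac{5+6\epsilon}{12}n - \frac{1+6\epsilon}{4n}(n-q)^2.
\end{align*}
The convenient fact is that, after expanding $(n-q)^2 = n^2 - 2nq + q^2$, the term linear in $q$ cancels exactly, leaving an upward parabola with minimum at $q=0$:
\begin{align*}
    \mean[Q \mid \config] \ge \ & \frac{1+6\epsilon}{2n}q^2 + \frac{1-6\epsilon}{6}n \ \ge \ \frac{1-6\epsilon}{6}n,
\end{align*}
where the last inequality uses $q \ge 0$. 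Since $\epsilon < 1/6$ we have $1-6\epsilon > 0$, so this is a positive $\Theta(n)$ quantity.

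It remains to pass from the expectation to a w.h.p.\ bound. As each of the $n$ agents pulls and is perturbed by noise independently of the others, the new states are independent conditionally on $\config$, so $Q$ is a sum of $n$ independent Bernoulli indicators and the additive Chernoff bound (Theorem \ref{chernoff:additive}) applies, exactly as in Lemmas \ref{lemma:thm1:enoughundecided} and \ref{lemma:thm1:biasincrease}. The target $\frac{1-6\epsilon}{12}n$ lies a full $\frac{1-6\epsilon}{12}n = \Theta(n)$ below the expectation bound, so
\begin{align*}
    \mathbb{P}\left(Q \le \tfrac{1-6\epsilon}{12}n \mid \config\right) \le \mathbb{P}\left(Q \le \mean[Q \mid \config] - \tfrac{1-6\epsilon}{12}n \mid \config\right) \le e^{-\frac{(1-6\epsilon)^2}{72}n},
\end{align*}
which is exponentially small and in particular w.h.p.

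The only delicate point is the uniform minimization: one must resist throwing away the $s^2$ term and instead bound it through the geometric constraint $\abs{s} \le n - q$. Once that substitution is in place, the cancellation of the linear term makes the minimization over $q$ immediate, and the concentration step is routine precisely because the gap between the mean and the target is linear in $n$, giving an $e^{-\Theta(n)}$ failure probability rather than merely $\poly(n)^{-1}$.
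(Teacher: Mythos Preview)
Your proof is correct and essentially identical to the paper's own argument: both use the feasibility bound $s^2 \le (n-q)^2$ in Equation~\eqref{expectation_Q_oknoise}, obtain the same simplified lower bound $\frac{1+6\epsilon}{2n}q^2 + \frac{1-6\epsilon}{6}n \ge \frac{1-6\epsilon}{6}n$ after the linear-in-$q$ terms cancel, and then apply the additive Chernoff bound with a $\Theta(n)$ gap. Your version is just more explicit about the cancellation and the resulting $e^{-\Theta(n)}$ failure probability.
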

\begin{proof}[Proof of Lemma \ref{lemma:thm1:lowboundundecided}]
From Equation \eqref{expectation_Q_oknoise}
\begin{align*}
        \mathbb{E}\left[Q\bigm| \mathbf{x}\right] \ge \ & \frac{3}{4}\left(\frac{1+6\epsilon}{n}\right)q^2 - \frac{1+6\epsilon}{2}q + \frac{5+6\epsilon}{12}n  - \frac{1+6\epsilon}{n}\left(\frac{n-q}{2}\right)^2  \\
        \ge \ & \frac{1}{2}\left(\frac{1+6\epsilon}{n}\right)q^2 + \frac{1-6\epsilon}{6}n \ge \  \frac{1-6\epsilon}{6}n ,
\end{align*}
where we used  $s\le n-q$. For the additive form of Chernoff bound (Theorem \ref{chernoff:additive} in Appendix \ref{tools}), we get  $Q\ge \frac{1-6\epsilon}{12}n$, w.h.p.
\end{proof}


\begin{lemma}\label{lemma:thm1:lowboundminopinion}
    Let $\mathbf{x}$ be a configuration with $q\ge \frac{n}{12}(1-6\epsilon)$. Then, in the next round, $B\ge \left(\frac{1-6\epsilon}{12}\right)^3n$, w.h.p.
\end{lemma}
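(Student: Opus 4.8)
The plan is to read off a lower bound for $\mean[B \mid \mathbf{x}]$ directly from Equation \eqref{expectation_B} and then to invoke concentration, since the target $\left(\frac{1-6\epsilon}{12}\right)^3 n$ is linear in $n$. The crucial observation is that the ``noise-driven'' term in \eqref{expectation_B} is large \emph{regardless} of how the decided mass is split between the two opinions: since $a,b \ge 0$ we always have $(a+q)(b+q) \ge q\cdot q = q^2$, so dropping the nonnegative first summand together with the nonnegative $b(a+b)$ term yields
\[
\mean[B \mid \mathbf{x}] \ge \frac{p}{n}(a+q)(b+q) \ge \frac{p}{n}\, q^2 .
\]
This is exactly what makes the statement robust even in the extreme case $b=0$: undecided nodes are turned into \mesbeta nodes by the noise alone, and by hypothesis there are $\Theta(n)$ of them.

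Next I would substitute the hypotheses $q \ge \frac{1-6\epsilon}{12}n$ and $p = \frac{1-6\epsilon}{6}$ into the bound above to obtain
\[
\mean[B \mid \mathbf{x}] \ge \frac{1}{n}\cdot \frac{1-6\epsilon}{6}\cdot \frac{(1-6\epsilon)^2}{144}\, n^2 = \frac{(1-6\epsilon)^3}{864}\, n = 2\left(\frac{1-6\epsilon}{12}\right)^3 n .
\]
The structural feature driving the proof is the factor of $2$ of slack between this expected value and the target threshold $\left(\frac{1-6\epsilon}{12}\right)^3 n$; this gap is precisely what leaves room for a concentration argument with a deviation that is itself linear in $n$.

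Finally, observing that $B$ is a sum of $n$ independent indicator variables (given $\mathbf{x}$, each node pulls and updates independently of the others), I would apply the additive Chernoff bound (Theorem \ref{chernoff:additive} in Appendix \ref{tools}) with deviation $\delta = \left(\frac{1-6\epsilon}{12}\right)^3 n = \Theta(n)$. Using $\mean[B \mid \mathbf{x}] - \delta \ge 2\delta - \delta = \delta$, this gives
\[
\Pr\left(B \le \left(\tfrac{1-6\epsilon}{12}\right)^3 n \bigm| \mathbf{x}\right) \le \Pr\!\left(B \le \mean[B\mid\mathbf{x}] - \delta \mid \mathbf{x}\right) \le e^{-2\delta^2/n} = e^{-\Theta(n)},
\]
which is far stronger than the claimed w.h.p.\ bound, exactly mirroring the exponentially small failure probability already obtained in Lemma \ref{lemma:thm1:enoughundecided}.

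I do not expect any genuine obstacle in the concentration step, as the target is a constant fraction of a linear expectation. The only point requiring care is recognizing that $(a+q)(b+q) \ge q^2$ is the right worst-case estimate: a naive attempt to track $a$ and $b$ separately would obscure the fact that the noise contribution alone already produces the required number of \mesbeta nodes, and it is this uniform bound that makes the argument insensitive to the possibly tiny value of $b$.
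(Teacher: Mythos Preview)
Your proof is correct and essentially identical to the paper's: both drop to the noise term $\frac{p}{n}(a+q)(b+q) \ge \frac{p}{n}q^2$ from Equation~\eqref{expectation_B}, substitute $p=\frac{1-6\epsilon}{6}$ and the hypothesis on $q$ to obtain $\mean[B\mid\mathbf{x}]\ge \frac{(1-6\epsilon)^3}{864}n = 2\bigl(\frac{1-6\epsilon}{12}\bigr)^3 n$, and conclude via the additive Chernoff bound. Your write-up is in fact more explicit than the paper's, which leaves the inequality $(a+q)(b+q)\ge q^2$ and the factor-of-two slack implicit.
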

\begin{proof}[Proof of Lemma \ref{lemma:thm1:lowboundminopinion}]
From the last term of Equation \eqref{expectation_B}, we have 
\[    
    \mathbb{E}\left[B\bigm| \mathbf{x}\right] \ge \frac{1}{6}\left(\frac{1-6\epsilon}{n}\right)\left(q^2\right) \ge \frac{n}{6\cdot 12^2}(1-6\epsilon)^3.
\]
The the additive form of Chernoff bound (Theorem \ref{chernoff:additive} in Appendix \ref{tools}) implies that $B \ge \left(\frac{1-6\epsilon}{12}\right)^3n$, w.h.p.
\end{proof}

\begin{proof}[Proof of Theorem \ref{theorem_almost_plurality}]

Let $\config$ be the initial configuration. We now  prove that the bias keeps upper bounded by the value $\left(1-2[(1-6\epsilon)/12]^3\right)n$. Indeed, Lemma \ref{lemma:thm1:lowboundundecided}  ensures that the number of undecided nodes keeps at least $\frac{n}{12}(1-6\epsilon)$, w.h.p. Thus, applying Lemmas \ref{lemma:whp_intersection} and \ref{lemma:thm1:lowboundminopinion}, we get that $b(\mathbf{X}_t) \ge [(1-6\epsilon)/12]^3n$, w.h.p., for a polynomial number of rounds .

As for the lower bound of the bias, we distinguish two initial cases.
    
    \noindent
    \emph{Case} $s(\config) \ge \beta n$. From Lemma \ref{lemma:thm1:biasdecrease}, we know that as long as the bias is of magnitude $\Omega(\sqrt{n\log n})$, then it cannot decrease too fast w.h.p., namely $s(\mathbf{X}_{t+1}) \ge s(\mathbf{X}_t)(5/6 + \epsilon/2)$, w.h.p. Notice that 
    \[
        \left(\frac{5}{6}+\frac{\epsilon}{2}\right)^2\cdot \beta n \ge \frac{2\sqrt{\epsilon}}{1+6\epsilon}n,
    \]
    which means that, if at some round $t$ the bias goes below the value $\beta n$, then it remains at least $\frac{2\sqrt{\epsilon}}{1+6\epsilon}n$ and it will not decrease below that value for at least another round, w.h.p. Then, by Lemma \ref{lemma:thm1:enoughundecided} we know that at round $t+1$ the number of undecided nodes is at least $\frac{1-4\epsilon}{3(1+6\epsilon)}n$, w.h.p., which means that the bias starts increasing again each round due to Lemma \ref{lemma:thm1:biasincrease}, w.h.p., as long as it is still below $\beta n$. Indeed, the number of undecided nodes keeps greater than $\frac{1-4\epsilon}{3(1+6\epsilon)}n$ as long as the bias is below $\beta n$, w.h.p. (Lemma \ref{lemma:thm1:enoughundecided}). This phase, in which the bias keeps greater than $\frac{2\sqrt{\epsilon}}{1+6\epsilon}n$, lasts for a polynomial number of rounds, w.h.p.\ (see Lemma \ref{lemma:whp_intersection});
    
    \noindent
    \emph{Case} $\gamma\sqrt{n\log n} \le s(\config) < \beta n$. Thanks to   Lemma \ref{lemma:thm1:biasdecrease}, in the next round, the bias is greater than $\gamma'\sqrt{n\log n}$, w.h.p.,  while the number of undecided nodes gets greater than $\frac{1-4\epsilon}{3(1+6\epsilon)}n$, w.h.p. (Lemma \ref{lemma:thm1:enoughundecided}). Then, Lemmas \ref{lemma:thm1:biasincrease} and \ref{lemma:thm1:enoughundecided} guarantee that, within the next $\bigo\left(\log n\right)$ rounds, the bias reaches the value $\beta n$, w.h.p. (Lemma \ref{lemma:whp_intersection}), and so the process turns to be  in the first Case.

We finally remark that our analysis above shows that  the polynomial length of  the metastable phase, i.e. $n^\lambda$, has the exponent  $\lambda$ that  (only) depends   on the (constant) parameters $\gamma$ and $\epsilon$ of the considered process.
\end{proof}

\subsection{Proof of Theorem \ref{theorem_victory_of_noise}}\label{ssec:victorynoise}

First we present all the necessary technical lemmas (with their proof) we are going to use to prove the theorem and then we prove the theorem. We assume the starting configuration $\mathbf{x}$ to have bias $s(\mathbf{x}) = a(\mathbf{x}) - b(\mathbf{x}) \ge \gamma\sqrt{n \log n}$ for some constant $\gamma >0$; the case in which $b(\mathbf{x})>a(\mathbf{x})$ is analogous. Let $\epsilon\in\left(0,\frac{1}{3}\right]$ be a constant, and $p=1/6+\epsilon$ be the probability of noise. 
Equations \eqref{expectation_S} and \eqref{expectation_Q} become
\begin{align}
    \mean\left[S\bigm| \mathbf{x}\right] = \ & s\left(\frac{5}{6}-\epsilon+\frac{1}{2}(1-6\epsilon)\frac{q}{n}\right), \label{expectation_S_toomuchnoise}\\
    \mean\left[Q\bigm| \mathbf{x}\right] = \ & \frac{3}{4}\left(\frac{1-6\epsilon}{n}\right)q^2 - \frac{1-6\epsilon}{2}q + \frac{5-6\epsilon}{12}n \nonumber
    \\
    & - \frac{1-6\epsilon}{n}\left(\frac{s}{2}\right)^2. \label{expectation_Q_toomuchnoise}
\end{align}
From Equation \eqref{expectation_S_toomuchnoise} it is clear that the bias decreases in expectation exponentially fast each round as long as $\epsilon\ge 1/6$ (actually, $\epsilon > 1/12$ is enough) or $q/n <1/3 \cdot (1+6\epsilon)/(1-6\epsilon)$. We analyze two cases: $\epsilon>1/12$ and $\epsilon\le1/12$.

\subsubsection{First Case: \texorpdfstring{$\epsilon > \frac{1}{12}$}{firstcase} {Large Epsilon}.}

We first show a bound  on the decrease of the bias.

\begin{lemma}\label{lemma:thm2:bigepsilon}
    Let $\mathbf{x}$ be a configuration such that $s\ge \gamma\sqrt{n\log n}$ for some constant $\gamma>0$. Then, in the next round, it holds that $S \le s\left(1-2\epsilon+\frac{1}{6}\right)$ w.h.p.
\end{lemma}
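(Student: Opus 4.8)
The plan is to follow exactly the two-step template already used for Lemma \ref{lemma:thm1:biasdecrease}: first bound the one-step conditional expectation $\mean\left[S \mid \mathbf{x}\right]$ by $s$ times a constant strictly below $\frac{7}{6}-2\epsilon$, and then promote this expectation bound to a with-high-probability statement on $S$ itself through the additive Chernoff bound of Theorem \ref{chernoff:additive} together with the union bound. Since $S = A - B$ and, conditioned on $\mathbf{x}$, both $A$ and $B$ are sums of $n$ independent indicator variables (each node updates independently given the current configuration, the pulls and the noise being independent across nodes), the additive Chernoff bound applies to each of them, just as in Lemma \ref{lemma:thm1:biasincrease}.

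For the expectation, I would start from \eqref{expectation_S_toomuchnoise}, writing $\mean\left[S \mid \mathbf{x}\right] = s\, g(q)$ with $g(q) = \frac{5}{6}-\epsilon+\frac{1}{2}(1-6\epsilon)\frac{q}{n}$, and recall $s > 0$. The coefficient $g$ is affine in $q$, so its maximum over the feasible range $q \in [0, n]$ is attained at an endpoint, and which endpoint depends on the sign of $1-6\epsilon$. If $\epsilon \le \frac16$ then $g$ is non-decreasing, so $g(q) \le g(n) = \frac{4}{3}-4\epsilon$; if $\epsilon > \frac16$ then $g$ is decreasing, so $g(q) \le g(0) = \frac{5}{6}-\epsilon$. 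I would then check that each endpoint value is at most the target coefficient $\frac{7}{6}-2\epsilon$: in the first sub-case this reduces to $2\epsilon - \frac16 \ge 0$, which holds with a strictly positive constant gap precisely because we are in the regime $\epsilon > \frac{1}{12}$; in the second sub-case it reduces to $\frac13 - \epsilon \ge 0$, which holds because $\epsilon \le \frac13$. Hence in both sub-cases $\mean\left[S \mid \mathbf{x}\right] \le s\left(\frac{7}{6}-2\epsilon\right) - \delta s$ for a constant $\delta > 0$ (namely $\delta = 2\epsilon-\frac16$ or $\delta = \frac13-\epsilon$, respectively).

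To conclude, I would control the two tails separately, as in Lemma \ref{lemma:thm1:biasincrease}: set $E_1 = \{A \ge \mean\left[A \mid \mathbf{x}\right] + \kappa\sqrt{n\log n}\}$ and $E_2 = \{B \le \mean\left[B \mid \mathbf{x}\right] - \kappa\sqrt{n\log n}\}$ for a small constant $\kappa$, so that on $\overline{E_1}\cap\overline{E_2}$ one has $S = A-B \le \mean\left[S \mid \mathbf{x}\right] + 2\kappa\sqrt{n\log n}$. The additive Chernoff bound gives $\pr{E_1 \mid \mathbf{x}}, \pr{E_2 \mid \mathbf{x}} \le n^{-\Theta(\kappa^2)}$, and the union bound makes $\overline{E_1}\cap\overline{E_2}$ hold w.h.p. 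Finally, because $s \ge \gamma\sqrt{n\log n}$, the additive slack $2\kappa\sqrt{n\log n}$ is dominated by $\delta s \ge \delta\gamma\sqrt{n\log n}$ once $\kappa$ is chosen below $\delta\gamma/2$; this absorbs the deviation into the expectation gap and yields $S \le s\left(\frac{7}{6}-2\epsilon\right)$ w.h.p.

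The one point requiring care — the only real obstacle — is the case split driven by the sign of $1-6\epsilon$, since the maximizing value of $q$ jumps from $q=n$ to $q=0$ as $\epsilon$ crosses $\frac16$; one must verify that each branch individually leaves a positive constant gap below $\frac{7}{6}-2\epsilon$. The gap degrades at the two ends of the admissible interval (it equals $2\epsilon-\frac16$, smallest near $\epsilon = \frac{1}{12}$, and $\frac13-\epsilon$, vanishing as $\epsilon \to \frac13$), so the Chernoff deviation constant $\kappa$ must be chosen relative to $\delta$; keeping $\kappa = \Theta(\delta\gamma)$ is what guarantees the slack $\sqrt{n\log n}$ stays below $\delta s$ for every admissible $\epsilon$ in the interior of the range, the boundary $\epsilon=\frac13$ being the tightest and hence the most delicate to argue.
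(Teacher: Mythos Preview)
Your argument is correct and follows exactly the paper's template: bound $\mathbb{E}[S \mid \mathbf{x}]$ from \eqref{expectation_S_toomuchnoise} and then apply the additive Chernoff bound (Theorem \ref{chernoff:additive}) separately to $A$ and $B$ with a union bound, choosing the deviation proportional to $\gamma\sqrt{n\log n}$ times the constant expectation gap. Your explicit case split on the sign of $1-6\epsilon$ is in fact more careful than the paper's own proof, which silently substitutes $q\le n$ to obtain $\mathbb{E}[S\mid\mathbf{x}] \le s\bigl(\tfrac{4}{3}-4\epsilon\bigr)$---an inequality that is only the correct direction for $\epsilon \le \tfrac16$; the degeneracy you flag at the endpoint $\epsilon=\tfrac13$ (where $g(0)=\tfrac56-\epsilon$ exactly meets the target $\tfrac76-2\epsilon$) is real and inherent to the lemma's stated bound rather than a flaw in your method.
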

\begin{proof}[Proof of Lemma \ref{lemma:thm2:bigepsilon}]
    From Equation \eqref{expectation_S_toomuchnoise}, we have
    \begin{align*}
        \mathbb{E}\left[S\bigm | \mathbf{x}\right] & \le s\left(\frac{5}{6}-\epsilon+\frac{1}{2}(1-6\epsilon)\right)=s\left(\frac{4}{3}-4\epsilon\right) \\ & \le s\left(1-4\epsilon+\frac{1}{3}\right).
    \end{align*}
    Observe that $4\epsilon-\frac{1}{3}>0$ if and only if $\epsilon>\frac{1}{12}$. Now, let $\lambda = \gamma\left(\epsilon - \frac{1}{12}\right)\sqrt{n\log n}$, and define the events
    \begin{align*}
        E_1  & = \{A \ge \expect{A \mid \config} + \lambda\}, \ \ \text{and} \\
        E_2 & = \{B \le \expect{B \mid \config} - \lambda\}.
    \end{align*}
    Then, for the additive form of Chernoff bound (Theorem \ref{chernoff:additive} in Appendix \ref{tools}) it holds that
    \begin{align*}
        \pr{E_1 \mid \config} \le & \  e^{-\frac{\gamma(\epsilon-1/12)^2 \log n}{2}} = n^{-\frac{\gamma(\epsilon-1/12)^2}{2}}, \ \ \text{and} \\
        \pr{E_2 \mid \config} \le & \  e^{-\frac{\gamma(\epsilon-1/12)^2 \log n}{2}} = n^{-\frac{\gamma(\epsilon-1/12)^2}{2}}.
    \end{align*}
    Then, for the union bound, we have that
    \begin{align*}
        \pr{S \le s\left(1-2\epsilon+\frac{1}{6}\right) \mid \config } = & \ \pr{S \le s\left(1-4\epsilon+\frac{1}{3}\right) + 2\lambda \mid \config } \\
        \ge & \ \pr{S \le \expect{S \mid \config} + 2\lambda \mid \config } \\
        = & \ \pr{A-B \le \expect{A-B \mid \config} + 2\lambda \mid \config } \\
        \ge & \ \pr{E_1^C, E_2^C \mid \config} \\
        \ge & \  \pr{E_1^C \mid \config} + \pr{E_2^C \mid \config} - \pr{E_1^C \cup E_2^C \mid \config} \\
        \ge & \ 1 - 2n^{-\frac{\gamma(\epsilon-1/12)^2}{2}},
    \end{align*}
    where in the last inequality we used that the probability of the union of two events is at most 1.
\end{proof}

We then obtain the first part of Theorem \ref{theorem_victory_of_noise} by using a simple symmetric argument for configurations such that $b>a$, and by Lemma \ref{lemma:whp_intersection}.

\subsubsection{Second Case: \texorpdfstring{$\epsilon\in \left(0,\frac{1}{12}\right]$}{secondcase} {Small Epsilon}.}\label{ssec:proof_noise_secondcase}

The following lemma states that, if $s\ge \frac{2}{3}n$, the bias decreases exponentially at the next round, w.h.p. On the other hand, if the bias is at most $\frac{2}{3}n$, it cannot grow over $\frac{2}{3}n$, w.h.p.

\begin{lemma}\label{lemma:thm2:biascontrol}
    Let $\mathbf{x}$ be any configuration. The followings hold:
    \begin{itemize}
        \item[(1)] if $s\ge \frac{2}{3}n$, then $S \le s(1-\epsilon) $ w.h.p.;
        \item[(2)] if $s\le \frac{2}{3}n$, then $S\le \frac{2}{3}n$ w.h.p.
    \end{itemize}
\end{lemma}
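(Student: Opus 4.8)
The plan is to prove both items by the same two-step template already used in Lemma \ref{lemma:thm2:bigepsilon}: first bound the conditional expectation $\mean[S \mid \mathbf{x}]$ deterministically, obtaining a value that sits strictly below the desired threshold by an additive margin of order $\Theta(\epsilon n)$; then concentrate $S = A - B$ around its mean via the additive Chernoff bound (Theorem \ref{chernoff:additive}) applied separately to $A$ and $B$, each a sum of $n$ independent node-indicators. Since the margin is $\Theta(n)$ rather than $\Theta(\sqrt{n\log n})$, the resulting failure probability will be $e^{-\Theta(n)}$, far stronger than w.h.p. Throughout I work under the running assumption $s \ge 0$ (the WLOG $a \ge b$); for $s < 0$ both claims are immediate since then $\mean[S \mid \mathbf{x}] \le 0$.

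For item (1), I would first translate $s \ge \frac{2}{3}n$ into a bound on the undecided count: since $a + b \ge a - b = s$, we have $q = n - a - b \le \frac{1}{3}n$, so $q/n \le \frac{1}{3}$. Plugging this into \eqref{expectation_S_toomuchnoise} and using that its coefficient is increasing in $q/n$ (because $1 - 6\epsilon > 0$) gives $\mean[S \mid \mathbf{x}] \le s(\frac{5}{6} - \epsilon + \frac{1}{2}(1-6\epsilon)\frac{1}{3}) = s(1 - 2\epsilon)$. The target $s(1-\epsilon)$ thus exceeds the mean by $s\epsilon \ge \frac{2\epsilon}{3}n$. Setting $\lambda = \frac{s\epsilon}{2}$ and defining $E_1 = \{A \ge \mean[A \mid \mathbf{x}] + \lambda\}$ and $E_2 = \{B \le \mean[B \mid \mathbf{x}] - \lambda\}$, on $E_1^C \cap E_2^C$ we get $S \le \mean[S \mid \mathbf{x}] + 2\lambda \le s(1-\epsilon)$; Chernoff bounds $\pr{E_1}, \pr{E_2} \le e^{-2\lambda^2/n} = e^{-\Theta(n)}$, and the union bound closes the case.

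Item (2) is the delicate one, and I expect the deterministic optimization to be the main obstacle. Here $s$ ranges over all of $[0, \frac{2}{3}n]$, so I cannot bound $q/n$ by a fixed constant. Instead I would use the general constraint $q \le n - s$ (again from $a + b \ge s$), substitute $q/n \le 1 - s/n$ into \eqref{expectation_S_toomuchnoise}, and write the resulting bound as $\mean[S \mid \mathbf{x}] \le n\, f(s/n)$ with $f(\sigma) = (\frac{4}{3} - 4\epsilon)\sigma - (\frac{1}{2} - 3\epsilon)\sigma^2$. The point is then to show $f$ is increasing on $[0, \frac{2}{3}]$: its derivative $f'(\sigma) = \frac{4}{3} - 4\epsilon - (1-6\epsilon)\sigma$ is itself decreasing yet still positive at the endpoint, where $f'(\frac{2}{3}) = \frac{2}{3} > 0$, so the maximum over the feasible range is attained at $\sigma = \frac{2}{3}$, yielding $\mean[S \mid \mathbf{x}] \le n\, f(\frac{2}{3}) = (\frac{2}{3} - \frac{4\epsilon}{3})n$. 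This is the crux: the worst feasible configuration still leaves a $\frac{4\epsilon}{3}n = \Theta(n)$ gap below $\frac{2}{3}n$. With that margin secured, the same Chernoff step as in item (1), now with $\lambda = \frac{2\epsilon}{3}n$, gives $S \le \mean[S \mid \mathbf{x}] + 2\lambda \le \frac{2}{3}n$ with probability $1 - e^{-\Theta(n)}$.

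The concentration machinery is routine and identical in both parts; the only genuinely quantitative work is verifying the two deterministic expectation bounds, and in particular confirming the monotonicity of $f$ on $[0,\frac{2}{3}]$ together with the clean value $f(\frac{2}{3}) = \frac{2}{3} - \frac{4\epsilon}{3}$, which is exactly what guarantees the $\frac{2}{3}n$ barrier is not crossed.
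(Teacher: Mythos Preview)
Your proposal is correct and follows essentially the same approach as the paper: for both items you bound $\mean[S\mid\mathbf{x}]$ deterministically (using $q\le n/3$ in item (1) and $q\le n-s$ in item (2)) to get a gap of order $\Theta(\epsilon n)$ below the target, then apply the additive Chernoff bound to $A$ and $B$ separately and combine via the union bound. Your explicit monotonicity computation for $f(\sigma)$ in item (2) is just a spelled-out version of the paper's one-line jump to $\frac{2}{3}n(1-2\epsilon)$; note that $(\frac{2}{3}-\frac{4\epsilon}{3})n=\frac{2}{3}n(1-2\epsilon)$, so the two bounds coincide exactly.
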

\begin{proof}[Proof of Lemma \ref{lemma:thm2:biascontrol}]
    Consider the first statement. If $s\ge \frac{2}{3}n$, then $q< \frac{1}{3}n$. Thus
    \[
        \mathbb{E}\left[S\bigm| \mathbf{x}\right] \le s\left(\frac{5}{6}-\epsilon+\frac{1}{6}-\epsilon\right) \le s(1-2\epsilon).
    \]
    We conclude using the additive form of Chernoff bound (Theorem \ref{chernoff:additive} in Appendix \ref{tools}), as we have done in the proof of Lemma \ref{lemma:thm2:bigepsilon}, getting that $S\le s(1-\epsilon)$, w.h.p.
    
    As for the second statement, we take the expectation of $S$ from Equation \eqref{expectation_S_toomuchnoise} and we observe that
    \begin{align*}
            \mathbb{E}\left[S\bigm| \mathbf{x}\right] & = s\left(\frac{5}{6}-\epsilon+\frac{1}{2}(1-6\epsilon)\frac{q}{n}\right)  \\
            & \le s\left(\frac{5}{6}-\epsilon+\frac{1}{2}(1-6\epsilon)\frac{n-s}{n}\right)  \\
            & = s\left(\frac{4}{3}-4\epsilon-\frac{s}{2n}(1-6\epsilon)\right) \\ 
            & \le \frac{2}{3}n\left(1-2\epsilon\right).
        \end{align*}
    We conclude applying the additive form of Chernoff bound (Theorem \ref{chernoff:additive} in Appendix \ref{tools}) on $A$ and $B$, and the union bound, as we have done in the proof of Lemma \ref{lemma:thm2:bigepsilon}, getting that $S\le \frac{2}{3}n$, w.h.p.
\end{proof}
Thus, we just have to take care of cases in which the bias is no more than $\frac{2}{3}n$. The key-point to show the decrease of the bias, as long as it is $\Omega\left(\sqrt{n\log n}\right)$, is the condition $q\le\frac{n(1+3\epsilon)}{3(1-6\epsilon)}$, as shown in the next lemma.

\begin{lemma}\label{lemma2_3}
    Let $\mathbf{x}$ be a configuration such that $s\ge \gamma\sqrt{n\log n}$ for some constant $\gamma >0$. If $q\le  \frac{n}{3}\left(\frac{1+3\epsilon}{1-6\epsilon}\right)$, then in the next round it holds that $S\le s\left(1-\frac{\epsilon}{4}\right)$ w.h.p.
\end{lemma}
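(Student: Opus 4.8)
The plan is to follow the same two-step template used in the proofs of Lemmas \ref{lemma:thm2:bigepsilon} and \ref{lemma:thm2:biascontrol}: first bound the conditional expectation $\mean[S \mid \config]$ using the hypothesis on $q$, and then transfer this bound to a high-probability statement on $S$ via the additive Chernoff bound applied separately to $A$ and $B$. Since $s \ge \gamma\sqrt{n\log n} > 0$, all the inequalities I multiply by $s$ will preserve their direction.

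First I would substitute the upper bound $q \le \frac{n}{3}\cdot\frac{1+3\epsilon}{1-6\epsilon}$ into Equation \eqref{expectation_S_toomuchnoise}. The crucial observation is that this threshold is engineered so that the factor $(1-6\epsilon)$ appearing in \eqref{expectation_S_toomuchnoise} cancels the $(1-6\epsilon)$ in the denominator of the bound on $q/n$; concretely,
\[
\frac{1}{2}(1-6\epsilon)\frac{q}{n} \le \frac{1}{2}(1-6\epsilon)\cdot\frac{1}{3}\cdot\frac{1+3\epsilon}{1-6\epsilon} = \frac{1}{6}(1+3\epsilon) = \frac{1}{6}+\frac{\epsilon}{2}.
\]
Plugging this into \eqref{expectation_S_toomuchnoise} yields
\[
\mean[S \mid \config] \le s\left(\frac{5}{6}-\epsilon+\frac{1}{6}+\frac{\epsilon}{2}\right) = s\left(1-\frac{\epsilon}{2}\right).
\]

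Next I would close the gap of $\frac{\epsilon}{4}s$ between the expected value $s(1-\epsilon/2)$ and the target $s(1-\epsilon/4)$ using concentration. Setting the deviation parameter $\lambda = \frac{\epsilon}{8}s$ and noting that $A$ and $B$ are each sums of $n$ independent indicator variables (conditioned on $\config$, since every agent updates by pulling independently), I would introduce the events $E_1 = \{A \ge \mean[A\mid\config]+\lambda\}$ and $E_2 = \{B \le \mean[B\mid\config]-\lambda\}$. The additive Chernoff bound (Theorem \ref{chernoff:additive}) gives $\pr{E_1\mid\config}, \pr{E_2\mid\config} \le e^{-2\lambda^2/n}$; since $\lambda \ge \frac{\epsilon\gamma}{8}\sqrt{n\log n}$, this is at most $n^{-\epsilon^2\gamma^2/32}$, a polynomially small quantity. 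On the complementary event $E_1^C\cap E_2^C$ one has $S = A-B \le \mean[S\mid\config]+2\lambda \le s(1-\epsilon/2)+\frac{\epsilon}{4}s = s(1-\epsilon/4)$, and a union bound over $E_1,E_2$ closes the argument, exactly as in the proof of Lemma \ref{lemma:thm2:bigepsilon}.

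I expect no genuine obstacle here: the statement is a direct consequence of the expectation formula plus standard concentration. The only point requiring care is the algebraic verification that the specific threshold $\frac{n}{3}\cdot\frac{1+3\epsilon}{1-6\epsilon}$ produces a contraction factor strictly below $1$ (namely $1-\epsilon/2$); this is precisely why that threshold, rather than some rounder value, is singled out in the hypothesis. Everything else reuses the Chernoff-plus-union-bound machinery already established for $A$ and $B$, and the hypothesis $s = \Omega(\sqrt{n\log n})$ guarantees the deviation $\lambda$ is large enough for the failure probability to remain polynomially small.
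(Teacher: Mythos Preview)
Your proposal is correct and follows essentially the same approach as the paper: bound $\mean[S\mid\config]\le s(1-\epsilon/2)$ using the hypothesis on $q$ via Equation~\eqref{expectation_S_toomuchnoise}, then apply the additive Chernoff bound to $A$ and $B$ together with a union bound (exactly as in Lemma~\ref{lemma:thm2:bigepsilon}) to conclude $S\le s(1-\epsilon/4)$ w.h.p. Your write-up is simply more explicit about the algebra and the choice of $\lambda$ than the paper's version.
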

\begin{proof}[Proof of Lemma \ref{lemma2_3}]
    From Equation \eqref{expectation_S_toomuchnoise} it follows that
    \[\mathbb{E}\left[S\bigm| \mathbf{x}\right] \le s\left(\frac{5}{6}-\epsilon+\frac{1}{6}+\frac{\epsilon}{2}\right) \le s\left(1-\frac{\epsilon}{2}\right).\]
    The thesis follows from an easy application of the additive form of Chernoff bound (Theorem \ref{chernoff:additive} in Appendix \ref{tools})  and the union bound, as we have done in the proof of Lemma \ref{lemma:thm2:bigepsilon}, getting that $S\le s(1-\epsilon/4)$, w.h.p.
\end{proof}

We now analyze the dynamics by partitioning the interval $\left(0, \frac{2}{3}n\right]$ and seeing what happens to the bias in each element of the partition.
Let $\beta=\frac{2\sqrt{2\epsilon}}{\sqrt{(1+6\epsilon)(1-6\epsilon)}}$ and define $S_{-1} \coloneqq \left(0, \beta n \right]$, $S_i$ the sequence of intervals \[S_i \coloneqq \left(\left(\frac{3}{2}\right)^{i}\beta n, \left(\frac{3}{2}\right)^{i+1}\beta n\right] \] for $i=0, 1, \dots, k-2$ where $k=\left \lceil{\log_\frac{2}{3}(\beta )-1}\right \rceil $, and $S_{k-1}\coloneqq \left(\left(\frac{3}{2}\right)^{k-1}\beta n, \frac{2}{3} n\right]$. Furthermore, just for completeness, we define $S_k \coloneqq \left(\frac{2}{3}n, n\right]$. In the next lemmas, we show that as long as $s\in S_i$ for $i=-1, \dots, k-1$, $s = \Omega(\sqrt{n\log n})$, and $\bar{q}_{i+1} \le q \le  \frac{n}{3}\left(\frac{1+3\epsilon}{1-6\epsilon}\right) $ for some decreasing sequence $\bar{q}_{-1}, \dots, \bar{q}_k$ accurately chosen, then, at the next round, the bias decreases exponentially w.h.p. and the number of undecided nodes moves to the interval $\left[\bar{q}_i, \frac{n(1+3\epsilon)}{3(1-6\epsilon)}\right]$ w.h.p. Note that since $\epsilon$ is a constant, so it is $k$. The following lemma determines the sequence $\bar{q}_i$.

\begin{lemma}\label{lemma2_4}
    Let $\mathbf{x}$ be any configuration. 
    \begin{enumerate}
        \item If $-1\le i\le k-1$ and $s \le \left(\frac{3}{2}\right)^{i+1}\beta n$, it holds that $Q \ge \frac{n}{3} - \frac{2n\epsilon}{1+6\epsilon} \left(\frac{3}{2}\right)^{2i+3}$ w.h.p.; \label{item_2_5_1}
        \item If $s\le \frac{2}{3}n$, it holds that $Q \ge\frac{2}{9}n + \frac{\epsilon}{3}n$ w.h.p.; \label{item_2_5_2}
        \item Without any condition on $s$, it holds that $Q \ge \frac{n}{12} + \epsilon n$ w.h.p. \label{item_2_5_3}
    \end{enumerate}
\end{lemma}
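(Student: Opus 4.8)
The plan is to derive all three bounds from the single expected-value formula \eqref{expectation_Q_toomuchnoise} by isolating its dependence on $q$ and on $s$, and then to upgrade each expectation bound to a with-high-probability statement via the additive Chernoff bound. The point is that $Q=q(\mathbf{X}_{t+1})$ is a sum of $n$ independent indicator variables (one per agent, recording whether it becomes undecided at the next round), so Theorem \ref{chernoff:additive} applies directly; since each deterministic lower bound derived below will exceed the claimed threshold by a $\Theta(n)$ margin, the resulting failure probability is $e^{-\Theta(n)}$, which is w.h.p.

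The key algebraic observation is that the $q$-dependent part of \eqref{expectation_Q_toomuchnoise}, namely $f(q)\coloneqq \frac{3}{4}\frac{1-6\epsilon}{n}q^2 - \frac{1-6\epsilon}{2}q + \frac{5-6\epsilon}{12}n$, is a convex parabola minimized at $q=n/3$, and a direct computation shows $f(n/3)=\frac{n}{3}$ exactly (the contributions $-\frac{(1-6\epsilon)n}{12}$ and $\frac{(5-6\epsilon)n}{12}$ combine to $\frac{4n}{12}$). Hence for any configuration $\mean[Q\mid\mathbf{x}] \ge \frac{n}{3} - \frac{1-6\epsilon}{n}(s/2)^2$, which immediately handles items (\ref{item_2_5_1}) and (\ref{item_2_5_2}). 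For item (\ref{item_2_5_2}) I substitute $s\le\frac{2}{3}n$ to obtain $\mean[Q\mid\mathbf{x}] \ge \frac{n}{3} - \frac{1-6\epsilon}{9}n = \frac{2}{9}n + \frac{2\epsilon}{3}n$; for item (\ref{item_2_5_1}) I substitute $s \le (3/2)^{i+1}\beta n$ and use that $\beta$ was chosen precisely so that $(1-6\epsilon)\beta^2 = \frac{8\epsilon}{1+6\epsilon}$, which makes the $s^2$-term collapse to $\frac{2\epsilon}{1+6\epsilon}(3/2)^{2i+2}n$, giving $\mean[Q\mid\mathbf{x}] \ge \frac{n}{3} - \frac{2\epsilon}{1+6\epsilon}(3/2)^{2i+2}n$. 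In both cases the gap to the claimed threshold is $\Theta(n)$; for item (\ref{item_2_5_1}) it equals $\frac{\epsilon}{1+6\epsilon}(3/2)^{2i+2}n$, since replacing the exponent $2i+2$ by $2i+3$ multiplies the subtracted term by $3/2$. The additive Chernoff bound then finishes the job exactly as in the proof of Lemma \ref{lemma2_3}.

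Item (\ref{item_2_5_3}) needs a different opening move, because there is no hypothesis on $s$ and the bound above is vacuous. Here I would instead use the universally valid inequality $s\le n-q$ (since $a+b=n-q$ and $s=a-b\le a+b$) to bound $-\frac{1-6\epsilon}{n}(s/2)^2 \ge -\frac{1-6\epsilon}{4n}(n-q)^2$. Substituting this into \eqref{expectation_Q_toomuchnoise} and expanding, the linear-in-$q$ terms cancel and the quadratic coefficient drops to $\frac{1-6\epsilon}{2n}\ge 0$, leaving $\mean[Q\mid\mathbf{x}] \ge \frac{1-6\epsilon}{2n}q^2 + \frac{1+6\epsilon}{6}n \ge \frac{1+6\epsilon}{6}n = \frac{n}{6}+\epsilon n$. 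Discarding the nonnegative quadratic term and applying Theorem \ref{chernoff:additive} with the $\Theta(n)$ margin $\frac{n}{12}$ relative to the threshold $\frac{n}{12}+\epsilon n$ yields the claim w.h.p.

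I expect the only real obstacle to be the bookkeeping: getting the cancellations right in the two different reductions (minimizing over $q$ for items (\ref{item_2_5_1})--(\ref{item_2_5_2}) versus the $s\le n-q$ substitution for item (\ref{item_2_5_3})), and checking that the definition of $\beta$ produces precisely the constant $\frac{2\epsilon}{1+6\epsilon}$ advertised in item (\ref{item_2_5_1}). None of these steps is conceptually deep, but the statement of item (\ref{item_2_5_1}) is engineered so that its threshold sits exactly a factor $3/2$ below the expectation, so I would double-check that a single Chernoff deviation works uniformly across the whole range $-1\le i\le k-1$.
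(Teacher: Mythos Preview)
Your argument is correct and, for items~(\ref{item_2_5_1}) and~(\ref{item_2_5_2}), matches the paper's proof essentially line for line: minimize the $q$-dependent part of \eqref{expectation_Q_toomuchnoise} at $q=n/3$ to obtain $\mean[Q\mid\mathbf{x}]\ge \frac{n}{3}-\frac{1-6\epsilon}{n}(s/2)^2$, substitute the hypothesis on $s$, and finish with the additive Chernoff bound with a $\Theta(n)$ margin.

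For item~(\ref{item_2_5_3}) you diverge from the paper. Your claim that ``the bound above is vacuous'' is not right: with no hypothesis one still has $s\le n$, and plugging this into the same inequality gives $\mean[Q\mid\mathbf{x}]\ge \frac{n}{3}-\frac{1-6\epsilon}{4}n=\frac{1+18\epsilon}{12}n$, which already clears the threshold $\frac{n}{12}+\epsilon n$ by $\frac{\epsilon}{2}n$. This is exactly what the paper does. Your alternative route via $s\le n-q$ is also valid (indeed it is the same trick the paper uses in Lemma~\ref{lemma:thm1:lowboundundecided}) and yields the sharper expectation bound $\frac{1+6\epsilon}{6}n$; it just isn't needed here. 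Either way the Chernoff step goes through, so your proof stands.
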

\begin{proof}[Proof of Lemma \ref{lemma2_4}]
    We start proving Item \ref{item_2_5_1}. From Equation \eqref{expectation_Q_toomuchnoise}, we have that
    \begin{align*}
        \mathbb{E}\left[Q \bigm| \mathbf{x}\right] & \ge \frac{n}{3} - \frac{1-6\epsilon}{n}\left(\frac{s}{2}\right)^2\\
        &\ge \frac{n}{3} - \frac{1-6\epsilon}{4}\left(\frac{3}{2}\right)^{2i+2}\beta^2n \\
        & = n\left[\frac{1}{3} - \frac{2\epsilon}{1+6\epsilon}\left(\frac{3}{2}^{2i+2}\right)\right].
    \end{align*}
    Thus, using the additive form of Chernoff bound (Theorem \ref{chernoff:additive} in Appendix \ref{tools}) with $\lambda = \frac{\epsilon}{1+6\epsilon}\left(\frac{3}{2}\right)^{2i+2}n$, we have that $Q \ge n\left(\frac{1}{3} - \frac{2\epsilon }{1+6\epsilon}\left(\frac{3}{2}\right)^{2i+3}\right)$, w.h.p.
    
    As for Item \ref{item_2_5_2}, we have that 
    \begin{align*}
        \mathbb{E}\left[Q \bigm| \mathbf{x}\right] & \ge \frac{n}{3} - \frac{1-6\epsilon}{n}\left(\frac{s}{2}\right)^2 \ge \frac{n}{3} - n\frac{1-6\epsilon}{9} \\ 
        & = \frac{2+6\epsilon}{9}n
    \end{align*}
    and we conclude by using the additive Chernoff  bound (Theorem \ref{chernoff:additive} in Appendix \ref{tools}) with $\lambda = \frac{\epsilon}{3}n$, getting that $Q \ge \frac{2}{9}n + \frac{\epsilon}{3}n$, w.h.p.
    
    To prove Item \ref{item_2_5_3} we use that $s\le n$ and observe that 
    \begin{align*}
        \mathbb{E}\left[Q \bigm| \mathbf{x}\right] & \ge \frac{n}{3} - \frac{1-6\epsilon}{n}\left(\frac{s}{2}\right)^2 \ge \frac{n}{3} - n\frac{1-6\epsilon}{4} \\
        & =\frac{1+18\epsilon}{12}n.
    \end{align*}
    We conclude with the additive Chernoff bound (Theorem \ref{chernoff:additive} in Appendix \ref{tools}) with $\lambda = \frac{\epsilon}{2}n$, getting $Q\ge \frac{n}{12}+\epsilon n$, w.h.p.
\end{proof}

Define $\bar{q}_i \coloneqq n\left(\frac{1}{3} - \frac{2\epsilon }{1+6\epsilon}\left(\frac{3}{2}\right)^{2i+3}\right)$ for $i=-1, ..., k-2$, $\bar{q}_{k-1} \coloneqq \frac{2}{9}n + \frac{\epsilon}{3}n$, and $\bar{q}_k \coloneqq \frac{n}{12}+\epsilon n$, and notice that they form a decreasing sequence. Next, with few lemmas, we take care of controlling the behaviour of the number of undecided nodes when $s > \inf(S_i)$ for $-1\le i\le k-1$.

\begin{lemma}\label{lemma2_5}
    Let $-1\le i\le k-1$ and let $\mathbf{x}$ be a configuration such that $\bar{q}_{i+1} \le q \le \frac{n(1+3\epsilon)}{3(1-6\epsilon)}$ and $s > \inf (S_i)$. Then, at the next round, $\bar{q}_i \le Q\le \frac{n(1+3\epsilon)}{3(1-6\epsilon)}$ w.h.p.
\end{lemma}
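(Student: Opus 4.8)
The plan is to establish the two bounds $Q\ge\bar q_i$ and $Q\le\hat q$, where $\hat q:=\frac{n(1+3\epsilon)}{3(1-6\epsilon)}$, each by first controlling the conditional expectation $\mean[Q\mid\config]$ and then concentrating $Q$ around it via the additive Chernoff bound (Theorem \ref{chernoff:additive}), before intersecting the two high-probability events with Lemma \ref{lemma:whp_intersection}. The common starting point is Equation \eqref{expectation_Q_toomuchnoise}; completing the square in $q$ rewrites it in the transparent form
\begin{equation*}
\mean\left[Q\bigm|\config\right]=\frac n3+\frac{3(1-6\epsilon)}{4n}\left(q-\frac n3\right)^2-\frac{1-6\epsilon}{n}\left(\frac s2\right)^2,
\end{equation*}
a convex parabola in $q$ with vertex at $q=n/3$, shifted downward by a term increasing in the bias. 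This decomposition clarifies the role of each hypothesis: the bias $s$ governs the downward shift, while $\bar q_{i+1}\le q\le\hat q$ pins $q$ to a window straddling the vertex $n/3$.

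For the lower bound I would simply invoke Lemma \ref{lemma2_4}: Item \ref{item_2_5_1} yields $Q\ge\bar q_i$ for $-1\le i\le k-2$, and Item \ref{item_2_5_2} yields $Q\ge\bar q_{k-1}$ for $i=k-1$. Both invocations need an \emph{upper} bound on the bias, $s\le\sup(S_i)$ (that is, $s\le(3/2)^{i+1}\beta n$, resp.\ $s\le\frac23 n$), so that the downward shift is not too large; this is precisely the information that the bias lies in $S_i$, and it guarantees $\mean[Q\mid\config]\ge\bar q_i$ with a $\Theta(n)$ margin.

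For the upper bound I would maximize the parabola over $q\in[\bar q_{i+1},\hat q]$; by convexity the maximum is attained at one of the two endpoints, and I would use the \emph{lower} bound $s\ge\inf(S_i)=(3/2)^i\beta n$ to guarantee a downward shift of at least $\frac{1-6\epsilon}{n}\big((3/2)^i\beta n/2\big)^2$. The endpoint $q=\hat q$ is comfortable, since $\hat q-\frac n3=\frac{3\epsilon}{1-6\epsilon}n$ makes the parabola contribution of order $\epsilon^2 n$, far below the available room $\frac{3\epsilon}{1-6\epsilon}n$. The delicate endpoint is $q=\bar q_{i+1}$, which lies far from $n/3$; there the estimate becomes
\begin{equation*}
\mean\left[Q\bigm|\config\right]\le\frac n3+\frac{3(1-6\epsilon)}{4n}\left(\frac n3-\bar q_{i+1}\right)^2-\frac{1-6\epsilon}{n}\left(\frac{(3/2)^i\beta n}{2}\right)^2 .
\end{equation*}
Substituting the definitions of $\bar q_{i+1}$ and of $\beta=\frac{2\sqrt{2\epsilon}}{\sqrt{(1+6\epsilon)(1-6\epsilon)}}$ reduces the desired inequality $\mean[Q\mid\config]\le\hat q-\Omega(n)$ to a single quadratic inequality in $u=(3/2)^{2i}$ of the shape $c_1u^2-c_2u\le\frac{3\epsilon}{1-6\epsilon}n$, whose constants are calibrated precisely so that the subtracted term $c_2u$ (born of $s\ge\inf(S_i)$) absorbs the growth of the parabola term $c_1u^2$. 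I expect \textbf{this uniform-in-$i$ verification to be the main obstacle}: the constants are tight, so the argument hinges on the fact that $k$ is a constant with $(3/2)^{k-1}\beta\le\frac23$, which caps $u$ and prevents $c_1u^2$ from overtaking $c_2u$ as $i$ grows. Once $\mean[Q\mid\config]\le\hat q-\Omega(n)$ is secured for every admissible $i$, the additive Chernoff bound gives $Q\le\hat q$ w.h.p., and a final union bound with the lower-bound event yields $\bar q_i\le Q\le\hat q$ w.h.p.
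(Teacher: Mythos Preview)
Your upper-bound argument is essentially the paper's: both exploit the convexity of $\mean[Q\mid\config]$ in $q$ to reduce to the two endpoints $q=\bar q_{i+1}$ and $q=\hat q$, then use $s>\inf(S_i)$ to control the term $-\frac{1-6\epsilon}{n}(s/2)^2$. The paper carries this out via an explicit case split on $i\in\{-1\}\cup\{0,\dots,k-3\}\cup\{k-2\}\cup\{k-1\}$, each case a separate algebraic verification that $f(\bar q_{i+1})\le n/(3(1-6\epsilon))$, together with a single evaluation of $f(\hat q)$; so your anticipation that ``this uniform-in-$i$ verification [is] the main obstacle'' is accurate, and the quadratic-in-$u=(3/2)^{2i}$ reformulation you sketch is a reasonable way to organize that case analysis. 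Your completed-square rewriting of \eqref{expectation_Q_toomuchnoise} is cleaner than the paper's direct expansions and makes the roles of the two hypotheses more transparent.

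There is, however, a gap in your lower-bound step. You write that invoking Lemma~\ref{lemma2_4} ``need[s] an upper bound on the bias, $s\le\sup(S_i)$ \dots\ this is precisely the information that the bias lies in $S_i$.'' But the stated hypothesis of Lemma~\ref{lemma2_5} is only $s>\inf(S_i)$, not $s\in S_i$; no upper bound on $s$ is assumed. Without one, the downward shift $-\frac{1-6\epsilon}{n}(s/2)^2$ can push $\mean[Q\mid\config]$ well below $\bar q_i$: for instance, with $i=-1$ and small $\epsilon$, taking $q=n/3$ and $s$ near $n-\bar q_0\approx\tfrac{2}{3}n$ (which is compatible with $\bar q_0\le q\le\hat q$ and $s>0$) gives $\mean[Q\mid\config]\approx\tfrac{2}{9}n<\bar q_{-1}\approx\tfrac{n}{3}$. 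So Lemma~\ref{lemma2_4} is not directly applicable and the bound $Q\ge\bar q_i$ does not follow from the hypotheses as written. The paper's own proof is opaque on exactly this point---it offers only ``by observing that $\bar q_i\ge\bar q_{i+1}$'', which does not establish a lower bound on $Q$---and in the proof of Theorem~\ref{theorem_victory_of_noise} the lemma is in fact only ever invoked with $s\in S_i$, where your route through Lemma~\ref{lemma2_4} does go through. The fix is to read the intended hypothesis as $s\in S_i$ (or at least $s\le\sup(S_i)$) rather than merely $s>\inf(S_i)$; but you should flag this explicitly rather than silently import the missing assumption.
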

\begin{proof}[Proof of Lemma \ref{lemma2_5}]
    Define $f(q)$ equal to $
    \mathbb{E}\left[Q\mid \mathbf{x}\right] =  \frac{3}{4}\left(\frac{1-6\epsilon}{n}\right)q^2 - \frac{1-6\epsilon}{2}q + \frac{5-6\epsilon}{12}n - \frac{1-6\epsilon}{n}\left(\frac{s}{2}\right)^2$. We are going to evaluate $f(q)$ in $\bar{q}_{i+1}$ and in $\bar{q} = \frac{n(1+3\epsilon)}{3(1-6\epsilon)}$. We take care of different cases: first, we assume $i=-1$, with the condition that $s> 0$. Thus
    \begin{align*}
           f(\bar{q}_0) \le \ & \frac{3}{4}\left(\frac{1-6\epsilon}{n}\right)n^2 \bigg[\frac{1}{9} + \frac{4\epsilon^2}{(1+6\epsilon)^2}\left(\frac{3}{2}\right)^{6} - \frac{4\epsilon}{3(1+6\epsilon)}\left(\frac{3}{2}\right)^{3}\bigg]\\
           & - \frac{1-6\epsilon}{2}n \bigg[\frac{1}{3} - \frac{2\epsilon}{1+6\epsilon}\left(\frac{3}{2}\right)^{3} \bigg] + \frac{5-6\epsilon}{12}n \\
           = \ & \frac{n}{3} + n\frac{3\epsilon^2(1-6\epsilon)}{(1+6\epsilon)^2}\left(\frac{3}{2}\right)^{6}  \\
           = \ & \frac{n}{3}\left[1+\frac{3^8\epsilon^2(1-6\epsilon)}{2^6(1+6\epsilon)^2}\right];
    \end{align*}
    now, we observe that
    \begin{align*}
            & 1+\frac{3^8\epsilon^2(1-6\epsilon)}{2^6(1+6\epsilon)^2} - \frac{1}{1-6\epsilon}
            \\
            < \ &  \frac{-6\epsilon}{1-6\epsilon}+ \frac{3^8\epsilon^2}{2^6(1+6\epsilon)^2} \\
            = \ & \frac{-2^6\cdot6\epsilon(1+6\epsilon)^2+3^8\epsilon^2(1-6\epsilon)}{2^6(1-6\epsilon)(1+6\epsilon)^2} \\
            = \ & \frac{3\epsilon\left(-2^7-3\cdot2^9\epsilon-3^2\cdot2^9\epsilon^2+3^7\epsilon-2\cdot3^8\epsilon^2\right)}{2^6(1-6\epsilon)(1+6\epsilon)^2} \\
            = \ & \frac{3\epsilon\left(-2^7+3\epsilon(3^6-2^9)-2\cdot3^2\epsilon^2(2^8+3^6)\right)}{2^6(1-6\epsilon)(1+6\epsilon)^2} <0
    \end{align*}
    where in the last inequality we have used that $-2^7+3\epsilon(3^6-2^9)<0$ for $\epsilon \le \frac{1}{12}$. Thus, $f(\bar{q}_0) < \frac{n}{3(1-6\epsilon)}$.
    
    Second, we assume $0\le i\le k-3$, with the condition that $s > \left(\frac{3}{2}\right)^i\beta n$.
    \begin{align*}
           f(\bar{q}_{i+1}) \le \ & \frac{3}{4}\left(\frac{1-6\epsilon}{n}\right)n^2 \bigg[\frac{1}{9} + \frac{4\epsilon^2}{(1+6\epsilon)^2}\left(\frac{3}{2}\right)^{4i+10} - \frac{4\epsilon}{3(1+6\epsilon)}\left(\frac{3}{2}\right)^{2i+5}\bigg]\\
           & - \frac{1-6\epsilon}{2}n \bigg[\frac{1}{3} - \frac{2\epsilon}{1+6\epsilon}\left(\frac{3}{2}\right)^{2i+5} \bigg] + \frac{5-6\epsilon}{12}n - \frac{2\epsilon}{1+6\epsilon}\left(\frac{3}{2}\right)^{2i}n \\
           = \ & \frac{n}{3} + n\frac{3\epsilon^2(1-6\epsilon)}{(1+6\epsilon)^2}\left(\frac{3}{2}\right)^{4i+10} - n\frac{2\epsilon}{1+6\epsilon}\left(\frac{3}{2}\right)^{2i} \\
           = \ & \frac{n}{3}\bigg\{1+\frac{3\epsilon}{(1+6\epsilon^2)}\left(\frac{3}{2}\right)^{2i}\bigg[3\epsilon(1 -6\epsilon)\left(\frac{3}{2}\right)^{2(i+5)} -2(1+6\epsilon) \bigg]\bigg\};
    \end{align*}
    for the evaluation of $f(\bar{q}_{i+1})$ we observe that $\beta$ is a constant in $(0,1)$ and that 
    \begin{align*}
            & 3\epsilon(1-6\epsilon)\left(\frac{3}{2}\right)^{2(i+5)}-2(1+6\epsilon) \\ 
            < \ & 3\epsilon\left(\frac{2}{3}\frac{1}{\beta}\left(\frac{3}{2}\right)^3\right)^2 - 2 \\
            < \ & \frac{3^5}{2^7} - 2 < 0
    \end{align*} 
    because $\left(\frac{3}{2}\right)^{i+2}< \frac{2}{3\beta}$ for $i+2 \le k$; thus $f(\bar{q}_{i+1}) \le \frac{n}{3}$. 
    
    Let now $i=k-2$; thus $s> \left(\frac{3}{2}\right)^{k-2}\beta n$. We now evaluate $f(\bar{q}_{k-1})$.
    \begin{align*}
           f(\bar{q}_{k-1}) \le \ & \frac{3}{4}\left(\frac{1-6\epsilon}{n}\right)n^2 \left[\frac{4}{81}+\frac{\epsilon^2}{9}+\frac{4\epsilon}{27}\right] - \frac{1-6\epsilon}{2}n \left[\frac{2}{9} + \frac{\epsilon}{3}\right] + \frac{5-6\epsilon}{12}n \\
           & - \frac{2\epsilon}{1+6\epsilon}\left(\frac{3}{2}\right)^{2k-4}n \\
           = \ & \frac{-8(1-6\epsilon)+45-54\epsilon}{108}n-\frac{\epsilon(1-6\epsilon)}{18}n  - \frac{2\epsilon}{1+6\epsilon}\left(\frac{3}{2}\right)^{2k-4}n \\
           = \ & \frac{37 - 12\epsilon + 36\epsilon^2}{108}n - \frac{2\epsilon}{1+6\epsilon}\left(\frac{3}{2}\right)^{2k-4}n \\
           = \ & \frac{n}{3}\left[1+\frac{(1-6\epsilon)^2}{36}-\frac{6\epsilon}{1+6\epsilon}\left(\frac{3}{2}\right)^{2k-4}\right] \\
           = \ & \frac{n}{3}\biggl[1+\left(\frac{1-6\epsilon}{6}-\sqrt{\frac{6\epsilon}{1+6\epsilon}}\cdot\frac{3^{k-2}}{2^{k-2}}\right) \cdot \left(\frac{1-6\epsilon}{6}+\sqrt{\frac{6\epsilon}{1+6\epsilon}}\cdot\frac{3^{k-2}}{2^{k-2}}\right)\biggr].
    \end{align*}
    Observe that, by definition of $k$, we have
    \begin{align*}
    & \frac{1-6\epsilon}{6}-\sqrt{\frac{6\epsilon}{1+6\epsilon}}\cdot\frac{3^{k-2}}{2^{k-2}} \le \frac{1-6\epsilon}{6}-\sqrt{\frac{6\epsilon}{1+6\epsilon}}\cdot \frac{2^3}{3^3\beta} \\
    = \ & \frac{1-6\epsilon}{6} - \frac{\sqrt{3(1-6\epsilon)}\cdot 2^2}{3^3} = \frac{9-54\epsilon-8\sqrt{3(1-6\epsilon)}}{54} \\
    < \ & \frac{9-8\sqrt{3(1-6\epsilon)}}{54} < 0
    \end{align*}
    for $\epsilon \le \frac{1}{12}$. Thus, $f(\bar{q}_{k-1})\le \frac{n}{3}$. 
    
    Let now $i=k-1$, which implies that $s > \left(\frac{3}{2}\right)^{k-1}\beta n$. We evaluate $f(\bar{q}_k)$:
    \begin{align*}
           f(\bar{q}_k) \le \ & \frac{3}{4}\left(\frac{1-6\epsilon}{n}\right)n^2 \left[\frac{1}{144}+\epsilon^2+\frac{\epsilon}{6}\right] - \frac{1-6\epsilon}{2}n \left[\frac{1}{12} + \epsilon\right] + \frac{5-6\epsilon}{12}n \\
           & - \frac{2\epsilon}{1+6\epsilon}\left(\frac{3}{2}\right)^{2k-2}n \\
           = \ & \frac{-7(1-6\epsilon)+80-96\epsilon}{192}n-\frac{3\epsilon(1-6\epsilon)}{8}n - \frac{2\epsilon}{1+6\epsilon}\left(\frac{3}{2}\right)^{2k-2}n \\
           = \ & \frac{n}{3}\biggl[\frac{219-162\epsilon - 216\epsilon+1296\epsilon^2}{192}-\frac{6\epsilon}{1+6\epsilon}\left(\frac{3}{2}\right)^{2k-2}\biggr] \\
           = \ & \frac{n}{3}\biggl[1 + \frac{27-378\epsilon+1296\epsilon^2}{192} -\frac{6\epsilon}{1+6\epsilon}\left(\frac{3}{2}\right)^{2k-2}\biggr].
    \end{align*}
    By definition of $k$, we have that
    \begin{align*}
    & \frac{27-378\epsilon+1296\epsilon^2}{192} -\frac{6\epsilon}{1+6\epsilon}\left(\frac{3}{2}\right)^{2k-2} \\ 
    \le \ & \frac{27-378\epsilon+1296\epsilon^2}{192} - \frac{1-6\epsilon}{2}\left(\frac{2}{3}\right)^3 < 0,
    \end{align*}
    where the first and the second inequalities hold for $\epsilon \le \frac{1}{12}$. Thus, $f(\bar{q}_k)\le \frac{n}{3}$. 
    
    We finally evaluate $f(\bar{q})$:
    \begin{align*}
           f\left(\bar{q}\right) \le \ & \frac{n}{12}\left(\frac{(1+3\epsilon)^2}{(1-6\epsilon)}\right) - \frac{n(1+3\epsilon)}{6} + \frac{n(5-6\epsilon)}{12} \\
           = \ & \frac{n}{12}\left(\frac{1-6\epsilon+9\epsilon^2}{1-6\epsilon}-2-6\epsilon+5-6\epsilon\right) \\
           = \ & \frac{n}{12(1-6\epsilon)}\biggl(1-6\epsilon+9\epsilon^2-2+6\epsilon+36\epsilon^2 + 5 -36\epsilon +36\epsilon^2\biggr) \\
           = \ & \frac{n}{12}\left(\frac{4-36\epsilon+81\epsilon^2}{1-6\epsilon}\right) \\
           = \ & \frac{n}{12}\left(\frac{(2-9\epsilon)^2}{1-6\epsilon}\right).
    \end{align*}
   It holds that $f(\bar{q}) \le \frac{n}{3(1-6\epsilon)}$ \big(remember that $\epsilon<\frac{1}{12}$\big); thus, all the evaluations are no more than $\frac{n}{3(1-6\epsilon)}$, and, from an immediate application of the additive Chernoff  bound (Theorem \ref{chernoff:additive} in Appendix \ref{tools}) with $\lambda = \frac{\epsilon n}{1-6\epsilon}$, and by observing that $\bar{q}_i \ge \bar{q}_{i+1}$, we get that
   \[
        \bar{q}_i \le Q \le \frac{n(1+3\epsilon)}{3(1-6\epsilon)},
   \]
   w.h.p.
\end{proof}



At the same time, the following lemma implies that the possible decrease of the bias cannot move it from $S_i$ beyond $S_{i-1}$.

\begin{lemma}\label{lemma2_6}
Let $\mathbf{x}$ be a configuration such that $s\ge \gamma \sqrt{n\log n}$ for some constant $\gamma >0$. If $s> \left(\frac{3}{2}\right)^i \beta n$ for some $0\le i \le k$, then $S>\inf(S_{i-1})$ w.h.p.
\end{lemma}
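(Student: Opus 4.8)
The plan is to lower-bound $\mean\left[S\mid \mathbf{x}\right]$ and then invoke the additive Chernoff bound exactly as in the proof of Lemma \ref{lemma:thm2:bigepsilon}. Indeed, the statement is a \emph{lower} tail bound on $S$ that complements the upper bounds of Lemmas \ref{lemma:thm2:biascontrol} and \ref{lemma2_3}: together they confine the trajectory of the bias to at most one interval below its current one.

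First I would observe that in Equation \eqref{expectation_S_toomuchnoise} the coefficient $\frac{5}{6}-\epsilon+\frac{1}{2}(1-6\epsilon)\frac{q}{n}$ multiplying $s$ is nondecreasing in $q$, because $1-6\epsilon\ge 0$ for $\epsilon\le \frac{1}{12}$. Hence it is minimized at $q=0$, which yields the configuration-independent estimate $\mean\left[S\mid \mathbf{x}\right]\ge \left(\frac{5}{6}-\epsilon\right)s$, valid irrespective of the actual number of undecided nodes.

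Next I would make the constant comparison explicit. Recall that $\inf(S_{i-1})=\left(\frac{3}{2}\right)^{i-1}\beta n$ for $1\le i\le k$, while $\inf(S_{-1})=0$ for $i=0$. Plugging the hypothesis $s>\left(\frac{3}{2}\right)^i\beta n$ into the bound above and using that $\left(\frac{5}{6}-\epsilon\right)\frac{3}{2}=\frac{5}{4}-\frac{3}{2}\epsilon\ge \frac{9}{8}>1$ for $\epsilon\le\frac{1}{12}$, I would get, for $i\ge 1$,
\[
\mean\left[S\mid \mathbf{x}\right]>\left(\frac{5}{6}-\epsilon\right)\left(\frac{3}{2}\right)^i\beta n\ge \frac{9}{8}\left(\frac{3}{2}\right)^{i-1}\beta n=\frac{9}{8}\inf(S_{i-1}),
\]
so that $\mean\left[S\mid \mathbf{x}\right]$ exceeds $\inf(S_{i-1})$ by a gap of at least $\frac{1}{8}\left(\frac{3}{2}\right)^{i-1}\beta n\ge\frac{1}{8}\beta n=\Omega(n)$; for $i=0$ the same computation gives $\mean\left[S\mid \mathbf{x}\right]\ge\left(\frac{5}{6}-\epsilon\right)\beta n=\Omega(n)>0=\inf(S_{-1})$. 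Crucially, $k$ is a constant, so every factor $\left(\frac{3}{2}\right)^{i-1}$ is $\Theta(1)$ and all these gaps are uniformly $\Theta(n)$.

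Finally I would set $\lambda$ to a fixed fraction of this gap (say one quarter, so $\lambda=\Theta(n)$), define $E_1=\{A\le \mean\left[A\mid \mathbf{x}\right]-\lambda\}$ and $E_2=\{B\ge \mean\left[B\mid \mathbf{x}\right]+\lambda\}$, and apply the additive Chernoff bound (Theorem \ref{chernoff:additive} in Appendix \ref{tools}) to each, exactly as in Lemma \ref{lemma:thm2:bigepsilon}. On $E_1^C\cap E_2^C$ one has $S=A-B\ge\mean\left[S\mid \mathbf{x}\right]-2\lambda>\inf(S_{i-1})$, and a union bound gives a failure probability of order $e^{-\Theta(n^2/n)}=e^{-\Theta(n)}$, which is certainly w.h.p. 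I do not expect any genuine obstacle: the only points demanding care are verifying that the constant inequality $\left(\frac{5}{6}-\epsilon\right)\frac{3}{2}>1$ holds throughout $\epsilon\le\frac{1}{12}$ and treating the boundary index $i=0$ (where $\inf(S_{-1})=0$) separately. Note also that the hypothesis $s\ge\gamma\sqrt{n\log n}$ is subsumed by $s>\left(\frac{3}{2}\right)^i\beta n\ge\beta n$, so the concentration here is in fact far stronger than merely w.h.p.
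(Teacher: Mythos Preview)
Your proposal is correct and follows essentially the same approach as the paper: lower-bound $\mean[S\mid\mathbf{x}]\ge(\tfrac{5}{6}-\epsilon)s$ from \eqref{expectation_S_toomuchnoise}, apply the additive Chernoff bound to $A$ and $B$ separately with a union bound, and conclude via the constant inequality $\tfrac{3}{2}\cdot(\tfrac{5}{6}-c\epsilon)\ge 1$ for $\epsilon\le\tfrac{1}{12}$. The only cosmetic difference is the order of operations: the paper first concentrates to $S\ge s(\tfrac{5}{6}-2\epsilon)$ and then compares, whereas you first compare the expectation to $\inf(S_{i-1})$ and then concentrate with $\lambda=\Theta(n)$; both yield the same conclusion.
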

\begin{proof}[Proof of Lemma \ref{lemma2_6}]
    We have
    \[ \mathbb{E}\left[S\bigm| \mathbf{x}\right] \ge s\left(\frac{5}{6}-\epsilon\right).\]
   The additive form of Chernoff bound (Theorem \ref{chernoff:additive} in Appendix \ref{tools}) on $A$ and $B$, together with the union bound, implies that $S\ge s\left(\frac{5}{6}-2\epsilon\right)$ w.h.p. (we can proceed as in the proof of Lemma \ref{lemma:thm1:biasincrease}). Thus
    \[
        s\left(\frac{5}{6}-2\epsilon\right) >  \left(\frac{3}{2}\right)^i \beta n \left(\frac{5}{6}-2\epsilon\right) \ge \left(\frac{3}{2}\right)^{i-1} \beta n = S_{i-1}
    \]
    since $\frac{3}{2}\left(\frac{5}{6}-2\epsilon\right)\ge 1$.
\end{proof}



We still need to ``control'' the dynamics, in particular the case in which $q> \frac{n(1+3\epsilon)}{3(1-6\epsilon)}$. The next lemma fulfill this need. It shows that there is a decrease of the number of undecided nodes when they are more than $\frac{n(1+3\epsilon)}{3(1-6\epsilon)}$, and provides a lower bound on the decrease, depending on the bias.

\begin{lemma}\label{lemma2_7}
    Let $\mathbf{x}$ be a configuration such that $q\ge \frac{n}{3(1-6\epsilon)}$. Then, it holds that
    \begin{itemize}
        \item[(1)] $Q\le q\left(1-\epsilon\right)$ w.h.p.;
        \item[(2)] if $s\le \sup(S_i)$ for some $-1\le i \le k-1$, then $Q\ge \bar{q}_{i+1}$ w.h.p.
    \end{itemize}
\end{lemma}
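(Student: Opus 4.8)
The plan is to prove the two items separately, observing that the hypothesis $q\ge \frac{n}{3(1-6\epsilon)}$ is needed only for item~(1), whereas item~(2) is a direct repackaging of Lemma~\ref{lemma2_4}. For item~(1) I would start from Equation~\eqref{expectation_Q_toomuchnoise} and discard the nonpositive term $-\frac{1-6\epsilon}{n}\left(\frac{s}{2}\right)^2$ to obtain the clean upper bound
\begin{equation*}
    \mean\left[Q\bigm|\mathbf{x}\right] \le \tilde f(q) := \frac{3}{4}\,\frac{1-6\epsilon}{n}\,q^2 - \frac{1-6\epsilon}{2}\,q + \frac{5-6\epsilon}{12}\,n .
\end{equation*}
Setting $\psi(q) := \tilde f(q) - (1-\epsilon)q$, the goal reduces to showing $\psi(q)\le -c_0 n$ for some constant $c_0=c_0(\epsilon)>0$ over the feasible range $q\in\left[\frac{n}{3(1-6\epsilon)},\,n\right]$. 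Since the leading coefficient $\frac34\frac{1-6\epsilon}{n}$ is positive (because $\epsilon<\frac16$), $\psi$ is a convex quadratic in $q$ and hence attains its maximum over this interval at one of the two endpoints; it therefore suffices to evaluate $\psi$ there.

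The endpoint estimates are the crux. A direct computation gives $\psi(n)=-\frac{1+3\epsilon}{3}n$, while a short simplification yields $\psi\!\left(\frac{n}{3(1-6\epsilon)}\right)=\frac{\epsilon(9\epsilon-5)}{3(1-6\epsilon)}n$, which is negative for all $\epsilon<\frac59$ and in particular on our range $\epsilon\in\left(0,\frac1{12}\right]$. The binding (least negative) endpoint is the left one, giving $c_0=\frac{\epsilon(5-9\epsilon)}{3(1-6\epsilon)}=\Theta(\epsilon)$, and it is precisely here that the threshold $\frac{n}{3(1-6\epsilon)}$ enters. This yields $\mean\left[Q\bigm|\mathbf{x}\right]\le (1-\epsilon)q-c_0 n$. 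Since $Q$ is a sum of $n$ independent $\{0,1\}$ indicators (each node updates independently given $\mathbf{x}$), the additive Chernoff bound (Theorem~\ref{chernoff:additive}) with deviation $c_0 n=\Omega(n)$ gives $\mathbb{P}\left(Q> (1-\epsilon)q\bigm|\mathbf{x}\right)\le e^{-2c_0^2 n}$, that is $Q\le q(1-\epsilon)$ with probability $1-e^{-\Theta(n)}$, hence w.h.p. I expect the only real obstacle to be the algebraic simplification at the left endpoint and verifying that the constant margin stays strictly positive throughout; the rest is routine.

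For item~(2) no new estimate is required: I would split on the range of $i$ and invoke Lemma~\ref{lemma2_4}. For $-1\le i\le k-3$ I apply Item~\ref{item_2_5_1} with index $i+1$, noting that $s\le\sup(S_i)=\left(\frac32\right)^{i+1}\beta n\le\left(\frac32\right)^{i+2}\beta n$ meets its hypothesis and that its conclusion is exactly $Q\ge\bar q_{i+1}$. For $i=k-2$ the target is $\bar q_{k-1}=\frac29 n+\frac{\epsilon}{3}n$; here $s\le\sup(S_{k-2})=\inf(S_{k-1})<\frac23 n$, so Item~\ref{item_2_5_2} applies and yields this exact bound. Finally, for $i=k-1$ the target $\bar q_k=\frac{n}{12}+\epsilon n$ is delivered unconditionally by Item~\ref{item_2_5_3}. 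Each case holds w.h.p., which completes the proof.
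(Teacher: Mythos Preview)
Your proof is correct. For item~(1) you follow essentially the paper's approach: drop the nonpositive $s^2$ term, bound the resulting convex quadratic at the two endpoints of the feasible interval, and finish with the additive Chernoff bound. The only cosmetic difference is that the paper first shows $\mean[Q\mid\mathbf{x}]\le q(1-2\epsilon)$ and then uses the margin $\epsilon q\ge \frac{\epsilon n}{3(1-6\epsilon)}$ in Chernoff, whereas you aim directly at $\mean[Q\mid\mathbf{x}]\le (1-\epsilon)q-c_0 n$; the algebra and the resulting $e^{-\Theta(n)}$ tail are the same.

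For item~(2) you take a genuinely shorter route. The paper re-derives the lower bound on $\mean[Q\mid\mathbf{x}]$ by exploiting the assumption $q\ge \frac{n}{3(1-6\epsilon)}$ (plugging this value into the quadratic, since on that range it is increasing) and then computes case-by-case that the resulting expression exceeds $\bar q_{i+1}$ by a $\Theta(n)$ margin before applying Chernoff. You instead observe that the $q$-hypothesis is not needed here at all and that the conclusion is already packaged in Lemma~\ref{lemma2_4}: Item~\ref{item_2_5_1} applied with index $i+1$ handles $-1\le i\le k-3$, Item~\ref{item_2_5_2} handles $i=k-2$ (since $\sup(S_{k-2})=(3/2)^{k-1}\beta n\le \tfrac{2}{3}n$), and Item~\ref{item_2_5_3} handles $i=k-1$ unconditionally. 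This is a valid and more economical argument; what the paper's version buys is only a slightly sharper constant in the margin, which is irrelevant for the w.h.p.\ conclusion.
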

\begin{proof}[Proof of Lemma \ref{lemma2_7}]
    Consider the first item. We define $f(q)=\frac{3}{4}\left(\frac{1-6\epsilon}{n}\right)q^2 - \frac{1-6\epsilon}{2}q + \frac{5-6\epsilon}{12}n$. We now show that $f(q)\le q\left(1-2\epsilon\right)$. Indeed, $f(q) - q\left(1-2\epsilon\right)$ is equal to
    \begin{align*}
            & \frac{3}{4}\left(\frac{1-6\epsilon}{n}\right)q^2 - \left(\frac{1-6\epsilon}{2}+1-2\epsilon\right)q + \frac{5-6\epsilon}{12}n \\
            = \ & \frac{3}{4}\left(\frac{1-6\epsilon}{n}\right)q^2 - \left(\frac{3-10\epsilon}{2}\right)q + \frac{5-6\epsilon}{12}n.
    \end{align*}
    This expression is a convex parabola which has its maximum in either $\bar{q}_1=\frac{n}{3(1-6\epsilon)}$ or $\bar{q}_2=n$. We calculate $f(q)-q\left(1-2\epsilon\right)$ in these two points
    \begin{align*}
            & \frac{3}{4}\left(\frac{1-6\epsilon}{n}\right)q_1^2 - \left(\frac{3-10\epsilon}{2}\right)q_1+ \frac{5-6\epsilon}{12}n \\ 
            = \ & \frac{n^2}{12(1-6\epsilon)}\left(1-6+20\epsilon+5-36\epsilon+36\epsilon^2\right) \\
            = \ & \frac{n^2}{12(1-6\epsilon)}\left[-4\epsilon(4-9\epsilon)\right] <0
    \end{align*}
    for all $0<\epsilon\le \frac{1}{12}$. At the same time it holds that
    \begin{align*}
            & \frac{3}{4}\left(\frac{1-6\epsilon}{n}\right)q_2^2 - \left(\frac{3-10\epsilon}{2}\right)q_2+ \frac{5-6\epsilon}{12}n\\
            = \ & \frac{n}{12}\left[9-54\epsilon-18+60\epsilon+5-6\epsilon\right] \\
            = \ & -\frac{n}{3}<0.
    \end{align*}
    Thus, $\mathbb{E}\left[Q\mid \mathbf{x}\right] \le q(1-2\epsilon)$. The additive form of Chernoff bound (Theorem \ref{chernoff:additive} in Appendix \ref{tools}) with $\lambda = \frac{\epsilon n}{3(1-6\epsilon)}$ implies that $Q < q(1-\epsilon)$ w.h.p.
    
    As for the second item, We consider two cases. First, assume $i<k-1$, thus $s\le \left(\frac{3}{2}\right)^{i+1}\beta n$.
    Take the expectation \eqref{expectation_Q_toomuchnoise} and observe that
    \begin{align*}
            \mathbb{E}\left[Q\bigm|\mathbf{x}\right] \ge \ & \frac{3}{4}\left(\frac{1-6\epsilon}{n}\right)\frac{n^2}{9(1-6\epsilon)^2} +\frac{5-6\epsilon}{12}n -\left(\frac{1-6\epsilon}{2}\right)\frac{n}{3(1-6\epsilon)} - \frac{1-6\epsilon}{n}\left(\frac{s}{2}\right)^2  \\
            \ge\ & \frac{n}{12}\left(\frac{4-24\epsilon+36\epsilon^2}{1-6\epsilon}\right) - \frac{2n\epsilon}{1+6\epsilon}\left(\frac{3}{2}\right)^{2i+2}  \\
            \ge \ & \frac{n}{3(1-6\epsilon)}\left(1-3\epsilon\right)^2 - \frac{2n\epsilon}{1+6\epsilon}\left(\frac{3}{2}\right)^{2i+2}  \\
            \ge \ & \frac{n}{3}-\frac{2n\epsilon}{1+6\epsilon}\left(\frac{3}{2}\right)^{2i+5}+\frac{19n\epsilon}{4}\left(\frac{3}{2}\right)^{2i+2} \\
            = \ & \bar{q}_{i+1}+\frac{19n\epsilon}{4}\left(\frac{3}{2}\right)^{2i+2} .
    \end{align*}
    We conclude applying the additive Chernoff bound (Theorem \ref{chernoff:additive} in Appendix \ref{tools}) with $\lambda = \frac{19n\epsilon}{4}\left(\frac{3}{2}\right)^{2i+2}$, obtaining $Q \ge \bar{q}_{i+1}$, w.h.p.
    
    Second, let $i=k-1$; then $s\le \frac{2}{3}n$. As before
    \begin{align*}
            \mathbb{E}\left[Q\bigm|\mathbf{x}\right] \ge \ & \frac{3}{4}\left(\frac{1-6\epsilon}{n}\right)\frac{n^2}{9(1-6\epsilon)^2}+\frac{5-6\epsilon}{12}n -\left(\frac{1-6\epsilon}{2}\right)\frac{n}{3(1-6\epsilon)} - \frac{1-6\epsilon}{n}\left(\frac{s}{2}\right)^2 \\
            \ge \ & \frac{n}{12(1-6\epsilon)}\left(4-24\epsilon+36\epsilon^2\right) - \frac{n(1-6\epsilon)}{9}  \\
            \ge \ & \frac{n\left(1-3\epsilon\right)^2}{3(1-6\epsilon)} - \frac{n(1-6\epsilon)}{9} \\
            = \ & \frac{n}{9(1-6\epsilon)}\left(2-6\epsilon-9\epsilon^2\right) \\
            = \ &\frac{n}{12}\left(\frac{4(2-6\epsilon-9\epsilon^2)}{3(1-6\epsilon)}\right) \\
            \ge \ & \frac{n}{12}(1+14\epsilon) = \bar{q}_k+\frac{\epsilon n}{6}
    \end{align*}
    since \[\frac{4(2-6\epsilon-9\epsilon^2)}{3(1-6\epsilon)} -(1+14\epsilon) >\frac{5-48\epsilon}{3(1-6\epsilon)}>0\] for $\epsilon\le \frac{1}{12}$.
    Thus, we conclude applying the additive form of Chernoff  bound (Theorem \ref{chernoff:additive} in Appendix \ref{tools}) with $\lambda = \frac{\epsilon}{6}n$, obtaining $Q \ge \bar{q}_k$, w.h.p.
\end{proof}

Next and last lemma guarantees that once the process reaches a configuration having bias $\bigo(\sqrt{n\log n})$, then it ``enters'' a metastable phase that lasts $\Omega(n^{\lambda'})$ rounds w.h.p. in which the absolute value of the bias remains $\bigo(\sqrt{n\log n})$, since it can be used symmetrically when the bias is negative.

\begin{lemma}\label{lemma2_9}
    Let $\mathbf{x}$ be any configuration. If $s \le \gamma\sqrt{n\log n}$ for some constant $\gamma >0$ and $q\le \frac{n}{3}\left(\frac{1+3\epsilon}{1-6\epsilon}\right)$, then $S\le 2\gamma\sqrt{n\log n}$ w.h.p.
\end{lemma}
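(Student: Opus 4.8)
The plan is to reuse the single-round expectation bound already derived for Lemma \ref{lemma2_3} and then close with the additive-Chernoff concentration argument that recurs throughout Subsection \ref{ssec:victorynoise}. I would begin from Equation \eqref{expectation_S_toomuchnoise}. Since $\epsilon\le\frac{1}{12}$ we have $1-6\epsilon>0$, so the coefficient of $q/n$ is positive and the right-hand side is increasing in $q$. Plugging in the hypothesis $q\le\frac{n}{3}\cdot\frac{1+3\epsilon}{1-6\epsilon}$ yields
\[
    \mean\left[S\mid\mathbf{x}\right]\le s\left(\frac{5}{6}-\epsilon+\frac{1}{6}(1+3\epsilon)\right)=s\left(1-\frac{\epsilon}{2}\right),
\]
which is exactly the inequality obtained in the proof of Lemma \ref{lemma2_3}.

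The next step converts this multiplicative estimate into the absolute bound the statement requires, using $s\le\gamma\sqrt{n\log n}$. The coefficient $\frac{5}{6}-\epsilon+\frac{1}{2}(1-6\epsilon)\frac{q}{n}$ in Equation \eqref{expectation_S_toomuchnoise} is strictly positive (at least $\frac{3}{4}$ for $\epsilon\le\frac{1}{12}$), so I must handle both signs of $s$: if $s\ge 0$ then $\mean\left[S\mid\mathbf{x}\right]\le s(1-\epsilon/2)\le s\le\gamma\sqrt{n\log n}$, while if $s<0$ then $\mean\left[S\mid\mathbf{x}\right]<0<\gamma\sqrt{n\log n}$. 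In either case $\mean\left[S\mid\mathbf{x}\right]\le\gamma\sqrt{n\log n}$.

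Finally I would conclude with concentration, exactly as in the proof of Lemma \ref{lemma:thm1:biasincrease}. Writing $S=A-B$ and setting the events $E_1=\{A\ge\mean\left[A\mid\mathbf{x}\right]+\frac{\gamma}{2}\sqrt{n\log n}\}$ and $E_2=\{B\le\mean\left[B\mid\mathbf{x}\right]-\frac{\gamma}{2}\sqrt{n\log n}\}$, the additive Chernoff bound (Theorem \ref{chernoff:additive}) applied to the sums of $n$ independent indicators $A$ and $B$ gives $\pr{E_1\mid\mathbf{x}},\pr{E_2\mid\mathbf{x}}\le n^{-\gamma^2/2}$. On $E_1^C\cap E_2^C$ we have $S\le\mean\left[S\mid\mathbf{x}\right]+\gamma\sqrt{n\log n}\le 2\gamma\sqrt{n\log n}$, and the union bound delivers this w.h.p.

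The only point that needs care — and the sole difference from Lemma \ref{lemma2_3} — is that the target is the additive guarantee $S\le 2\gamma\sqrt{n\log n}$ rather than a multiplicative contraction: the budget $2\gamma\sqrt{n\log n}$ must simultaneously absorb the (vanishing) expected bias $\le\gamma\sqrt{n\log n}$ and the $\gamma\sqrt{n\log n}$ of stochastic fluctuation, and the argument must be insensitive to the sign of $s$ so that it can later be invoked symmetrically to also bound $S$ from below. Everything else is a direct application of machinery already established.
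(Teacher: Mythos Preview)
Your proposal is correct and follows essentially the same route as the paper: both bound $\mean[S\mid\mathbf{x}]$ via Equation \eqref{expectation_S_toomuchnoise} using the hypothesis on $q$ to obtain $\mean[S\mid\mathbf{x}]\le\gamma\sqrt{n\log n}$, and then finish with the additive Chernoff bound on $A$ and $B$ plus the union bound. Your write-up is in fact slightly more careful than the paper's---you explicitly handle the sign of $s$ and spell out the deviation events $E_1,E_2$---but the argument is identical in substance.
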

\begin{proof}[Proof of Lemma \ref{lemma2_9}]
    Consider the expectation of $S$ from equation \ref{expectation_S_toomuchnoise}. We have
    \begin{align*}
        \mathbb{E}\left[S\bigm| \mathbf{x}\right] \le\ & \gamma\sqrt{n\log n} \left(\frac{5}{6}-\epsilon+\frac{1}{6}+\frac{\epsilon}{2}\right) \\
        \le\ & \gamma\sqrt{n\log n}.
    \end{align*}
    We conclude applying the additive form of Chernoff bound (Theorem \ref{chernoff:additive} in Appendix \ref{tools}), and the union bound, as we did in the proof of Theorem \ref{lemma:thm2:bigepsilon}.
\end{proof}

\begin{proof}[Proof of Theorem \ref{theorem_victory_of_noise}]
    If $\frac{1}{12}< \epsilon$ the theorem is true due to Lemma \ref{lemma:thm2:bigepsilon}. Let us assume $0<\epsilon\le\frac{1}{12}$. The proof is divided into different cases (recall Lemma \ref{lemma:whp_intersection} in the preliminaries). Let $\gamma>0$ be any constant.
    
    \begin{itemize}
        \item[(1)] $s\in S_i$ for some $-1\le i\le k-1$ and $s\ge \gamma\sqrt{n\log n}$,
        \begin{itemize}
            \item[(1.1)] $\bar{q}_{i+1}\le q \le \frac{n(1+3\epsilon)}{3(1-6\epsilon)}$: the bias decreases exponentially fast each round, w.h.p., until $0\le s\le \gamma\sqrt{n\log n}$ due to the combination of Lemmas \ref{lemma2_3}, \ref{lemma2_4}, \ref{lemma2_5}, and \ref{lemma2_6}. This phase lasts $O(\log n)$ rounds;
            \item[(1.2)] $\frac{n(1+3\epsilon)}{3(1-6\epsilon)} < q$: Lemmas \ref{lemma2_7} and \ref{lemma2_4} imply that in $\bigo(\log n)$ rounds the number of undecided nodes reaches the interval $\left[\bar{q}_{i+1},\frac{n}{3(1-6\epsilon)}\right]$ where $i$ is such that the round before the undecided nodes become less than $\frac{n}{3(1-6\epsilon)}$, the bias is in $S_i$: remind that during the whole process (which lasts $\bigo(\log n)$ rounds) the bias never goes over $\frac{2}{3}n$ thanks to Lemma \ref{lemma:thm2:biascontrol}. 
            At the same time, the bias will be in one set between $S_{i-1}, \dots, S_{k-1}$ due to Lemma \ref{lemma2_6}. Since $\bar{q}_i$ is a decreasing sequence, we are in Case 1.1, and we conclude.
            \item[(1.3)] $q < \bar{q}_{i+1}$: in this case, Lemma \ref{lemma2_4} implies $Q\ge \bar{q}_{i+1}$ in the next round, w.h.p. Since Lemma \ref{lemma:thm2:biascontrol} guarantees that the bias remains under the value $\frac{2}{3}n$ w.h.p., either we are in Case 1.1 or in Case 1.2, and we conclude.
        \end{itemize}
        \item[(2)] $s > \frac{2}{3}n$: Lemma \ref{lemma:thm2:biascontrol} implies that the bias gets less than or equal to $\frac{2}{3}n$ in $\bigo(\log n)$ rounds, w.h.p.; then we are in Case 1 and we conclude.
    \end{itemize}
    
    Now, we can suppose the process starts from a configuration $\mathbf{y}$ having bias $0\le s(\mathbf{y})\le \gamma\sqrt{n\log n}$, and such that $\bar{q}_0\le q(\mathbf{y}) \le \frac{n}{3}\left(\frac{1+3\epsilon}{1-6\epsilon}\right)$, as Case 1.1 or 1.3 leaves it. In the next round, it holds that the number of undecided nodes is $ \bar{q}_{-1} \le Q\le \frac{n}{3}\left(\frac{1+3\epsilon}{1-6\epsilon}\right)$, w.h.p., due to Lemma \ref{lemma2_5}; at the same time, w.h.p., $\lvert S \rvert \le 2\gamma\sqrt{n\log n}$ for Lemma \ref{lemma2_9} (which can be used symmetrically on $A-B$ and $B-A$). Thus, the absolute value of the bias is either still less than $\gamma\sqrt{n\log n}$ or has become greater than or equal to $\lvert S \rvert \ge \gamma\sqrt{n\log n}$, in which case it starts decreasing exponentially fast each round, w.h.p., for Lemma \ref{lemma2_3} until becoming again less than $\gamma\sqrt{n\log n}$ (as explained in Case 1.1, which works analogously if the bias is negative, because of symmetry). This phase lasts $\Omega(n^{\lambda'})$ rounds, for some sufficiently small constant $\lambda' >0$ (see Lemma \ref{lemma:whp_intersection} in the preliminaries). As in the proof of Theorem \ref{theorem_almost_plurality}, $\lambda'$ depends only on $\gamma$ and $\epsilon$. 
\end{proof}


\section{ Symmetry Breaking from Balanced Configurations} \label{sec:symbreak}

IIn this section,  we consider the  \uproc   starting from arbitrary initial configurations: in particular,   from   configurations having no bias.
Interestingly enough, we  show   
a  transition phase similar to that proved in the previous section. Informally, the next theorem states that when $p < 1/6$, the \uproc is able to break the symmetry of any perfectly-balanced   initial configuration and to compute  almost consensus within $O(\log n)$ rounds, w.h.p.  

\begin{restatable}{theorem}{thmsymbre}\label{thm:symbre}
Let $\mathbf{x}$ be any initial configuration, and let $\epsilon\in\left(0,1/6\right)$ be some absolute  constant. If $p=1/6 - \epsilon$ is the noise probability, then the \uproc reaches a configuration $\mathbf{y}$ having   bias $s$ toward some opinion $j \in \{\mesalpha,\mesbeta\}$ such that   $|s(\mathbf{y})| \in \Delta = \left[\frac{2\sqrt{\epsilon}}{1+6\epsilon}n, \left(1-2\left(\frac{1-6\epsilon}{12}\right)^3\right)n\right]$  within $\bigo(\log n)$ rounds, w.h.p. 
Moreover, once reached configuration $\mathbf{y}$, the \uproc enters  a (metastable) phase of length $\Omega(n^\lambda)$ rounds (for some  constant $\lambda>0$) where  the majority opinion is $j$ and the bias  
keeps within  the range $\Delta$, w.h.p.
\end{restatable}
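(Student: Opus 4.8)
The plan is to reduce Theorem~\ref{thm:symbre} to Theorem~\ref{theorem_almost_plurality} through a preliminary \emph{symmetry-breaking} phase. Concretely, I would show that from \emph{any} initial configuration the \uproc reaches, within $\bigo(\log n)$ rounds and w.h.p., a configuration $\mathbf{z}$ with $|s(\mathbf{z})| \ge \gamma'\sqrt{n\log n}$ for some constant $\gamma'>0$; from $\mathbf{z}$, Theorem~\ref{theorem_almost_plurality} (used with the sign of $s(\mathbf{z})$ fixed, and symmetrically for negative bias) immediately supplies the entrance into $\Delta$ within a further $\bigo(\log n)$ rounds and the metastable phase of length $\Omega(n^\lambda)$ with a frozen majority opinion $j$. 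Configurations that already satisfy $|s(\config)| \ge \gamma'\sqrt{n\log n}$ --- in particular highly unbalanced or full-consensus ones --- need no symmetry breaking, since the proof of Theorem~\ref{theorem_almost_plurality} also drives an over-large bias \emph{down} into $\Delta$. Hence the only new regime is $|s(\config)| < \gamma'\sqrt{n\log n}$, and this is where the work lies.

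For the symmetry-breaking phase I would first secure a linear population of undecided agents: since $s$ is far below $\beta n$, Lemma~\ref{lemma:thm1:enoughundecided} already gives $q = \Theta(n)$ after one round, w.h.p.\ (the degenerate all-undecided and monochromatic starts are checked directly from Equation~\eqref{expectation_Q_oknoise}). Next I would prove a \emph{variance lower bound} $\mathrm{Var}(S \mid \config) \ge c_1 n$ whenever $q = \Omega(n)$: because agents update independently, $S = \sum_u \Delta_u$ with independent increments $\Delta_u \in \{-1,0,1\}$, and a constant fraction of the undecided agents become \mesalpha or \mesbeta with comparable probabilities, each contributing $\Omega(1)$ to the variance. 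Combined with the multiplicative drift $\mean[S \mid \config] = s\,m$, $m \ge 1 + \epsilon/3$, of Equation~\eqref{expectation_S_oknoise} (valid while $q$ is linear), this yields the second-moment recursion
\[
    \mean\!\left[S^2 \mid \config\right] = m^2 s^2 + \mathrm{Var}(S \mid \config) \ge \left(1 + \tfrac{\epsilon}{3}\right)^2 s^2 + c_1 n ,
\]
so that, from $s_0 = 0$, the typical magnitude of the bias grows geometrically and crosses $\gamma'\sqrt{n\log n}$ after only $\Theta(\log\log n)$ rounds \emph{in expectation}.

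Turning this geometric growth of the second moment into a \emph{high-probability} escape is the step I expect to be the main obstacle, and it is the genuine content of the symmetry-breaking analysis. The difficulty is structural: near $s = 0$ the drift is negligible (it is proportional to $s$), so the eventual majority is decided by fluctuations, and the additive Chernoff \emph{upper} tail bounds used throughout Section~\ref{sec:phasetrans} give no control over whether the bias \emph{stays trapped} around $0$. What is needed here is, instead, an \emph{anti-concentration} estimate. Writing $m = 1+\epsilon/3$ and tracking the rescaled martingale $U_t = S_t / m^{t}$, whose increments have variance $\Theta(n)\,m^{-2t}$, I would argue that its running value has standard deviation $\Theta(\sqrt n)$ and, via a per-round Berry--Esseen / Littlewood--Offord bound of the form $\Pr(|S_{t+1}| \le \eta \mid \mathbf{X}_t) = \bigo(\eta/\sqrt n)$, that it cannot remain within an exponentially small window of $0$. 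Compounding this spreading over $\bigo(\log n)$ rounds against the multiplicative amplification shows that the probability of $|S_t|$ staying below $\gamma'\sqrt{n\log n}$ for all such $t$ is $n^{-\Omega(1)}$, which is exactly what makes the escape happen w.h.p.\ within $\bigo(\log n)$ rounds rather than merely with constant probability.

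Finally I would close the loop. Once $|s| \ge \gamma'\sqrt{n\log n}$, the strong multiplicative drift of Lemma~\ref{lemma:thm1:biasincrease} takes over and the additive Chernoff bounds \emph{do} concentrate $S$, so the sign of the bias is locked in and the bias no longer returns to $0$, w.h.p.; from this point the amplification up into $\Delta$ and the $\Omega(n^\lambda)$-round metastable phase with unchanging majority $j$ are inherited verbatim from Theorem~\ref{theorem_almost_plurality}, with the overall w.h.p.\ guarantee assembled through Lemma~\ref{lemma:whp_intersection}. The delicate quantitative points are the variance lower bound and, above all, the anti-concentration estimate --- the one place in the whole argument where the Chernoff-type tools of the paper are insufficient and a lower bound on the dispersion of $S$ must be established.
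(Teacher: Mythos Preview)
Your high-level strategy matches the paper's: reduce to Theorem~\ref{theorem_almost_plurality} once $|s|=\Omega(\sqrt{n\log n})$, and for the symmetry-breaking phase combine a $\Theta(n)$ variance / Berry--Esseen anti-concentration bound with the multiplicative drift on $S$. The paper organises the ``compounding to w.h.p.'' step differently, however. Rather than tracking a rescaled martingale $U_t=S_t/m^t$ and arguing about shrinking windows, it invokes a general drift lemma (Lemma~\ref{lemma:symmetrygeneric}) requiring only (i) a uniform constant lower bound $\Pr(|S|\ge h\sqrt n\mid\config)\ge 1-c_1$ and (ii) a tail bound $\Pr(|S|<(1+\delta)|s|\mid\config)\le e^{-c_2 s^2/n}$ once $|s|\ge h\sqrt n$; a potential $Y_i=\exp(m/\sqrt n-Z_i)$ on the subsequence of hitting times of level $h\sqrt n$ then contracts geometrically and Markov's inequality gives the w.h.p.\ bound. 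This cleanly decouples the anti-concentration step from the amplification step, whereas your martingale sketch leaves precisely this junction vague. The paper also handles a technicality you do not: both the drift and the variance bounds need $q\in[\tfrac{1-4\epsilon}{3(1+6\epsilon)}n,\,n/2]$, which only holds w.h.p.\ per round, not surely; the paper resolves this by coupling with a \emph{pruned} process in which transitions leaving the good $q$-range are redirected, applies Lemma~\ref{lemma:symmetrygeneric} there (where its hypotheses hold with probability~$1$), and shows the two chains coincide for $\bigo(\log n)$ rounds w.h.p. Your outline would need an analogous device.
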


What follows is an outline of the proof of the theorem, while more details are given in the next subsection.
\begin{proof}[Outline of Proof of Theorem \ref{thm:symbre}]
  If the initial configuration   $\mathbf{x}$ has bias $s = \Omega(\sqrt{n\log n})$ then the claim of the theorem is equivalent to that of Theorem \ref{theorem_almost_plurality}, so we are done.
  Hence, we next   assume the initial bias $s$ be   $o(\sqrt{n\log n})$: for  this case, our proof proceeds along the following main   steps.
   
   \noindent
   \emph{Step I.} Whenever the   bias $s$ is   small, i.e. 
   $o(n)$,  we prove that, within the next $\bigo(\log n)$ rounds,  the number of undecided nodes turns out to keep always in a suitable linear range: roughly speaking, we get that this number lies in
   $(n/3, n/2]$, w.h.p. 
   
   \noindent
   \emph{Step II.}    Whenever $s$ is very small, i.e. $s= o(\sqrt n)$, 
   there is no effective  drift towards any opinion. However,   we can prove   that, thanks to Step I,  the random variable $S$, representing the bias in  the next round, has   \emph{high variance}, i.e. $\Theta(n)$.  The latter holds since  $S$ can be written as a suitable  sum whose addends  include some    random variables having binomial distribution of expectation $0$:  so,  we can  apply the  Berry-Essen Theorem (Theorem \ref{thm:berryeseen} in Appendix \ref{tools}) to get a  lower bound on the variance of $S$. 
   Then, thanks to this large variance, standard arguments for the standard deviation imply that, in this parameter range, there is a positive constant probability that $S$ will get some value of magnitude $\Omega(\sqrt{n})$ (see Claim 1 of Lemma \ref{lemma:symbreak}). Not surprisingly,  in this phase,  we find out that the variance of $S$ is not decreased by the communication noise. We can thus claim  that   the  process,   at every round, 
     has positive constant probability to reach a configuration having bias $s = \omega(\sqrt{n})$ and $q \in (n/3, n/2]$. Then, after $O(\log n)$ rounds, this event will happen w.h.p. 
   
   \noindent
   \emph{Step III.} Once the process reaches a configuration with $s = \omega(\sqrt{n})$ and $q \in (n/3, n/2]$,   we then prove that 
   the expected bias increases  by a constant factor (which depends on $\epsilon$).   Observe that we cannot   use here the same round-by-round concentration argument that works for bias over $\sqrt{n\log n}$ (this is in fact the  minimal magnitude required to apply the Chernoff's bounds \cite{chernoff1}). We instead exploit a useful general tool \cite{DBLP:conf/mfcs/ClementiGGNPS18} that bounds   the stopping time of  some class of Markov chains having rather mild conditions on the drift towards their absorbing states (see Lemma \ref{lemma:symmetrygeneric}). 
   This tool in fact allows us to consider  the two phases described, respectively,  in Step II and Step III  as a unique \emph{symmetry-breaking} phase of the process. Our  final 
   technical contribution here is to show that the conditions required to apply this tool   hold whenever the communication noise parameter is such that $p \in (0,1/6)$. 
    This allows us to prove that, within $O(\log n)$ rounds, the process reaches a configuration with bias $s = \Omega(\sqrt{n}\log n)$, w.h.p.
\end{proof}

\noindent
\textbf{Large communication noise (the case $p > 1/6+\epsilon$).}
When $p > 1/6+\epsilon$,    Theorem \ref{theorem_victory_of_noise}   a fortiori holds when the initial  bias is small, i.e. $s = o(\sqrt{n \log n})$: thus, we get that, in this case,     the system enters into a long regime of non consensus, starting from any initial configuration. Then, by combining the results for biased configurations in Section \ref{sec:phasetrans} with those in this section, we can 
observe the  phase transition of the \uproc  starting from any possible initial configuration.

\begin{restatable}{theorem}{thmunbiasedvictorynoise}\label{thm:unbiased_victory_noise}
Let $\config$ be any initial configuration, and let  $\epsilon\in\left(0,1/3\right]$ be some absolute constant. If
$p=1/6 + \epsilon$ is the  noise probability, then the \uproc reaches a configuration $\mathbf{y}$ having bias $\lvert s(\mathbf{y})\rvert = \bigo(\sqrt{n\log n})$ within $\bigo(\log n)$ rounds, w.h.p. Furthermore, starting from such a configuration, the \uproc enters a (metastable) phase of length $\Omega\left(n^{\lambda'}\right)$ rounds (for some constant $\lambda'>0$) where the absolute value of the bias  keeps bounded by $\bigo(\sqrt{n\log n})$, w.h.p.
\end{restatable}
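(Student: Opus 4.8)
The plan is to reuse wholesale the machinery behind Theorem \ref{theorem_victory_of_noise}, exploiting the fact that its case analysis and its $q$-control and metastable-phase arguments never genuinely use the lower bound $\abs{s(\config)}\ge \gamma\sqrt{n\log n}$ on the initial bias: that hypothesis only serves to locate the starting configuration inside one of the cases. I would therefore split on the noise level exactly as in Subsection \ref{ssec:victorynoise}, treating $\epsilon>\frac{1}{12}$ and $\epsilon\le \frac{1}{12}$ separately, and within each regime distinguish initial configurations of large bias (covered verbatim by Theorem \ref{theorem_victory_of_noise}) from those of small bias $\abs{s(\config)}<\gamma\sqrt{n\log n}$, which is the only genuinely new situation. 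All concentration is obtained as before, via the additive Chernoff bound on $A$ and $B$ together with the union bound, and the $\Omega(n^{\lambda'})$ persistence is assembled from the per-round statements through Lemma \ref{lemma:whp_intersection}.

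For $\epsilon>\frac{1}{12}$ the argument is immediate and requires no control of the undecided population. Lemma \ref{lemma:thm2:bigepsilon} gives $S\le s\left(\frac{7}{6}-2\epsilon\right)$ with $\frac{7}{6}-2\epsilon<1$, and \eqref{expectation_S_toomuchnoise} shows $\mean\left[S\mid\config\right]\le s\left(\frac{4}{3}-4\epsilon\right)<s$ for every value of $q$. Hence, starting from an arbitrary configuration, whenever $\abs{s}\ge\gamma\sqrt{n\log n}$ the bias contracts by a constant factor per round (applying the lemma symmetrically when the bias is negative), so it falls to $\bigo(\sqrt{n\log n})$ within $\bigo(\log n)$ rounds; and whenever $\abs{s}<\gamma\sqrt{n\log n}$ a single additive Chernoff step gives $\abs{S}\le 2\gamma\sqrt{n\log n}$ w.h.p. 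Iterating this dichotomy over $\Omega(n^{\lambda'})$ rounds (Lemma \ref{lemma:whp_intersection}) yields the claimed metastable phase.

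For $\epsilon\le \frac{1}{12}$, if $\abs{s(\config)}\ge\gamma\sqrt{n\log n}$ then Theorem \ref{theorem_victory_of_noise} applies directly, so assume $\abs{s(\config)}<\gamma\sqrt{n\log n}$. The goal is to reach, within $\bigo(\log n)$ rounds, a configuration with small bias and with $q$ inside the working window $\left[\bar{q}_0,\frac{n(1+3\epsilon)}{3(1-6\epsilon)}\right]$, from which the metastable-phase paragraph of the proof of Theorem \ref{theorem_victory_of_noise} applies unchanged: Lemma \ref{lemma2_5} keeps $q$ in range, Lemma \ref{lemma2_9} (used symmetrically on $A-B$ and $B-A$) gives $\abs{S}\le 2\gamma\sqrt{n\log n}$, and any fluctuation pushing $\abs{s}$ above $\gamma\sqrt{n\log n}$ is reversed by Lemmas \ref{lemma2_3} and \ref{lemma2_6}, so the bias remains $\bigo(\sqrt{n\log n})$ for $\Omega(n^{\lambda'})$ rounds. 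To reach the window, Lemma \ref{lemma2_4} lower-bounds $q$ in a single round, while Lemma \ref{lemma2_7}(1) drives $q$ down geometrically (by a factor $1-\epsilon$) whenever it exceeds $\frac{n}{3(1-6\epsilon)}$, so the upper end of the window is met within $\bigo(\log n)$ rounds.

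The step I expect to be the main obstacle is precisely the transient in which the initial bias is small but the undecided population is large, $q>\frac{n(1+3\epsilon)}{3(1-6\epsilon)}$: there the factor in \eqref{expectation_S_toomuchnoise} can exceed $1$ (it is at most $\frac{4}{3}-4\epsilon$), so the bias may grow while $q$ is being reduced, and one must argue this growth cannot spoil the $\bigo(\sqrt{n\log n})$ target. I would control it with two facts already available: Lemma \ref{lemma:thm2:biascontrol}(2) caps the bias at $\frac{2}{3}n$ throughout, and the geometric decay of $q$ guarantees that within $\bigo(\log n)$ rounds $q$ drops below $\frac{n(1+3\epsilon)}{3(1-6\epsilon)}$, at which point the factor in \eqref{expectation_S_toomuchnoise} falls below $1$ (the hypothesis of Lemma \ref{lemma2_3}) and the contraction established in Case~1 of Theorem \ref{theorem_victory_of_noise} takes over, bringing the bias back to $\bigo(\sqrt{n\log n})$. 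Thus any intermediate growth is bounded by $\frac{2}{3}n$ and reversed within $\bigo(\log n)$ rounds, after which the metastable analysis keeps the bias $\bigo(\sqrt{n\log n})$; gluing the $\bigo(\log n)$ transient to the $\Omega(n^{\lambda'})$ persistence into one w.h.p. statement is then routine via Lemma \ref{lemma:whp_intersection}, with $\lambda'$ depending only on $\gamma$ and $\epsilon$.
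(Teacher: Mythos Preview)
Your proposal is correct and follows the paper's approach, namely to reduce everything to the machinery of Theorem \ref{theorem_victory_of_noise}. The paper's own argument is in fact just the one-line remark (in the paragraph preceding the theorem) that Theorem \ref{theorem_victory_of_noise} ``a fortiori holds when the initial bias is small''; your case analysis is a careful unpacking of that claim, and you correctly flag and handle the one subtlety the paper leaves implicit---bringing $q$ into the working window $\left[\bar q_0,\tfrac{n(1+3\epsilon)}{3(1-6\epsilon)}\right]$ when the starting bias is already small but $q$ lies outside that range.
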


\smallskip
\noindent
\textbf{Stubborn agents.}
We conclude this section by observing that the equivalence result shown in  Lemma \ref{lem:equiv}  holds independently of the  choices of  the noise parameter $p \in (0, 1/2]$, and  of the initial bias: the phase transition
of the \uproc in the presence of stubborn agents thus 
holds even in the case of unbiased configurations.
\begin{corollary}\label{cor:unbiased_stubborn}
    Let $\frac 12 >p>0$ be a constant, and let the \stubproc start from any initial configuration.
        If $p < \frac 16$, then, in $\bigo(\log n)$ rounds, the \stubproc enters   a metastable phase of almost consensus towards some opinion $j\in\{\mesalpha, \mesbeta\}$ of length $\Omega\left(n^\lambda\right)$ for some constant $\lambda > 0$,  in which the absolute value of the bias is $\Theta(n)$, w.h.p.
        If $p \in (\frac 16, \frac 12]$, then, in $\bigo(\log n)$ rounds, the \stubproc enters   a metastable phase of length $\Omega\left(n^{\lambda'}\right)$ for some constant $\lambda' > 0$ where the absolute value of the bias keeps bounded by $\bigo(\sqrt{n\log n})$, w.h.p.
\end{corollary}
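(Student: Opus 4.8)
The plan is to obtain the statement as a direct corollary of the equivalence established in Lemma \ref{lem:equiv}, combined with Theorems \ref{thm:symbre} and \ref{thm:unbiased_victory_noise}; this mirrors exactly how Corollary \ref{corollary_theorems} is deduced from Theorems \ref{theorem_almost_plurality} and \ref{theorem_victory_of_noise} in the biased case, the only difference being that we now feed in the two theorems that cover \emph{arbitrary} (in particular unbiased) initial configurations. First I would instantiate Lemma \ref{lem:equiv} with $m=3$, $\pnoise=3p$, and $p_{\mesalpha}=p_{\mesbeta}=p_{undecided}=\frac{1}{3}$, so that the \stubproc on $n+\additionalNodes$ nodes with $\additionalNodes=\frac{3p}{1-3p}n$ stubborn agents, split into three equal groups of $\frac{p}{1-3p}n$ agents supporting \mesalpha, \mesbeta, and the undecided state, is coupled round by round to the \uproc on $n$ nodes run with noise probability $p$ in the sense of Definition \ref{def:noise}. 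Since the lemma gives an exact identity of the two transition matrices, every high-probability statement about the trajectory of the bias of the \uproc transfers verbatim to the \stubproc.

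For $p<\frac16$, I would write $p=\frac16-\epsilon$ and invoke Theorem \ref{thm:symbre}, whose hypotheses already permit an arbitrary initial configuration: within $\bigo(\log n)$ rounds the \uproc reaches a configuration whose bias has absolute value in the interval $\Delta$ (of width $\Theta(n)$ and bounded away from both $0$ and $n$) toward a fixed opinion $j\in\{\mesalpha,\mesbeta\}$, and it stays there for $\Omega(n^\lambda)$ rounds, w.h.p. Transporting this along the coupling yields the asserted metastable almost-consensus phase of the \stubproc with $\abs{s}=\Theta(n)$. For $p\in(\frac16,\frac12]$, I would instead write $p=\frac16+\epsilon$ and apply Theorem \ref{thm:unbiased_victory_noise} identically, obtaining the metastable phase of length $\Omega(n^{\lambda'})$ in which $\abs{s}=\bigo(\sqrt{n\log n})$.

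The one point deserving care — and essentially the only nontrivial step — is the correspondence of the \emph{notion of bias} across the coupling. Under the bijection $f$ of Lemma \ref{lem:equiv} the bias $s(\mathbf{x})$ of the \uproc equals the bias computed over the $n$ non-stubborn agents of the coupled \stubproc configuration. Because the stubborn population is split evenly, the $\frac{p}{1-3p}n$ stubborn agents supporting \mesalpha and the $\frac{p}{1-3p}n$ stubborn agents supporting \mesbeta cancel in the global tally, so this restricted bias coincides with the bias measured over all $n+\additionalNodes$ nodes; hence the $\Theta(n)$ and $\bigo(\sqrt{n\log n})$ estimates hold under either convention, and the $\bigo(\log n)$ convergence and $\Omega(n^\lambda),\Omega(n^{\lambda'})$ persistence bounds pass through because the coupling is an exact equality of the chains. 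I do not expect a genuine obstacle: once this bias bookkeeping is settled, the corollary is an immediate translation of the two unbiased theorems, just as in Corollary \ref{corollary_theorems}.
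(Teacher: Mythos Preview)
Your proposal is correct and follows exactly the approach the paper intends: the paper does not give a separate proof of this corollary but simply remarks that the equivalence of Lemma~\ref{lem:equiv} is independent of $p$ and of the initial bias, so the phase transition for arbitrary starting configurations (Theorems~\ref{thm:symbre} and~\ref{thm:unbiased_victory_noise}) transfers immediately to the \stubproc. Your explicit bookkeeping of the bias under the coupling---noting that the equal-sized stubborn \mesalpha and \mesbeta groups cancel so that the restricted and global biases coincide---is a clean detail the paper leaves implicit.
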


\subsection{Proof of Theorem \ref{thm:symbre}: More details} \label{ssec:symbreakproof}

The proof of Theorem \ref{thm:symbre} essentially relies on the following lemma which has been proved in  \cite{DBLP:conf/mfcs/ClementiGGNPS18} (we report a proof here).

\begin{lemma}\label{lemma:symmetrygeneric}
    Let $\{X_{t}\}_{t\in \mathbb{N}}$ be a Markov Chain with finite-state space $\Omega$ and let $f:\Omega\mapsto[0,n]$ be a function that maps states to integer values. Let $c_3$ be any positive constant and let $m = c_3\sqrt{n}\log n$ be a target value. Assume the following properties hold:
    \begin{itemize}
        \item[(1)] for any positive constant $h$, a positive constant $c_1 < 1$ exists such that for any $x \in \Omega : f(x) < m$,
        \[\mathbb{P}\{f(X_{t+1}) < h\sqrt{n} \mid X_{t} = x\} < c_1;\]
        \item[(2)] there exist two positive constants $\delta$ and $c_2$ such that for any $x \in \Omega: h\sqrt{n} \leq f(x) < m$,
        \[\mathbb{P}\{f(X_{t+1}) < (1+\delta)f(X_{t}) \mid X_{t} = x\} < e^{-c_2f(x)^2/n}.\]
    \end{itemize}
    Then the process reaches a state $x$ such that $f(x) \ge m$ within $\bigo(\log n)$ rounds, w.h.p.
\end{lemma}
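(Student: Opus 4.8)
The plan is to decouple the two regimes encoded by the hypotheses and then stitch them together with a restart (trial) argument. Fix the intermediate level $L = h\sqrt{n}$, where the constant $h$ is chosen large enough (to be specified below); hypothesis (1) then supplies a matching constant $c_1 < 1$. I read the two assumptions as follows: hypothesis (1) says that from \emph{any} low state (with $f<m$) a single round lands the process above $L$ with constant probability $1-c_1$; hypothesis (2) says that once the process is in the band $[L,m)$ it enjoys a multiplicative drift, multiplying $f$ by at least $(1+\delta)$ per round, and crucially the failure probability $e^{-c_2 f^2/n}$ of a drift step shrinks as $f$ climbs. The first mechanism is the genuinely slow, symmetry-breaking part; the second is fast, since multiplying $L=h\sqrt n$ up to $m=c_3\sqrt n\log n$ requires only $\log_{1+\delta}\!\big((c_3/h)\log n\big)=\bigo{(\log\log n)}$ successful steps. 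A \emph{trial} consists of one launch round (governed by (1)) followed, if the launch lands above $L$, by the ensuing drift steps (governed by (2)); the trial succeeds if $f$ crosses $m$, and fails as soon as the launch undershoots $L$ or a drift step fails.

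First I would make the amplification phase reliable. Starting from a good launch ($f\ge L$), suppose the drift succeeds at every step; then after $k$ steps $f\ge (1+\delta)^k L$, so by (2) the failure probability of the $k$-th step is at most $\exp\!\big(-c_2(1+\delta)^{2k}h^2\big)$. Since these bounds decay doubly exponentially in $k$, a union bound over all drift steps gives total amplification-failure probability at most $\sum_{k\ge 0}\exp\!\big(-c_2(1+\delta)^{2k}h^2\big)$, which I can drive below $1/2$ by taking $h$ a sufficiently large constant (this fixes $h$, hence $c_1$). Combining with hypothesis (1), a single trial reaches $m$ with probability at least $p^\ast := (1-c_1)/2>0$, a constant independent of $n$. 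By the (strong) Markov property, and because both hypotheses hold uniformly over all states with $f<m$, any state that a failed trial leaves behind still satisfies the premise of (1), so trials are restartable with the same uniform constant success probability $p^\ast$.

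Next I would bound the number of rounds. The number of trials until the first success is stochastically dominated by a geometric random variable of parameter $p^\ast$, so within $N=\bigo{(\log n)}$ trials the process reaches $m$ w.h.p.\ (choose the hidden constant so that the geometric tail $(1-p^\ast)^N$ is below $n^{-\gamma}$). The one genuinely delicate point — and the main obstacle — is converting a bound on the \emph{number of trials} into the claimed $\bigo{(\log n)}$ bound on the \emph{number of rounds}: a naive estimate charges each of the $\bigo{(\log n)}$ trials the full amplification length $\bigo{(\log\log n)}$, losing a spurious $\log\log n$ factor. To remove it I would again exploit the doubly-exponential decay: a failed trial that survives $\ell$ drift steps must fail at a level $\ge(1+\delta)^{\ell}L$, an event of probability at most $\exp\!\big(-c_2(1+\delta)^{2\ell}h^2\big)$, so a failed trial has length $\ge \ell$ with at most this super-exponentially small probability. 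Hence the failed-trial lengths are light-tailed variables, independent across restarts, with $\bigo{(1)}$ mean, and their sum over the $\bigo{(\log n)}$ trials concentrates to $\bigo{(\log n)}$ w.h.p.\ (e.g.\ by a Bernstein-type bound, or by capping lengths and invoking Lemma~\ref{lemma:whp_intersection}). Adding the single successful trial's $\bigo{(\log\log n)}$ rounds, the total is $\bigo{(\log n)}$ rounds w.h.p., as claimed. The governing intuition is that essentially all amplification failures occur at the very bottom of the band, so restarts are cheap, while the rare deep failures are exactly the ones the doubly-exponential tail rules out.
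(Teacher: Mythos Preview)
Your trial-and-restart argument is correct, but it takes a genuinely different route from the paper's proof. The paper does not decompose the process into trials; instead it passes to the time-changed subsequence $R_i = X_{\tau(i)}$, where $\tau(i)$ is the $i$-th round at which $f\ge h\sqrt n$, shows that hypothesis (2) transfers verbatim to $\{R_i\}$, and then runs a supermartingale argument on the potential $Y_i=\exp(m/\sqrt n - f(R_i)/\sqrt n)$: the two cases ``drift step succeeds'' and ``drift step fails (potential jumps up to at most $e^{m/\sqrt n}$)'' together give $\mathbb E[Y_{i+1}\mid R_i]\le Y_i/e$, so Markov's inequality yields $f(R_i)\ge m$ after $(c_3+1)\log n$ steps of $R$; hypothesis (1) is then used only to show, via a single Chernoff bound, that this many visits to $\{f\ge h\sqrt n\}$ occur within $\bigo{(\log n)}$ actual rounds. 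The potential-function route thus sidesteps entirely the rounds-versus-trials accounting that forces you into the Bernstein/MGF tail bound on failed-trial lengths; it absorbs the rare deep failures directly into the expectation of $Y_{i+1}$. Your approach, by contrast, is more explicitly combinatorial and perhaps easier to internalize as ``launch, amplify, restart''; the price is the extra concentration step for $\sum T_j$. Two small points of care: the failed-trial lengths are not literally independent across restarts but are uniformly dominated by i.i.d.\ copies of a variable with finite MGF, which is what makes the Bernstein step go through; and your alternative ``capping lengths and invoking Lemma~\ref{lemma:whp_intersection}'' would only yield $\bigo{(\log n\log\log n)}$, so the MGF route is the one that actually delivers the stated $\bigo{(\log n)}$.
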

\begin{proof}[Proof of Lemma \ref{lemma:symmetrygeneric}]
    Define a set of hitting times $T \coloneqq \left\{\tau(i)\right\}_{i \in \nat}$, where
    \[\tau(i) = \inf_{i \in \nat}\left\{t: t > \tau(i-1), f(X_t) \ge h\sqrt{n}\right\},\]
    setting $\tau(0) = 0$. By the first hypothesis, for every $i\in \nat$, the expectation of $\tau(i)$ is finite. Now, define the following stochastic process which is a subsequence of $\{X_t\}_{t\in \nat}$:
    \[\{R_i\}_{i \in \nat} = \{X_{\tau(i)}\}_{i\in \nat}.\] Observe that $\{R_i\}_{i \in \nat}$ is still a Markov chain. Indeed, if $\{x_1, \dots, X_{i-1}\}$ be a set of states in $\Omega$, then
    \begin{align*}
        & \ \prob(R_i = x \mid R_{i-1} = x_{i-1}, \dots, R_1 = x_1) \\ 
        = & \ \prob(X_{\tau(i)} = x \mid X_{\tau(i-1)} = x_{i-1}, \dots, X_{\tau(1)} = x_1) \\
        = & \ \sum_{t(i)> \dots> t(1) \in \nat} \prob(X_{t(i)} = x \mid X_{t(i-1)}=x_{i-1}, \dots, X_{t(1)}=x_1)\\
        & \cdot \prob\left(\tau(i)=t(i), \dots, \tau(1)=t(1)\right) \\
        = & \ \prob(X_{\tau(i)}=x \mid X_{\tau(i-1)=x_{i-1}}) \\
        = & \ \prob(R_i = x \mid R_{i-1} = x_{i-1}).
    \end{align*}
    By definition, the state space of $R$ is $\{x \in \Omega : f(x) \ge h\sqrt{n}\}$. Moreover, the second hypothesis still holds for this new Markov chain. Indeed: 
    \begin{align*}
        & \ \prob\left(f(R_{i+1}< (1+\epsilon)f(R_i) \mid R_i = x \right) \\
        = & \ 1- \prob\left(f(R_{i+1} \ge (1+\epsilon)f(R_i) \mid R_i = x\right) \\
        = & \ 1- \prob\left(f(X_{\tau(i+1)} \ge (1+\epsilon)f(X_{\tau(i)}) \mid X_{\tau(i)} = x\right) \\
        \le & \ 1- \prob\left(f(X_{\tau(i+1)} \ge (1+\epsilon)f(X_{\tau(i)}), \tau(i+1) = \tau(i) +1 \mid X_{\tau(i)} = x\right) \\
        = & \ 1- \prob\left(f(X_{\tau(i)+1} \ge (1+\epsilon)f(X_{\tau(i)}) \mid X_{\tau(i)} = x\right) \\
        = & \  1- \prob\left(f(X_{t+1} \ge (1+\epsilon)f(X_{t})\mid X_{t} = x\right) \\
        < & \ e^{-c_2f(x)^2/n}.
    \end{align*}
    These two properties are sufficient to study the number of rounds required by the new Markov chain $\{R_i\}_{i\in \nat}$ to reach the target value $m$. Indeed, by defining the random variable $Z_i = \frac{f(R_i)}{\sqrt{n}}$, and considering the following ``potential'' function, $Y_i = exp\left(\frac{m}{\sqrt{n}}-Z_i\right)$, we can compute its expectation at the next round as follows. Let us fix any state $x\in \Omega$ such that $h\sqrt{n} \le f(x) < m$, and define $z = \frac{f(x)}{\sqrt{n}}$, $y=exp\left(\frac{m}{\sqrt{n}}-z\right)$. We have
    \begin{align}
    \mean[Y_{i+1} \vert R_i = x] & \leq \prob\left(f(R_{i+1}) < (1+\epsilon)f(x)\right) e^{m/\sqrt{n}}\nonumber\\
    &+ \prob\left(f(R_{i+1}) \geq (1+\epsilon)f(x)\right) e^{m/\sqrt{n} - (1 + \epsilon)z}\nonumber\\
    \mbox{ (from Hypothesis  (2)) } &\leq e^{-c_2 z^2} \cdot  e^{m/\sqrt{n}} + 1 \cdot e^{m/\sqrt{n} - (1 + \epsilon)z}\nonumber\\
    &= e^{m/\sqrt{n} - c_2 z^2} + e^{m/\sqrt{n} - z - \epsilon z}\nonumber\\
    &= e^{m/\sqrt{n} - z} (e^{z - c_2 z^2} + e^{-\epsilon z})\nonumber\\
    &\leq e^{m/\sqrt{n} - z}(e^{-2} + e^{-2})\label{eq:smallh}\\
    &<\frac{e^{m/\sqrt{n} - z}}{e}\nonumber\\
    &=\frac{y}{e}\nonumber,
    \end{align}
    where in~\eqref{eq:smallh} we used that $z$ is always at least $h$ and thanks to Hypothesis (1) we can choose a sufficiently large $h$.
    
    By applying the Markov inequality and iterating the above bound, we get
    \[\prob(Y_{i} > 1) \leq \frac{\mean[Y_{i}]}{1} \leq \frac{\mean[Y_{{i}-1}]}{e}\leq \cdots \leq \frac{\mean[Y_0]}{e^{\tau_R}} \leq \frac{e^{m/\sqrt{n}}}{e^{i}}.\]
    We observe that if $Y_{i} \leq 1$ then $R_{i} \geq m$, thus by setting ${i} = m/\sqrt{n} + \log n = (c_3 + 1)\log n$, we get:

    \begin{equation}
    \prob\left(R_{(c_3 + 1)\log n} < m\right) = \prob\left(Y_{(c_3 + 1)\log n} > 1\right) < \frac{1}{n}.\label{eq:markovR}
    \end{equation}
    
    Our next goal is to give an upper bound on the hitting time $\tau_{(c_3 + 1)\log n}$. 
    Note that the event ``$\tau_{(c_3 + 1)\log n} > c_4\log n$'' holds if and only if the number of rounds such that $f(X_t) \geq h\sqrt{n}$ (before round $c_4\log n$) is less than $(c_3 + 1)\log n$.
    Thanks to Hypothesis (1), at each round $t$ there is at least probability $1-c_1$ that $f(X_{t}) \geq h\sqrt{n}$. This implies that, for any positive constant $c_4$, the probability $\prob\left(\tau_{(c_3 + 1)\log n} > c_4\log n\right)$ is bounded by the probability that, within $c_4\log n$ independent Bernoulli trials, we get less then $(c_3 + 1)\log n$ successes, where the success probability is at least $1-c_1$. We can thus choose a sufficiently large $c_4$ and apply the multiplicative form of the Chernoff bound (Theorem \ref{chernoff:multiplicative} in Appendix \ref{tools}), obtaining
    
    \begin{equation}
    \prob\left(\tau_{(c_3 + 1)\log n} > c_4\log n\right) < \frac{1}{n}.\label{eq:tauR}
    \end{equation}
    
    We are now ready to prove the Lemma using Inequalities \eqref{eq:markovR} and \eqref{eq:tauR}, indeed
    
    \begin{align*}
    \prob\left(X_{c_4 \log n} \geq m\right)
    &> \prob\left(R_{(c_3 + 1)\log n} \geq m \wedge \tau_{(c_3 + 1)\log n} \leq c_4 \log n\right)\\
    &= 1 - \prob\left(R_{(c_3 + 1)\log n} < m  \vee \tau_{(c_3 + 1)\log n} > c_4 \log n\right)\\
    &\geq 1 - \prob\left(R_{(c_3 + 1)\log n} < m\right) + \prob\left(\tau_{(c_3 + 1)\log n} > c_4 \log n\right)\\
    &> 1 - \frac{2}{n}.
    \end{align*}
    
    Hence, choosing a suitable big $c_4$, we have shown that in $c_4 \log n$ rounds the process reaches the target value $m$, w.h.p.
\end{proof}

Our goal is 
to apply the above lemma to the \uproc (which defines a finite-state Markov chain) starting with bias of size $o(\sqrt{n\log n})$ where we set $f(\mathbf{X}_t)=s(\mathbf{X}_t)$, $c_3 = \gamma>0$ for some constant $\gamma > 0$, and $m=\gamma\sqrt{n}\log n$: this would 
imply the upper bound $\bigo(\log n)$ on the number of rounds needed to reach a configuration having bias $\Omega(\sqrt{n\log n})$, w.h.p., breaking the symmetry because Theorem \ref{theorem_almost_plurality} then holds. To this aim, with the next two lemmas we show that the  \uproc satisfies the hypotheses of Lemma \ref{lemma:symmetrygeneric} in this setting, w.h.p.

\begin{lemma}\label{lemma_3_1}
Let $\mathbf{x}$ be any configuration in which $s \le \beta n$ and $\frac{n}{3}\left(\frac{1-4\epsilon}{1+6\epsilon}\right) \le q \le \frac{n}{2}$. Then, in the next round, it holds that $\frac{n}{3}\left(\frac{1-4\epsilon}{1+6\epsilon}\right) \le Q \le \frac{n}{2}$ w.h.p.
\end{lemma}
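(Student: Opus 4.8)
The plan is to treat $Q$, the number of undecided nodes at the next round, as a sum $Q=\sum_{u}\mathbb{1}_u$ of the $n$ indicators $\mathbb{1}_u$ signalling that node $u$ becomes undecided. Once the current configuration $\mathbf{x}$ is fixed, each node pulls its neighbour (with independent noise) independently of the others, so these indicators are mutually independent and the additive Chernoff bound (Theorem \ref{chernoff:additive}) applies directly to $Q$. I would then prove the two-sided containment $\frac{n}{3}\cdot\frac{1-4\epsilon}{1+6\epsilon}\le Q\le\frac{n}{2}$ by establishing each side w.h.p.\ separately and closing with a union bound.

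For the lower bound I would simply invoke Lemma \ref{lemma:thm1:enoughundecided}: its only hypothesis is $s\le\beta n$, which is assumed here, and its conclusion is exactly $Q\ge\frac{1-4\epsilon}{3(1+6\epsilon)}n$ w.h.p. The range restriction on $q$ is not needed for this direction, since that bound comes from evaluating $\mathbb{E}[Q\mid\mathbf{x}]$ at the global minimum $q=n/3$ of the parabola in Equation \eqref{expectation_Q_oknoise}, independently of where $q$ currently sits.

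The work lies entirely in the upper bound. Reading Equation \eqref{expectation_Q_oknoise} as a function $f(q)=\mathbb{E}[Q\mid\mathbf{x}]$ of the current $q$, it is an upward parabola with vertex at $q=n/3$; since $\frac{n(1-4\epsilon)}{3(1+6\epsilon)}<\frac{n}{3}<\frac{n}{2}$, the vertex is interior to the admissible range of $q$, so $f$ attains its maximum over $\left[\frac{n(1-4\epsilon)}{3(1+6\epsilon)},\frac{n}{2}\right]$ at one of the two endpoints. I would drop the non-positive term $-\frac{1+6\epsilon}{n}\left(\frac{s}{2}\right)^2$ and evaluate $f$ at both: the right endpoint gives $f(n/2)\le\frac{17+6\epsilon}{48}n$, which lies below $\frac{n}{2}=\frac{24}{48}n$ with a constant-fraction gap because $7-6\epsilon>6$ throughout $\epsilon<1/6$; the left endpoint gives $f\!\left(\frac{n(1-4\epsilon)}{3(1+6\epsilon)}\right)\le\frac{4+24\epsilon+100\epsilon^2}{12(1+6\epsilon)}n$, which falls below $\frac{n}{2}$ precisely when $50\epsilon^2-6\epsilon-1<0$. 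Hence $\mathbb{E}[Q\mid\mathbf{x}]\le\frac{n}{2}-c\,n$ for a constant $c=c(\epsilon)>0$, and a further application of the additive Chernoff bound yields $Q\le\frac{n}{2}$ w.h.p.\ (in fact with probability $1-e^{-\Theta(n)}$, since the gap is $\Theta(n)$).

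The main obstacle is the left-endpoint estimate: unlike $f(n/2)$, whose gap to $\frac{n}{2}$ is immediate, the bound at $\frac{n(1-4\epsilon)}{3(1+6\epsilon)}$ requires clearing denominators and checking the quadratic inequality $50\epsilon^2-6\epsilon-1<0$ (equivalently $2+12\epsilon-100\epsilon^2>0$) over the whole admissible range. The positive root of $50\epsilon^2-6\epsilon-1$ sits at roughly $0.214>1/6$, so the inequality is comfortable, but it must be verified up to $\epsilon$ approaching $1/6$, where the surviving gap is smallest. Everything else is a routine combination of Equation \eqref{expectation_Q_oknoise}, convexity of $f$, and the additive Chernoff bound, finished off by a plain union bound (or Lemma \ref{lemma:whp_intersection}) merging the lower- and upper-bound events.
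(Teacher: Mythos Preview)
Your proposal is correct and follows essentially the same approach as the paper: the lower bound is obtained directly from Lemma~\ref{lemma:thm1:enoughundecided}, and the upper bound is obtained by dropping the non-positive $-\tfrac{1+6\epsilon}{n}(s/2)^2$ term in Equation~\eqref{expectation_Q_oknoise}, observing that the remaining expression is a convex parabola in $q$ and hence maximized at one of the two endpoints of $[\tfrac{n(1-4\epsilon)}{3(1+6\epsilon)},\tfrac{n}{2}]$, checking both endpoints give $\mathbb{E}[Q\mid\mathbf{x}]\le(\tfrac{1}{2}-c)n$, and concluding with the additive Chernoff bound. Your left-endpoint computation $\tfrac{4+24\epsilon+100\epsilon^2}{12(1+6\epsilon)}n$ and the resulting inequality $50\epsilon^2-6\epsilon-1<0$ are in fact arithmetically cleaner than the paper's own calculation (which contains a minor typo), but the structure of the argument is identical.
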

\begin{proof}[Proof of Lemma \ref{lemma_3_1}]
    The fact that $Q \ge \frac{n}{3}\left(\frac{1-4\epsilon}{1+6\epsilon}\right)$ with probability $1-\exp(\Theta(n))$ (which is w.h.p.) comes from Lemma \ref{lemma:thm1:enoughundecided}.
    At the same time, from Equation \ref{expectation_Q}, it holds that
    \[
        \mean[Q \mid \mathbf{x}] \le \frac{(1-6\epsilon)n}{6} + \frac{1+6\epsilon}{4n}\left(2q^2+(n-q)^2\right).
    \] 
    Denote this expression as $f(q)$. For $\frac{n}{3}\left(\frac{1-4\epsilon}{1+6\epsilon}\right) \le q \le \frac{n}{2}$, the maximum of $f$ is obtained either at $q_1=\frac{n}{3}\left(\frac{1-4\epsilon}{1+6\epsilon}\right)$ or at $q_2=\frac{n}{2}$. 
    
    \begin{align*}
        f(q_1) = & \ \frac{1-6\epsilon}{3}n+\frac{2(1-4\epsilon)^2+4(1+11\epsilon)^2}{3^2\cdot 4(1+6\epsilon)}n \\
        = & \ \frac{6(1-36\epsilon^2)+3+36\epsilon+258\epsilon^2}{2\cdot 3^2(1+6\epsilon)}n \\
        = & \ \frac{9 +36\epsilon+42\epsilon^2}{2\cdot 9 (1+6\epsilon)}n \\
        = & \ \frac{1 + 4\epsilon + \frac{14}{3}\epsilon^2}{2(1+6\epsilon)}n \\
        \le & \ \left(\frac{1}{2}-c\right)n, \\
        f(q_2) = & \ \frac{1-6\epsilon}{6}n+\frac{3(1+6\epsilon)}{16}n \\
        = & \ \frac{17+6\epsilon}{48}n \\
        \le & \ \left(\frac{1}{2}-c\right)n,
    \end{align*}
    for some suitable constant $c>0$, and for $\epsilon < \frac{1}{6}$. Thus, $f(q)\le (1/2 - c)n$ and we conclude by using the additive form of Chernoff bound (Theorem \ref{chernoff:additive} in Appendix \ref{tools}), getting that $Q \le \frac{n}{2}$, with probability $1-\exp\left(c^2 n\right)$. Then, the intersection of two events holding with probability $1-\exp(\Theta(n))$ is still an event holding with probability $1-\exp(\Theta(n))$.
\end{proof}

\begin{lemma}\label{lemma:symbreak}
    Let $\mathbf{x}$ be any configuration such that $q(\mathbf{x})\in \left[\frac{n}{3}\left(\frac{1-4\epsilon}{1+6\epsilon}\right), \frac{n}{2}\right]$. Then, it holds that
    \begin{itemize}
        \item[(1)] for any constant $h>0$ there exists a constant $c_1>0$ such that 
        \[\mathbb{P}(\lvert S \rvert < h\sqrt{n}) \mid \mathbf{X}_t = \mathbf{x}) < c_1;\]
        \item[(2)] there exist two positive constants $\delta$ and $c_2$ such that 
        \[\mathbb{P}(\lvert S \rvert \ge (1+\delta)s \mid \mathbf{X}_t = \mathbf{x}) \ge 1-e^{-c_2\frac{s^2}{n}}.\]
    \end{itemize}
\end{lemma}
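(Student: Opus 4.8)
The plan is to realize the signed increment $S = A-B$ as a sum of $n$ \emph{independent} per-agent contributions and then to treat the two items with opposite-flavoured tools: item (2) is a concentration statement that follows from Chernoff bounds exactly as in Lemma \ref{lemma:thm1:biasincrease}, whereas item (1) is an \emph{anti-concentration} statement for which I would lower-bound the variance of $S$ and then invoke the Berry--Esseen theorem (Theorem \ref{thm:berryeseen}). So first I would write $S=\sum_{i=1}^n X_i$, where $X_i\in\{-1,0,+1\}$ is the contribution of agent $i$ to $A-B$ and the $X_i$ are independent (each agent samples a neighbour and is perturbed by noise independently). Writing $P_\alpha,P_\beta,P_u$ for the probabilities that an agent sees \mesalpha, \mesbeta, or the undecided value, the update rule gives that an \mesalpha-agent contributes $+1$ w.p.\ $P_\alpha+P_u$, a \mesbeta-agent contributes $-1$ w.p.\ $P_\beta+P_u$, and an undecided agent contributes $+1,-1,0$ w.p.\ $P_\alpha,P_\beta,P_u$. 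A direct computation yields $P_\alpha-P_\beta=\tfrac{s}{n}(\tfrac12+3\epsilon)$ and $P_u=(\tfrac16-\epsilon)+\tfrac{q}{n}(\tfrac12+3\epsilon)$; on the range $q\le n/2$ (and $\epsilon<1/6$) this gives $1-P_u>1/2$, a quantity bounded away from $0$, which will be the source of the variance.

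For item (2) I would reuse the argument of Lemma \ref{lemma:thm1:biasincrease}. Since the coefficient of $q$ in \eqref{expectation_S_oknoise} is positive, the hypothesis $q\ge \tfrac n3\cdot\tfrac{1-4\epsilon}{1+6\epsilon}$ forces $\mean[S\mid\config]\ge s(1+\epsilon/3)$. Setting $\delta=\epsilon/6$ and applying the additive Chernoff bound (Theorem \ref{chernoff:additive}) to $A$ and to $B$ with deviation $\tfrac{\epsilon}{12}s$ --- each of $A,B$ being a sum of at most $n$ independent indicators --- the union bound gives $S\ge \mean[S\mid\config]-\tfrac{\epsilon}{6}s\ge s(1+\delta)$ with probability at least $1-2e^{-\epsilon^2 s^2/(72n)}$. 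Absorbing the factor $2$ into the constant (legitimate since item (2) is only ever invoked, via Lemma \ref{lemma:symmetrygeneric}, for $s\ge h\sqrt n$ with $h$ large) yields the claimed $1-e^{-c_2 s^2/n}$, and $\abs{S}\ge S$ closes the case.

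Item (1) is the crux. I would first lower-bound the variance: the per-agent variances are nonnegative and each undecided agent contributes $1-P_u-(P_\alpha-P_\beta)^2$, so with $q=\Theta(n)$ undecided agents one gets $\operatorname{Var}(S)\ge q\big(1-P_u-(P_\alpha-P_\beta)^2\big)=\Theta(n)$ whenever $(P_\alpha-P_\beta)^2=o(1)$. Because every $\abs{X_i}\le 1$, the summed third absolute central moments are $O(\operatorname{Var}(S))$, so the Berry--Esseen error is $O\!\big(1/\sqrt{\operatorname{Var}(S)}\big)=O(1/\sqrt n)$. I would then split on the size of $s$. If $s\ge h'\sqrt n$ for a large constant $h'$, the concentration of item (2) already forces $S\ge h\sqrt n$ with probability $1-o(1)$, so $\pr{\abs{S}<h\sqrt n}$ is a small constant. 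If instead $s<h'\sqrt n$, then $\mean[S\mid\config]=O(\sqrt n)$ and $(P_\alpha-P_\beta)^2=O(1/n)$, hence $\sigma:=\sqrt{\operatorname{Var}(S)}\ge c\sqrt n$ for a constant $c$; Berry--Esseen then bounds $\pr{\abs{S}<h\sqrt n}$ by the Gaussian mass of an interval of length $2h\sqrt n/\sigma\le 2h/c$, which is at most $2\Phi(h/c)-1<1$, plus the $O(1/\sqrt n)$ error. Taking $c_1$ to be the larger of the two resulting constants (and $n$ large enough) gives $c_1<1$, as required.

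The main obstacle is precisely this anti-concentration step: establishing $\operatorname{Var}(S)=\Theta(n)$ uniformly over the admissible $q$-range, and controlling the mean shift $\mu=\mean[S\mid\config]$ so that the Gaussian interval mass stays bounded away from $1$ (the mass is maximised when the interval is centred, and here $\mu\ge 0$ only shifts it). The fact that the Berry--Esseen error is $o(1)$ --- guaranteed by the uniform boundedness $\abs{X_i}\le1$ and $\sigma=\Theta(\sqrt n)$ --- is exactly what lets the variance bound translate into genuine spreading of $S$ over a window of width $\Theta(\sqrt n)$, rather than a mere upper tail estimate.
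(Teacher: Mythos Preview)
Your approach is correct and, for item (1), genuinely different from the paper's. The paper does not write $S$ as a single sum of independent per-agent $\pm1/0$ contributions; instead it first invokes a domination argument to reduce to the zero-bias case $s=0$, then decomposes $A=A^q+A^a$, $B=B^q+B^b$ by the agent's \emph{previous} state, and applies the (i.i.d.) Berry--Esseen theorem separately to the binomials $A^a$, $B^b$ and to $A^q-B^q$ conditioned on $\{q-Q^q=k\}$ (where the latter becomes a sum of $k$ i.i.d.\ Rademacher variables). Your route is more direct: you observe that the $X_i$ are independent across agents (since each agent samples and is noise-perturbed independently), lower-bound $\operatorname{Var}(S)$ from the undecided-agent contributions alone, and apply Berry--Esseen once. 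This avoids the somewhat delicate domination step and the conditioning on $Q^q$, and it handles the case $0<s<h'\sqrt n$ uniformly rather than reducing to $s=0$. The trade-off is that your $X_i$ are independent but \emph{not identically distributed} (an $\alpha$-agent, a $\beta$-agent, and an undecided agent have different laws), so you need the non-identical version of Berry--Esseen, whereas the paper's Theorem~\ref{thm:berryeseen} is stated only for the i.i.d.\ case; you should cite the Lindeberg--Feller/Esseen form explicitly. Item (2) is handled identically in both proofs, and your remark about absorbing the factor $2$ for $s\ge h\sqrt n$ is exactly how the paper (implicitly) proceeds as well.
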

\begin{proof}[Proof of Lemma \ref{lemma:symbreak}]
    As for the first item, let $\mathbf{x}$ and $\mathbf{x}_0$ be two states such that $\lvert s(\mathbf{x})\rvert < h \sqrt{n}$, $\lvert s(\mathbf{x}_0)\rvert=0$, $q(\mathbf{x})=q(\mathbf{x}_0)$. A simple domination argument implies that 
    \[\mathbb{P}(\lvert S \rvert < h\sqrt{n}\mid \mathbf{X}_t = \mathbf{x}) \le \mathbb{P}(\lvert S \rvert < h\sqrt{n}\mid \mathbf{X}_t = \mathbf{x}_0).\] 
    Thus, we can bound just the second probability, where the initial bias is zero, which implies that $a=b$.
    
    Define $A^q$, $B^q$, $Q^q$ the random variables counting the nodes that were undecided in the configuration $\mathbf{x}_0$ and that, in the next round, get the opinion \mesalpha, \mesbeta, and undecided, respectively. Similarly, $A^a$ ($B^b$) counts the nodes that support opinion \mesalpha (\mesbeta) in the configuration $\mathbf{x}_0$ and that, in the next round, still support the same opinion. Trivially, $A=A^q+A^a$ and $B=B^q+B^b$. Moreover, observe that, among these random variables, only $A^q$ and $B^q$ are mutually dependent. Thus, conditioned to the event $\{\mathbf{X}_t=\mathbf{x}_0)\}$, if $\alpha = \mathbb{E}[A^a\mid \mathbf{X}_t = \mathbf{x}_0] = \mathbb{E}[B^b\mid \mathbf{X}_t = \mathbf{x}_0]$, it holds that
    \begin{align*}
        \mathbb{P}(\lvert S \rvert \ge h\sqrt{n})
        \ge & \ \mathbb{P}(A\ge B + h\sqrt{n}) \\
        \ge & \ \mathbb{P}(A^q\ge B^q + h\sqrt{n})\mathbb{P}(A^a\ge \alpha)\mathbb{P}(B^b\le \alpha).
    \end{align*}
    The random variables $A^a-\alpha$ and $B^b-\alpha$ happen to be binomial distribution with expectation 0 (recall that $a=b$), and finite second and third moment. Thus, the Berry-Essen Theorem (Theorem \ref{thm:berryeseen} in Appendix \ref{tools}) allows us to approximate up to an arbitrary-small constant $\epsilon_1>0$ (as long as $n$ is large enough) both the random variables with a normal distribution that has expectation 0. Thus, 
    \[\mathbb{P}(A^a\ge \alpha)=\mathbb{P}(B^b\le \alpha)\ge \left(\frac{1}{2}-\epsilon_1\right).\]
    As for the random variable $A^q-B^q$, notice that conditioned to the event $\{q-Q^q = k\}$, it is the sum of $k$ Rademacher random variables. The hypothesis $q\le \frac{n}{2}$ allows us to use the Chernoff bound on $Q^q$ and show that $Q^q\le \frac{3}{4}q$ w.h.p. Thus, since $q\ge \frac{n}{3}\left(\frac{1-4\epsilon}{1+6\epsilon}\right)$, it holds that $q-Q^q = \Theta(n)$ w.h.p. It follows that the conditional variance of $A^q-B^q$ given $q-Q^q$ yields $\Theta(n)$ w.h.p., and $A^q-B^q$ conditioned to the event $E = \{q-Q^q = \Theta(n)\}$ can be approximated by a normal distribution up to an arbitrary-small constant $\epsilon_2>0$. Then, we have that 
    \[\mathbb{P}(A^q\ge B^q + h\sqrt{n}) \ge \mathbb{P}(A^q\ge B^q + h\sqrt{n}\mid E)\mathbb{P}(E)\ge \epsilon_2.\]
    Setting $c_1 = \epsilon_1 \cdot \epsilon_2$, we get property (1). 
    
    As for property (2), it is easy to see that the hypothesis on $q$ implies that 
    \[\mathbb{E}[S \mid \mathbf{X}_t = \mathbf{x}] \ge s\left(1+\frac{\epsilon}{3}\right).\] We can get the property applying the additive Chernoff bound (Theorem \ref{chernoff:additive} in Appendix \ref{tools}) separately on $A$ and $B$, and then the union bound, as we did in the proof of Lemma \ref{lemma:thm1:biasincrease}, getting that $\expect{S \mid \config} \ge s(1+\epsilon/6)$, w.h.p.
\end{proof}

The reader may notice that Lemma \ref{lemma_3_1} requires the number of undecided nodes to be inside the interval $\left[\frac{n}{3}\frac{1-4\epsilon}{1+6\epsilon}, \frac{n}{2}\right]$. We will later take care of this issue with Lemmas \ref{lemma:symbreakbadstarting1} and \ref{lemma:symbreakbadstarting2}, showing that whenever this number is not within the above interval, in at most $\bigo(\log n)$ rounds it will lie in it. Furthermore, Lemma \ref{lemma_3_1} guarantees that the condition on the undecided nodes holds ``only'' w.h.p., while   Lemma \ref{lemma:symmetrygeneric} requires this condition to  hold with probability $1$. We   show  this issue can be solved using a coupling argument similar to that in \cite{DBLP:conf/mfcs/ClementiGGNPS18}. The key point is that, starting from any configuration $\mathbf{x}$ with $q(\mathbf{x}) \in \left[\frac{n}{3}\frac{1-4\epsilon}{1+6\epsilon}, \frac{n}{2}\right]$, the probability that the process goes in one of those ``bad'' configurations with $q$ outside the above interval is negligible. Intuitively speaking, the configurations \emph{actually visited} by the process before breaking symmetry do satisfy the hypothesis of Lemma \ref{lemma:symmetrygeneric}. In order to make this argument rigorous, we define a  \emph{pruned} process, by removing all the \emph{unwanted} transitions.

Let $\Bar{s}\in \{0,1,\dots, n\}$, and $\mathbf{z}(\Bar{s})$ the configuration such that $s(\mathbf{z}(\Bar{s})) = \Bar{s}$, and $q(\mathbf{z}(\Bar{s})) = \frac{n}{2}$. Let $p_{\mathbf{x},\mathbf{y}}$ be the probability of a transition from the configuration $\mathbf{x}$ to the configuration $\mathbf{y}$ in the \uproc. The \pruned behaves exactly as the original process but every transition from a configuration $\mathbf{x}$ such that $q(\mathbf{x}) \in \left[\frac{n}{3}\frac{1-4\epsilon}{1+6\epsilon}, \frac{n}{2}\right]$ and $s(\mathbf{x}) = \bigo(\sqrt{n\log n})$ to a configuration $\mathbf{y}$ such that $q(\mathbf{y}) < \frac{n}{3}\frac{1-4\epsilon}{1+6\epsilon}$ or $q(\mathbf{y}) > \frac{n}{2}$ has probability $p_{\mathbf{x},\mathbf{y}}'=0$. Moreover, for any $\Bar{s}\in[n]$, starting from the configuration $\mathbf{x}$, the probability of reaching the configuration $\mathbf{z}(\Bar{s})$ is
\[p_{\mathbf{x},\mathbf{z}(\Bar{s})}'=p_{\mathbf{x},\mathbf{z}(\Bar{s})}+\sum_{\substack{\mathbf{y}:\ s(\mathbf{y}) = \Bar{s} \text{ and} \\ q(\mathbf{y}) \notin \left[\frac{n}{3}\frac{1-4\epsilon}{1+6\epsilon}, \frac{n}{2}\right] }}p_{\mathbf{x},\mathbf{y}}.\]
All the other transition probabilities remain the same. Observe that the \uproc is defined in such a way that it has exactly the same marginal probability of the original process w.r.t. the random variable $s(\mathbf{X}_t)$; thus, Lemma \ref{lemma:symbreak} holds for the \pruned as well and we can apply Lemma \ref{lemma:symmetrygeneric}. Then, the \pruned reaches a configuration having bias $\Omega(\sqrt{n\log n})$ within $\bigo(\log n)$ rounds, w.h.p., as shown in the following lemma.

\begin{lemma}\label{lemma:prunedwins}
       Starting from any configuration $\mathbf{x}$ such that $q(\mathbf{x}) \in \left[\frac{n}{3}\frac{1-4\epsilon}{1+6\epsilon}, \frac{n}{2}\right]$ and $s(\mathbf{x}) = \bigo(\sqrt{n\log n})$, the \pruned reaches a configuration having bias $\Omega(\sqrt{n\log n})$ within $\bigo(\log n)$ rounds, w.h.p.
\end{lemma}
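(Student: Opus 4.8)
The plan is to apply the generic symmetry-breaking tool of Lemma~\ref{lemma:symmetrygeneric} to the \pruned (rather than to the original \uproc), with $f(\mathbf{X}_t) = \abs{s(\mathbf{X}_t)}$, $c_3 = \gamma$, and target $m = \gamma\sqrt{n}\log n$; reaching $f \ge m$ then gives bias of absolute value $\gamma\sqrt{n}\log n = \Omega(\sqrt{n\log n})$, as claimed. The reason to work with the \pruned is that Lemma~\ref{lemma:symmetrygeneric} demands its two hypotheses hold for \emph{every} state $x$ with $f(x) < m$, i.e.\ with probability~$1$, whereas the original process keeps the number of undecided nodes inside the interval $\left[\frac{n}{3}\frac{1-4\epsilon}{1+6\epsilon}, \frac{n}{2}\right]$ only w.h.p.\ (Lemma~\ref{lemma_3_1}). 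By construction, the \pruned zeroes out exactly those transitions that would take $q$ out of this interval and reroutes their mass onto the configurations $\mathbf{z}(\bar s)$ with $q = n/2$ and unchanged bias $\bar s$; hence every configuration the \pruned can visit while $s = \bigo(\sqrt{n\log n})$ has $q$ in the interval deterministically.

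First I would verify the two hypotheses of Lemma~\ref{lemma:symmetrygeneric} via Lemma~\ref{lemma:symbreak}, whose assumption $q \in \left[\frac{n}{3}\frac{1-4\epsilon}{1+6\epsilon}, \frac{n}{2}\right]$ is now met at every visited state. Hypothesis~(1) is precisely Lemma~\ref{lemma:symbreak}(1): for every constant $h>0$ there is a constant $c_1 < 1$ with $\mathbb{P}(\abs{S} < h\sqrt{n} \mid \mathbf{X}_t = \mathbf{x}) < c_1$. Hypothesis~(2) follows from Lemma~\ref{lemma:symbreak}(2): there are constants $\delta, c_2 > 0$ with $\mathbb{P}(\abs{S} \ge (1+\delta)\abs{s} \mid \mathbf{X}_t = \mathbf{x}) \ge 1 - e^{-c_2 \abs{s}^2/n}$, which is exactly $\mathbb{P}(f(\mathbf{X}_{t+1}) < (1+\delta)f(\mathbf{X}_t)) < e^{-c_2 f(\mathbf{x})^2/n}$. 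Because these bounds now hold with probability $1$ at every state reachable by the \pruned, both hypotheses are satisfied and Lemma~\ref{lemma:symmetrygeneric} yields that the \pruned reaches $f \ge m$, i.e.\ bias $\Omega(\sqrt{n\log n})$, within $\bigo(\log n)$ rounds, w.h.p.

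Finally I would transfer the conclusion back to the object of interest. Here the defining feature of the \pruned is essential: its rerouting collects the removed mass at configurations with the \emph{same} signed bias $\bar s$, so the marginal law of $s(\mathbf{X}_t)$ is identical under the two processes. Consequently the distribution of the first time the bias reaches magnitude $m$ coincides for the \pruned and the \uproc, and the $\bigo(\log n)$ bound transfers. (One should also note that the choice $f = \abs{s}$ lets Lemma~\ref{lemma:symbreak} be applied symmetrically, so the argument is insensitive to which of the two opinions the bias happens to drift toward.)

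I expect the main obstacle to be the rigorous justification that the hypotheses hold with probability~$1$ for the \pruned: one must check that pruning never produces a reachable state with $q$ outside the interval, and that rerouting preserves the marginal of $s$ exactly, so that no probability mass is lost or displaced in bias. The genuinely delicate analytic ingredient sits upstream, in Lemma~\ref{lemma:symbreak}(1): with essentially no drift when $s = o(\sqrt n)$, the $\Theta(n)$ variance of the one-step bias (obtained via the Berry--Esseen approximation and the lower bound $q = \Theta(n)$) is what guarantees a constant probability of jumping to magnitude $\Omega(\sqrt n)$ each round, and this anticoncentration, together with the multiplicative drift of Hypothesis~(2) once $s$ is larger, is what powers the logarithmic escape time.
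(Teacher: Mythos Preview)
Your proposal is correct and follows essentially the same route as the paper: apply Lemma~\ref{lemma:symmetrygeneric} to the \pruned, verifying its two hypotheses via Lemma~\ref{lemma:symbreak}, which now holds at every reachable state because the pruning forces $q$ to stay in the interval and preserves the one-step marginal of the bias. Two small remarks: (i) your third paragraph, transferring the conclusion back to the original \uproc, is not part of Lemma~\ref{lemma:prunedwins} but rather the content of Lemma~\ref{lemma:prunedtooriginal}, so it should be moved there; (ii) the paper sets $f(\mathbf{X}_t)=s(\mathbf{X}_t)$ whereas you take $f=\lvert s\rvert$, but since Lemma~\ref{lemma:symbreak} is stated for $\lvert S\rvert$ your choice is the cleaner one and the difference is cosmetic.
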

\begin{proof}[Proof of Lemma \ref{lemma:prunedwins}]
    Let $\gamma>0$ be a constant and $m=\gamma\sqrt{n}\log n$ be the target value of the bias in Lemma \ref{lemma:symmetrygeneric}. Since $q(\mathbf{x}) \in \left[\frac{n}{3}\frac{1-4\epsilon}{1+6\epsilon}, \frac{n}{2}\right]$ and $s(\mathbf{x}) = \bigo(\sqrt{n\log n})$, the \pruned satisfies Lemma \ref{lemma:symbreak} with probability 1, and thus we can apply Lemma \ref{lemma:symmetrygeneric} (setting the function $f(\mathbf{X}_t) = s(\mathbf{X}_t)$), which gives us that the \pruned process reaches a configuration $\mathbf{y}$ having bias $s(\mathbf{y})\ge m = \Omega(\sqrt{n\log n})$ within $\bigo(\log n)$ rounds, w.h.p.
\end{proof}

We now want to go back to the original process. The definition of the \pruned suggests a natural coupling between it and the original one. If the two process are in different states, then they act independently, while, if they are in the same state $\mathbf{x}$, they move together unless the \uproc goes in a configuration $\mathbf{y}$ such that $q(\mathbf{y}) \notin \left[\frac{n}{3}\frac{1-4\epsilon}{1+6\epsilon}, \frac{n}{2}\right]$. In that case, the \pruned goes in $\mathbf{z}(s(\mathbf{y}))$. In the proof of the next lemma, we show that the time the \pruned takes to reach bias $\Omega(\sqrt{n\log n})$ stochastically dominates the one of the original process, giving the result.

\begin{lemma}\label{lemma:prunedtooriginal}
Starting from any configuration $\mathbf{x}$ such that $q(\mathbf{x}) \in \left[\frac{n}{3}\frac{1-4\epsilon}{1+6\epsilon}, \frac{n}{2}\right]$ and $s(\mathbf{x}) = \bigo(\sqrt{n\log n})$, the \uproc reaches a configuration having bias $\Omega(\sqrt{n\log n})$ within $\bigo(\log n)$ rounds, w.h.p.
\end{lemma}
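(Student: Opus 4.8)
The plan is to transfer the conclusion of Lemma~\ref{lemma:prunedwins} from the \pruned to the \uproc through the natural coupling described above, arguing that the two processes stay identical throughout the entire symmetry-breaking phase with overwhelming probability; this is the rigorous way to implement the stochastic-domination intuition. I would run the original chain $\{\mathbf{X}_t\}_{t\ge 0}$ and the pruned chain $\{\mathbf{X}_t'\}_{t\ge 0}$ jointly from the common starting configuration $\mathbf{x}$, letting them perform the very same transition whenever they occupy the same state, the only exception being a \emph{bad} transition of the original chain, namely one landing in a configuration $\mathbf{y}$ with $q(\mathbf{y})\notin\left[\frac{n}{3}\frac{1-4\epsilon}{1+6\epsilon},\frac{n}{2}\right]$, in which case $\{\mathbf{X}_t'\}$ is instead sent to $\mathbf{z}(s(\mathbf{y}))$. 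By construction the two chains agree on every coordinate up to, and excluding, the first bad transition of the original chain.

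The quantitative heart of the argument is to bound the probability that a bad transition ever occurs before the target bias $\Omega(\sqrt{n\log n})$ is reached. As long as no bad transition has yet happened and the target is not yet attained, an easy induction shows that the current configuration satisfies $q\in\left[\frac{n}{3}\frac{1-4\epsilon}{1+6\epsilon},\frac{n}{2}\right]$ (a bad transition being, by definition, precisely a transition out of this interval) and $s=\bigo(\sqrt{n\log n})=o(n)\le\beta n$. Hence Lemma~\ref{lemma_3_1} applies at that round and guarantees that $Q$ remains in the interval with probability $1-\exp(-\Theta(n))$; in other words, each round produces a bad transition with probability at most $\exp(-\Theta(n))$. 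A union bound over the $c_4\log n$ rounds forming the horizon furnished by Lemma~\ref{lemma:prunedwins}, for a suitable constant $c_4>0$, then bounds the probability that any bad transition occurs within this horizon by $c_4\log n\cdot\exp(-\Theta(n))=\exp(-\Theta(n))$.

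I would conclude by intersecting two high-probability events. Let $\mathcal{A}$ be the event that the original chain performs no bad transition within $c_4\log n$ rounds, so that $\pr{\mathcal{A}}\ge 1-\exp(-\Theta(n))$ by the previous paragraph, and let $\mathcal{B}$ be the event that the pruned chain reaches bias $\Omega(\sqrt{n\log n})$ within $c_4\log n$ rounds, which holds w.h.p.\ by Lemma~\ref{lemma:prunedwins}. On $\mathcal{A}\cap\mathcal{B}$ the two coupled chains coincide over the whole horizon, so the original chain itself reaches bias $\Omega(\sqrt{n\log n})$ within $c_4\log n$ rounds. By Lemma~\ref{lemma:whp_intersection}, the event $\mathcal{A}\cap\mathcal{B}$ holds w.h.p., which yields the claim.

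The step I expect to be the main obstacle, conceptually, is justifying that Lemma~\ref{lemma_3_1} may legitimately be invoked at \emph{every} round of the horizon and that the coupling is consequently preserved. This is exactly where the strength of Lemma~\ref{lemma_3_1} is essential: since its failure probability is exponentially small rather than merely inverse-polynomial, the union bound over the $\Theta(\log n)$ rounds still leaves a negligible total failure probability, so that the possibility of the original process escaping the controlled band $q\in\left[\frac{n}{3}\frac{1-4\epsilon}{1+6\epsilon},\frac{n}{2}\right]$ before the bias is amplified can be discarded. The remaining work is pure bookkeeping: checking that up to the target the bias is indeed $o(n)$, so that both the hypothesis $s\le\beta n$ of Lemma~\ref{lemma_3_1} and the small-bias condition in the very definition of the \pruned are met at each relevant round.
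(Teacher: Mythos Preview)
Your proposal is correct and follows essentially the same approach as the paper: both couple the original and pruned chains, use the exponentially small failure probability of Lemma~\ref{lemma_3_1} together with a union bound over the $\bigo(\log n)$-round horizon to show the coupling survives, and then transfer the conclusion of Lemma~\ref{lemma:prunedwins}. The paper phrases the decomposition via hitting times $\tau,\tau^*$ and a per-state separation event $\rho_\mathbf{x}^t$ (summing also over states in $H$, hence picking up a harmless $n^2$ factor), whereas you package the same idea as the intersection of two high-probability events $\mathcal{A}$ and $\mathcal{B}$; the underlying argument is the same.
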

\begin{proof}[Proof of Lemma \ref{lemma:prunedtooriginal}]
    Let $\{\mathbf{X}_t\}$ and $\{\mathbf{Y}_t\}$ be the original process and the pruned one, respectively. Call $H$ the set of possible initial configuration according to the hypothesis, and let $\mathbf{x}\in H$. Note that if $\mathbf{X}_t=\mathbf{Y}_t=\mathbf{x}$, then
    \[
    \mathbf{Y}_{t+1} = \begin{cases} 
      \mathbf{X}_{t+1} & \textbf{if } \mathbf{X}_{t+1} \in H \\
     \mathbf{z}(s(\mathbf{X}_t)) & \text{otherwise}
    \end{cases}.
    \]
    Let $\tau = \inf\{t: \nat : \abs{s(\mathbf{X}_t)} \ge \sqrt{n\log n}\}$, and let $\tau* = \inf\{t \in \nat : \abs{s(\mathbf{Y}_t)} \ge \sqrt{n\log n}\}$. For any configuration $\mathbf{x}\in H$, define $\rho_\mathbf{x}^t$ the event that the two processes $\{\mathbf{X}_t\}$ and $\{\mathbf{Y}_t\}$ have separated at round $t+1$, i.e.\ $\rho_\mathbf{x}^t = \{\mathbf{X}_t = \mathbf{Y}_t = \mathbf{x}_t\}\cap\{\mathbf{X}_{t+1} \neq \mathbf{Y}_{t+1}\}$. Observe that, if the two couple processes in the same configuration $\mathbf{x}_0\in H$ and $\tau > c\log n$, then either $\tau* > c\log n$ or there exists a round $t\le c\log n$ such that for some $\mathbf{x}\in H$ the event $\rho^t_{\mathbf{x}}$ has occurred. Hence, if $\prob'_{\mathbf{x}_0,\mathbf{x}_0}$ is the joint probability for the couple $(\mathbf{X}_t, \mathbf{Y_t})$ which both start at $\mathbf{x}_0$, we have
    \begin{align*}
        & \ \prob'_{\mathbf{x}_0,\mathbf{x}_0}(\tau > c\log n) \\
        \le & \ \prob'_{\mathbf{x}_0,\mathbf{x}_0}\left(\{\tau* > c\log n\}\cup \{\exists t\le c\log n, \exists \mathbf{x}\in H : \rho_\mathbf{x}^t\}\right) \\
        \le & \ \prob'_{\mathbf{x}_0,\mathbf{x}_0}(\tau* > c\log n) + \prob'_{\mathbf{x}_0,\mathbf{x}_0}(\exists t\le c\log n, \exists \mathbf{x}\in H : \rho_\mathbf{x}^t).
    \end{align*}
    As for the first item, since Lemma \ref{lemma:symmetrygeneric} holds for the \pruned, we have that it is upper bounded by $1/n^{-\Theta(1)}$. As for the second term, we get that
    \begin{align*}
        \prob'_{\mathbf{x}_0,\mathbf{x}_0}(\exists t\le c\log n, \exists \mathbf{x}\in H : \rho_\mathbf{x}^t) \le & \ \sum_{t=1}^{c\log n}\prob'_{\mathbf{x}_0,\mathbf{x}_0}\left(\exists \mathbf{x}\in H : \rho^t_\mathbf{x}\right)\\
        = & \ \sum_{t=1}^{c\log n}\sum_{\mathbf{x}\in H}\prob'_{\mathbf{x}_0,\mathbf{x}_0}\left( \rho^t_\mathbf{x}\right)\\
        \le & \ \sum_{t=1}^{c\log n} \frac{n^2}{e^{-\Theta(n)}} \\
        \le & \ \frac{1}{n},
    \end{align*}
    where in the second inequality we used the probabilities computed in the proof of Lemma \ref{lemma_3_1}, and the fact that $\abs{H}$ is at most all the combinations of parameters $q$ and $s$.
\end{proof}

Now, we take care of those cases in which the starting configuration is such that $q \notin \left[\frac{n}{3}\frac{1-4\epsilon}{1+6\epsilon}, \frac{n}{2}\right]$. Indeed, if $q < \frac{n}{3}\frac{1-4\epsilon}{1+6\epsilon}$, the following holds.

\begin{lemma}\label{lemma:symbreakbadstarting1}
    Let $\mathbf{x}$ be any starting configuration such that $q(\mathbf{x}) \le \frac{n}{2}$, and $s(\config) \le \frac{2\epsilon}{(1+6\epsilon)^2}n$. Then, at the next round, it holds that  $q(\mathbf{x}) \in \left[\frac{n}{3}\frac{1-4\epsilon}{1+6\epsilon}, \frac{n}{2}\right]$, w.h.p.
\end{lemma}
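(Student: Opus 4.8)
The plan is to control the one-step expectation $\mean[Q \mid \mathbf{x}]$ supplied by Equation \eqref{expectation_Q_oknoise} (specialized to $p = 1/6-\epsilon$) and then to invoke concentration. Viewed as a function of $q$ for fixed $s$, the expression
\[
    \mean[Q \mid \mathbf{x}] = \frac{3}{4}\left(\frac{1+6\epsilon}{n}\right)q^2 - \frac{1+6\epsilon}{2}q + \frac{5+6\epsilon}{12}n - \frac{1+6\epsilon}{n}\left(\frac{s}{2}\right)^2
\]
consists of a convex parabola in $q$ whose vertex sits at $q = n/3$, plus a nonpositive correction governed by the bias. My strategy is to show that, under the hypotheses $q \le n/2$ and $s \le \frac{2\epsilon}{(1+6\epsilon)^2}n$, the value $\mean[Q \mid \mathbf{x}]$ is separated from \emph{both} endpoints of the target interval $\left[\frac{n}{3}\frac{1-4\epsilon}{1+6\epsilon}, \frac{n}{2}\right]$ by a constant fraction of $n$, so that two applications of the additive Chernoff bound (Theorem \ref{chernoff:additive}) followed by the union bound conclude the argument.

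For the upper bound I would discard the nonpositive term $-\frac{1+6\epsilon}{n}(s/2)^2$ and maximize the remaining convex parabola over $q \in [0, n/2]$. Convexity sends the maximum to an endpoint; evaluating at $q=0$ and $q=n/2$ gives $\frac{5+6\epsilon}{12}n$ and $\frac{17+6\epsilon}{48}n$ respectively, the former being larger. Since $\frac{5+6\epsilon}{12} = \frac{1}{2} - \frac{1-6\epsilon}{12}$ and $\epsilon < 1/6$, this yields $\mean[Q \mid \mathbf{x}] \le \left(\frac{1}{2} - \frac{1-6\epsilon}{12}\right)n$, a constant gap below $n/2$. The additive Chernoff bound with deviation $\lambda = \frac{1-6\epsilon}{12}n$ then gives $Q \le n/2$ with probability $1 - e^{-\Theta(n)}$.

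For the lower bound, the convex parabola in $q$ attains its minimum $\frac{n}{3}$ exactly at its vertex $q = n/3 \in [0, n/2]$, so the first three terms are at least $\frac{n}{3}$ for every admissible $q$. The hypothesis on $s$ bounds the correction by $\frac{1+6\epsilon}{4n}\cdot\frac{4\epsilon^2}{(1+6\epsilon)^4}n^2 = \frac{\epsilon^2}{(1+6\epsilon)^3}n$, giving $\mean[Q \mid \mathbf{x}] \ge \frac{n}{3} - \frac{\epsilon^2}{(1+6\epsilon)^3}n$. Since $\frac{n}{3} - \frac{n}{3}\frac{1-4\epsilon}{1+6\epsilon} = \frac{10\epsilon}{3(1+6\epsilon)}n$, the margin above the lower endpoint equals $\frac{\epsilon n}{3(1+6\epsilon)}\left(10 - \frac{3\epsilon}{(1+6\epsilon)^2}\right)$, which is a positive constant fraction of $n$ because $\frac{3\epsilon}{(1+6\epsilon)^2} \le \frac{1}{2}$ for $\epsilon \le 1/6$. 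A second application of the additive Chernoff bound then yields $Q \ge \frac{n}{3}\frac{1-4\epsilon}{1+6\epsilon}$ with probability $1 - e^{-\Theta(n)}$.

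The delicate point is the lower bound: the vertex value $\frac{n}{3}$ lies only a $\Theta(\epsilon)$-fraction of $n$ above the target endpoint $\frac{n}{3}\frac{1-4\epsilon}{1+6\epsilon}$, so I must check that the bias-induced correction $-\frac{1+6\epsilon}{n}(s/2)^2$ does not erode this thin margin. This is precisely what the quantitative hypothesis $s \le \frac{2\epsilon}{(1+6\epsilon)^2}n$ secures, and confirming that the surviving margin remains $\Omega(n)$ (so that additive concentration applies at scale $\lambda = \Theta(n)$) is the only step requiring genuine care; the upper-bound estimate and the Chernoff/union-bound machinery are routine.
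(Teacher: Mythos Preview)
Your proposal is correct and mirrors the paper's argument: both bound $\mean[Q\mid\mathbf{x}]$ via the convex parabola $f(q)$ from Equation~\eqref{expectation_Q_oknoise}, take the maximum at the endpoints $q\in\{0,n/2\}$ for the upper bound and the vertex $q=n/3$ for the lower bound, control the bias correction $-\frac{1+6\epsilon}{n}(s/2)^2$ using the hypothesis on $s$, and finish with two additive Chernoff bounds plus a union bound. The only cosmetic difference is that you track the correction as $\frac{\epsilon^2}{(1+6\epsilon)^3}n$ (the tight consequence of the stated hypothesis), whereas the paper uses the looser $\frac{\epsilon}{1+6\epsilon}n$; either suffices to leave a $\Theta(\epsilon n)$ margin above the lower endpoint.
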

\begin{proof}[Proof of Lemma \ref{lemma:symbreakbadstarting1}]
    Let $f(q)= \frac{3}{4}\left(\frac{1+6\epsilon}{n}\right)q^2 - \frac{1+6\epsilon}{2}q + \frac{5+6\epsilon}{12}n $. By Equation \eqref{expectation_Q_oknoise} we have that 
    \[
        f(q) - \frac{\epsilon}{(1+6\epsilon)}n \le \expect{Q\mid \config} \le f(q).
    \]
    Then, $f(q)$ has its maximum in one of the two boundaries, namely $q=0$ or $q = \frac{n}{2}$. Observe that \[
        f(0) = \frac{n}{2} - \frac{1-6\epsilon}{12}n < \frac{n}{2}
    \]
    since $\epsilon < \frac{1}{6}$. At the same time, we have that 
    \[
        f(n/2) = -\frac{1+6\epsilon}{16}n + \frac{5+6\epsilon}{12}n \le \frac{n}{2} - \frac{1-6\epsilon}{12}n < \frac{n}{2}.
    \]
    Thus, for the additive form of Chernoff bound (Theorem \ref{chernoff:additive} in Appendix \ref{tools}), we have that 
    \[
        \pr{Q \ge \frac{n}{2} \mid \config} \le \pr{Q \ge \expect{Q \mid \config} + \frac{1-6\epsilon}{12}n} \le \exp\left(-\frac{2(1-6\epsilon)}{144}n\right). 
    \]
    On the other hand, the function
    \[
        f(q) - \frac{\epsilon}{(1+6\epsilon)}n = \frac{3}{4}\left(\frac{1+6\epsilon}{n}\right)q^2 - \frac{1+6\epsilon}{2}q + \frac{5+6\epsilon}{12}n- \frac{\epsilon}{(1+6\epsilon)}n
    \]
    has its minimum in $\bar{q} = \frac{n}{3}$. Then
    \[
      f(\bar{q}) - \frac{\epsilon}{(1+6\epsilon)}n = - \frac{1+6\epsilon}{12}n + \frac{5+6\epsilon}{12}n- \frac{\epsilon}{(1+6\epsilon)}n = \frac{n}{3}\left(\frac{1+3\epsilon}{1+6\epsilon}\right),
    \]
    which is at most $\expect{Q \mid \config} - \frac{6\epsilon n}{3(1+6\epsilon)}$.
    From the additive form of Chernoff bound (Theorem \ref{chernoff:additive} in Appendix \ref{tools}), this implies the following.
    \[
        \pr{Q \le \frac{1-3\epsilon}{1+6\epsilon} \mid \config} \le \pr{Q \le \expect{Q \mid \config} - \frac{2\epsilon n}{1+6\epsilon} \mid \config} \le \exp\left(-\frac{8 \epsilon^2 n}{(1+6\epsilon)^2}\right),
    \]
    which, together with the previous result, gives the thesis.
\end{proof}

On the other hand, the number of undecided nodes decreases as long as it is more than $n/2$, w.h.p. The following lemma shows this behaviour.

\begin{lemma}\label{lemma:symbreakbadstarting2}
       Let $\mathbf{x}$ be any starting configuration such that $q(\mathbf{x}) > \frac{n}{2}$. Then, at the next round, it holds that $Q \le q\left(\frac{5}{6}+\epsilon\right)$, w.h.p.
\end{lemma}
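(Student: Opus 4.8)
The plan is to show that the conditional expectation $\mean\left[Q\mid\mathbf{x}\right]$ already falls below $q(5/6+\epsilon)$ by a margin linear in $n$, and then to conclude with the additive Chernoff bound exactly as in Lemmas \ref{lemma:thm1:enoughundecided} and \ref{lemma_3_1}. Since we are in the regime $p=1/6-\epsilon$ of Theorem \ref{thm:symbre}, the expectation is given by Equation \eqref{expectation_Q_oknoise}. Its last summand $-\frac{1+6\epsilon}{n}(s/2)^2$ is nonpositive, so I would first discard it to obtain the upper bound $\mean\left[Q\mid\mathbf{x}\right]\le g(q)$, where $g(q)=\frac{3(1+6\epsilon)}{4n}q^2-\frac{1+6\epsilon}{2}q+\frac{5+6\epsilon}{12}n$ depends on the configuration only through $q$; this renders the argument uniform in the bias $s$.

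The heart of the proof is the inequality $g(q)\le q\left(\frac{5}{6}+\epsilon\right)-c\,n$ for some constant $c=c(\epsilon)>0$ valid throughout $q\in(n/2,n]$. I would analyze $h(q):=g(q)-q\left(\frac{5}{6}+\epsilon\right)$, a parabola with positive leading coefficient $\frac{3(1+6\epsilon)}{4n}$ and hence convex; its maximum over a closed interval is therefore attained at an endpoint, so only the two boundary values matter. A direct computation gives $h(n/2)=-\frac{1+6\epsilon}{16}n$ and $h(n)=-\frac{1-6\epsilon}{6}n$, both strictly negative for every $\epsilon\in(0,1/6)$. Setting $c=\min\{\frac{1+6\epsilon}{16},\frac{1-6\epsilon}{6}\}>0$ then yields the claimed bound on the whole interval $(n/2,n]$. (As a consistency check, $g(n/2)=\frac{17+6\epsilon}{48}n$ agrees with the value $f(q_2)$ already computed in Lemma \ref{lemma_3_1}.)

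Finally, given $\mathbf{X}_t=\mathbf{x}$ the next state of each node is determined independently, so $Q$ is a sum of $n$ independent indicator variables and the additive Chernoff bound (Theorem \ref{chernoff:additive} in Appendix \ref{tools}) applies. With the margin $c\,n$ in hand,
\[
\pr{Q>q\left(\frac{5}{6}+\epsilon\right)\,\Big|\,\mathbf{x}}\le\pr{Q>\mean\left[Q\mid\mathbf{x}\right]+c\,n\,\Big|\,\mathbf{x}}\le e^{-2c^2n},
\]
which is in fact exponentially small (stronger than w.h.p.). I do not anticipate a genuine conceptual obstacle here: the only care needed is the convexity reduction, which collapses the continuum of cases $q\in(n/2,n]$ to two endpoint evaluations, together with the verification that both stay negative for all admissible $\epsilon$ — and it is precisely the sign of the factor $1-6\epsilon$ appearing in $h(n)$ that pins down the use of the hypothesis $\epsilon<1/6$ (equivalently $p>0$).
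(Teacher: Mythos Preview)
Your proof is correct and follows essentially the same approach as the paper: drop the nonpositive $-\frac{1+6\epsilon}{n}(s/2)^2$ term, bound the resulting convex quadratic in $q$ on $(n/2,n]$ by checking the two endpoints, and apply the additive Chernoff bound with a linear margin. The only cosmetic difference is that the paper first shows $\mean[Q\mid\mathbf{x}]\le q(\tfrac{2}{3}+2\epsilon)$ (obtaining $\tilde f(n/2)=-\tfrac{7+30\epsilon}{24}n$ and $\tilde f(n)=0$) and then uses $q(\tfrac{1}{6}-\epsilon)\ge\tfrac{n}{2}(\tfrac{1}{6}-\epsilon)$ as the Chernoff margin, whereas you fold the slack directly into the comparison with $q(\tfrac{5}{6}+\epsilon)$; your endpoint values $h(n/2)=-\tfrac{1+6\epsilon}{16}n$ and $h(n)=-\tfrac{1-6\epsilon}{6}n$ are correct and the argument goes through.
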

\begin{proof}[Proof of Lemma \ref{lemma:symbreakbadstarting2}]
    Consider $f(q)= \frac{3}{4}\left(\frac{1+6\epsilon}{n}\right)q^2 - \frac{1+6\epsilon}{2}q + \frac{5+6\epsilon}{12}n$, which no less than $\expect{Q \mid \config}$. We see that, for $q > n/2$, the following is true:
    \[
        \tilde{f}(q) = f(q) - q\left(\frac{2}{3}+2\epsilon\right) \le 0.
    \]
    Indeed, 
    \[
        \tilde{f}(q) = \frac{3}{4}\left(\frac{1+6\epsilon}{n}\right)q^2 - \frac{7+30\epsilon}{6}q+\frac{5+6\epsilon}{12}n,
    \]
    which has its maximum in one of the two boundaries, namely $q=n/2$ and $q=n$. We compute the expression in these quantities.
    \begin{align*}
        & \tilde{f}(n/2) = \frac{(9+54\epsilon)n-(28+120\epsilon)n+(5+6\epsilon)n}{48} = -\frac{7+30\epsilon}{24}n < 0, \\
        & \tilde{f}(n) = \frac{(9+54\epsilon)n-(14+60\epsilon)n+(5+6\epsilon)n}{12} = 0.
    \end{align*}
    Then, we have that
    \[
        \expect{Q \mid \config} \le q\left(\frac{2}{3}+2\epsilon\right).
    \]
    The additive form of Chernoff bound (Theorem \ref{chernoff:additive} in Appendix \ref{tools}) implies that
    \begin{align*}
        \pr{Q \ge q\left(\frac{5}{6}+ \epsilon\right) \mid \config}  = & \ \pr{Q \ge q\left(\frac{2}{3}+ 2\epsilon\right) + q\left(\frac{1}{6} - \epsilon\right) \mid \config} \\
        \le & \ \pr{Q \ge \expect{Q \mid \config} + q\left(\frac{1}{6} - \epsilon\right) \mid \config} \\
        \le & \ \pr{Q \ge \expect{Q \mid \config} + \frac{n}{2}\left(\frac{1}{6} - \epsilon\right) \mid \config} \\
        \le & \ \exp\left(\frac{n}{2}\left(\frac{1}{6} - \epsilon\right)^2\right).
    \end{align*}
    which gives the thesis.
\end{proof}
Finally, we are ready to prove Theorem \ref{thm:symbre}.

\begin{proof}[Proof of Theorem \ref{thm:symbre}: Wrap-Up]
    Let $\gamma>0$ be a constant and $m=\gamma\sqrt{n}\log n$ be the  target value of the bias in Lemma \ref{lemma:symmetrygeneric}.
    Let $\mathbf{x}$ be any initial configuration having bias $\lvert s \rvert < m$. We have two cases.
    \begin{itemize}
        \item[(i)] If the number of undecided nodes is such that $q(\config) \in \left[\frac{n}{3}\frac{1-4\epsilon}{1+6\epsilon}, \frac{n}{2}\right]$, then Lemma \ref{lemma:prunedtooriginal} implies that the \uproc reaches a configuration having bias $\Omega(\sqrt{n\log n})$ in $\bigo(\log n)$ rounds, w.h.p.;
        \item[(ii)] else, if the starting configuration is such that $q(\config) \notin \left[\frac{n}{3}\frac{1-4\epsilon}{1+6\epsilon}, \frac{n}{2}\right]$, then, for Lemmas \ref{lemma:symbreakbadstarting1} and \ref{lemma:symbreakbadstarting2}, the \uproc reaches within $\bigo(\log n)$ rounds a configuration $\mathbf{y}$ having $q(\mathbf{y}) \in \left[\frac{n}{3}\frac{1-4\epsilon}{1+6\epsilon}, \frac{n}{2}\right]$, w.h.p. Then, either $s(\mathbf{y}) = \Omega(\sqrt{n\log n)}$, or we are in case (i). As Lemma \ref{lemma:whp_intersection} in the preliminaries implies, the intersection of $\bigo(\log n)$ events that hold w.h.p.\ is an event which holds w.h.p.
    \end{itemize}
    Then, Theorem \ref{theorem_almost_plurality} gives the desired result.
\end{proof}

\section{Simulations} \label{sec:exp}
We made computer simulations with  values of the input size $n$ ranging from $2^{10}$ to $2^{17}$,   and for noise probabilities of $p=1/12$, $p=1/8$, $p=1/7$, and $p= 1/5$. Besides confirming the phase transition predicted by our theoretical analysis, the outcomes     show this behaviour emerges even for reasonable  sizes (i.e. $n$) of the system. 
Indeed, we made the \udyn run for 400 rounds for the above values of $p$. 

\begin{table}[htb]
	\centering
	\begin{tabular}{|c|c|c|c|}
	    \hline
		\multirow{2}{*}{Size $n$} & \multicolumn{3}{c|}{Average times} \\ \cline{2-4}
		& $p=1/12$ & $ p = 1/8$ & $p=1/7$ \\ \hline
		$2^{10}$ & $24$ & Failed & Failed \\ \hline
		$2^{11}$ & $24$ & $39$ & Failed \\ \hline
		$2^{12}$ & $28$ & $41$ & Failed \\ \hline
		$2^{13}$ & $27$ & $53$ & Failed \\ \hline
		$2^{14}$ & $32$ & $52$ & $77$ \\ \hline
		$2^{15}$ & $32$ & $54$ & $88$ \\ \hline
		$2^{16}$ & $36$ & $57$ & $96$ \\ \hline
		$2^{17}$ & $39$ & $68$ & $103$ \\ \hline
	\end{tabular}
	\caption{The average time to reach a meta-stable almost-consensus phase.}
	\label{table:goodnoise}
\end{table}
\begin{table}[htb]
	\centering
	\begin{tabular}{|c|c|c|}
	    \hline
		\multirow{2}{*}{Size $n$} & \multicolumn{2}{c|}{$p=1/5$}\\ \cline{2-3}
		& Average time & Number of switches  \\ \hline
		$2^{10}$ & $1$ & $39$ \\ \hline
		$2^{11}$ & $4$ & $42$ \\ \hline
		$2^{12}$ & $7$ & $42$ \\ \hline
		$2^{13}$ & $10$ & $37$ \\ \hline
		$2^{14}$ & $14$ & $38$ \\ \hline
		$2^{15}$ & $18$ & $38$ \\ \hline
		$2^{16}$ & $22$ & $44$ \\ \hline
		$2^{17}$ & $27$ & $39$ \\ \hline
	\end{tabular}
	\caption{The average 
	time the bias goes below $10\sqrt{n\log n}$, and the number of switches.}
	\label{table:badnoise}
\end{table}
In the first three settings of $p$, we started from complete balanced configurations (i.e. when both opinions  are supported by, respectively, $\frac n2$ agents) we found a fast convergence to the meta-stable regime of almost consensus, which then  did not break for all the rest of the simulation. Furthermore, we have noticed that the symmetry is always broken when the bias is ``roughly'' $10\sqrt{n\log n}$. 
As for the case  $p=1/5$, we started from a configuration of complete consensus  and we observed that, within a  short time, the system looses any information on the majority opinion (say, the bias becomes less than $10\sqrt{n\log n}$) and it keeps  this meta-stable phase with  many switches of the  majority opinion.
In Table \ref{table:goodnoise}, we can see the average time (computed over 100 trials and approximated to the  closest integer) in which the system enters the predicted meta-stable phase of almost consensus for any value of $p=1/12$, $p=1/8$, and $p=1/7$, for different input sizes. We also see that, when  $p$ gets close  to $1/6$,  the  emergent behaviour is observed only for large
values of    $n$ and  some of the experiments fail.
In Table \ref{table:badnoise}, we see the average times  in which the bias of the system goes below $10\sqrt{n\log n}$ for different input sizes, and the corresponding number of switches of majority opinion during the remaining time.

\section{Conclusions} \label{sec:concl}
While our mathematical analysis for the \undecided dynamics does not directly apply to other opinion dynamics, it suggests that a general phase-transition phenomenon may hold for a large class of dynamics characterized by an \emph{exponential drift} towards consensus configurations.
Our work thus  naturally poses the general question of whether it is possible to provide a characterization of
opinion dynamics with stochastic interactions,
in terms of their critical behavior with respect to uniform communication noise. 
 
As for the specific mathematical questions that follow from our results,  our assumption of a complete topology as underlying  graph is,  for several real MAS, a rather strong condition. 
However, two remarks on this issue follow. 
On one hand, we observe that, according to the adopted communication model,  at every round, every agent can pull information from just one other agent:
the \emph{dynamic} communication pattern is thus random and sparse. 
This setting may model opportunistic MAS where mobile agents use to meet randomly, at a relatively-high rate. 
On the other hand, we believe that a similar transition phase
does hold even for sparse topologies
having good expansion$/$conductance \cite{hoory_expander_2006}: this is an interesting   question left open by this work.

\appendix
\section{Appendix: Useful Tools}\label{tools}
Here we present the concentration results we have used all over the analysis. For an overview on the forms of Chernoff bounds see \cite{chernoff1} or \cite{chernoff2}.
\begin{theorem}[Multiplicative forms of Chernoff bounds]\label{chernoff:multiplicative}
	Let $X_1, X_2, \dots, X_n$ be independent $\{0,1\}$ random variables. Let $X = \sum_{i=1}^n X_i$ and $\mu=\mathbb{E}[X]$. Then:
	\begin{enumerate}
		\item[(i)] for any $\delta \in (0,1)$ and $\mu \le \mu_+ \le n$, it holds that 
		\begin{equation}\label{MCB+}
		    P\big(X\ge (1+\delta)\mu_+\big)\le e^{-\frac{1}{3}\delta^2\mu_+},
		\end{equation}
		\item[(ii)] for any $\delta \in (0,1)$ and $0 \le \mu_- \le \mu$, it holds that 
		\begin{equation}\label{MCB-}
			P\big(X\le (1-\delta)\mu_-\big)\le e^{-\frac{1}{2}\delta^2\mu_-}.
		\end{equation}
	\end{enumerate}
\end{theorem}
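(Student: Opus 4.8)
The plan is to use the standard exponential-moment (Chernoff) method: apply Markov's inequality to $e^{tX}$ for a free parameter $t$, exploit independence to factor the moment generating function, and then optimize over $t$.

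For the upper tail (i), I would fix $t>0$ and write, via Markov's inequality,
\[
    P\big(X \ge (1+\delta)\mu_+\big) = P\!\left(e^{tX} \ge e^{t(1+\delta)\mu_+}\right) \le \frac{\mathbb{E}[e^{tX}]}{e^{t(1+\delta)\mu_+}}.
\]
By independence $\mathbb{E}[e^{tX}] = \prod_{i=1}^n \mathbb{E}[e^{tX_i}]$, and writing $p_i = P(X_i=1)$ we have $\mathbb{E}[e^{tX_i}] = 1 + p_i(e^t-1) \le e^{p_i(e^t-1)}$ by $1+x\le e^x$. Hence $\mathbb{E}[e^{tX}] \le e^{\mu(e^t-1)}$; since $t>0$ gives $e^t-1>0$, the hypothesis $\mu \le \mu_+$ upgrades this to $\mathbb{E}[e^{tX}] \le e^{\mu_+(e^t-1)}$. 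Substituting and choosing the optimal $t = \ln(1+\delta)$ yields the classical bound $\big(e^\delta/(1+\delta)^{1+\delta}\big)^{\mu_+}$.

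For the lower tail (ii) the only change is to apply Markov's inequality to $e^{-tX}$ with $t>0$; the identical factorization gives $\mathbb{E}[e^{-tX}] \le e^{\mu(e^{-t}-1)}$, and now $e^{-t}-1<0$ combines with $\mu_- \le \mu$ to give $\le e^{\mu_-(e^{-t}-1)}$. Optimizing with $t=\ln\tfrac{1}{1-\delta}$ produces $\big(e^{-\delta}/(1-\delta)^{1-\delta}\big)^{\mu_-}$. Everything up to here is routine bookkeeping.

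The main obstacle — essentially the only nontrivial step — is converting these two exact expressions into the clean forms claimed, i.e.\ verifying
\[
    \frac{e^\delta}{(1+\delta)^{1+\delta}} \le e^{-\delta^2/3} \quad\text{and}\quad \frac{e^{-\delta}}{(1-\delta)^{1-\delta}} \le e^{-\delta^2/2} \qquad (\delta\in(0,1)).
\]
I would take logarithms. For the first, set $g(\delta) \coloneqq \delta - (1+\delta)\ln(1+\delta) + \delta^2/3$ and show $g\le 0$ on $(0,1)$ via $g(0)=0$ together with $g'(\delta) = \tfrac{2\delta}{3} - \ln(1+\delta) \le 0$ (which itself follows from $g'(0)=0$ and a sign analysis of $g''(\delta)=\tfrac23-\tfrac1{1+\delta}$). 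The second is actually cleaner: expanding in a power series one finds $-\delta - (1-\delta)\ln(1-\delta) = -\sum_{k\ge 2}\frac{\delta^k}{k(k-1)} \le -\frac{\delta^2}{2}$, since every term beyond the quadratic is subtracted off. Thus the care lies entirely in these elementary but slightly delicate calculus estimates, while the independence and optimization steps are standard.
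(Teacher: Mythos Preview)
The paper does not actually prove this theorem: it is listed in Appendix~\ref{tools} (``Useful Tools'') as a standard concentration inequality, with the reader referred to \cite{chernoff1} and \cite{chernoff2} for details. So there is no proof in the paper to compare against.

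Your argument is the standard and correct one. One small point in your sketch of the upper-tail simplification: the sign analysis of $g''(\delta)=\tfrac{2}{3}-\tfrac{1}{1+\delta}$ shows that $g'$ first decreases (on $(0,\tfrac12)$) and then increases (on $(\tfrac12,1)$), so from $g'(0)=0$ alone you cannot yet conclude $g'\le 0$ on the whole interval --- you also need to check the right endpoint, namely $g'(1)=\tfrac{2}{3}-\ln 2<0$. With that one extra evaluation the conclusion $g'\le 0$ on $(0,1)$, hence $g\le 0$, goes through. The lower-tail power-series computation is clean and correct as written.
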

	
\begin{theorem}[Additive forms of Chernoff bounds]\label{chernoff:additive}
	Let $X_1, X_2, \dots, X_n$ be independent $\{0,1\}$ random variables. Let $X = \sum_{i=1}^n X_i$ and $\mu=\mathbb{E}[X]$. Then:
	\begin{enumerate}
		\item[(i)] for any $0 < \lambda < n$ and $\mu \le \mu_+ \le n$, it holds that 
		\begin{equation}\label{ACB+}
			P\big(X\ge \mu_+ +\lambda \big)\le e^{-\frac{2}{n}\lambda^2},
		\end{equation}
		\item[(ii)] for any $0 < \lambda < \mu_-$ and $0 \le \mu_- \le \mu$, it holds that 
		\begin{equation}\label{ACB-}
			P\big(X\le \mu_- - \lambda \big)\le e^{-\frac{2}{n}\lambda^2}.
		\end{equation}
	\end{enumerate}
\end{theorem}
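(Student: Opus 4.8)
The plan is to prove this additive (Hoeffding-type) bound via the exponential moment method, also known as the Chernoff--Cram\'er technique; I would treat the upper tail (i) in detail, and the lower tail (ii) then follows by applying the same argument to the variables $1-X_i$ (equivalently, by symmetry). First I would fix an arbitrary parameter $t>0$ and apply Markov's inequality to $e^{tX}$, obtaining
\[
P(X \ge \mu_+ + \lambda) = P\!\left(e^{tX} \ge e^{t(\mu_+ + \lambda)}\right) \le e^{-t(\mu_+ + \lambda)}\, \mean[e^{tX}].
\]
By independence of the $X_i$, the moment generating function factorizes as $\mean[e^{tX}] = \prod_{i=1}^n \mean[e^{tX_i}]$, which reduces the whole estimate to controlling the single-variable MGFs.

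The central step is therefore a bound on each $\mean[e^{tX_i}]$. Writing $\mu_i = \mean[X_i]$ and centering, I would invoke Hoeffding's lemma: for any random variable $Y$ taking values in an interval $[a,b]$ with $\mean[Y]=0$, one has $\mean[e^{tY}] \le \exp\!\big(t^2(b-a)^2/8\big)$. Since each $X_i$ takes values in $\{0,1\}\subseteq[0,1]$, here $b-a=1$, so $\mean[e^{t(X_i-\mu_i)}] \le e^{t^2/8}$ and hence $\mean[e^{t(X-\mu)}] \le e^{n t^2/8}$, where $\mu=\sum_i \mu_i$. Substituting $\mean[e^{tX}] = e^{t\mu}\mean[e^{t(X-\mu)}]$ back and using the hypothesis $\mu_+ \ge \mu$ (so that $\mu_+ + \lambda - \mu \ge \lambda$ lets me discard the nonnegative surplus) yields
\[
P(X \ge \mu_+ + \lambda) \le \exp\!\left(-t\lambda + \tfrac{n}{8}t^2\right).
\]
The remaining step is to optimize over the free parameter: the exponent is minimized at $t = 4\lambda/n$, which gives exponent $-2\lambda^2/n$ and proves claim (i).

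The main obstacle is establishing Hoeffding's lemma itself, since everything else is bookkeeping. Here I would use the convexity of $y \mapsto e^{ty}$ to dominate it on $[a,b]$ by the chord through the endpoints, take expectations using $\mean[Y]=0$, and then analyze the resulting function $\phi(t) = \log\big(\tfrac{b}{b-a}e^{ta} - \tfrac{a}{b-a}e^{tb}\big)$ by checking $\phi(0)=\phi'(0)=0$ and bounding $\phi''(t) \le (b-a)^2/4$ uniformly; a second-order Taylor expansion then delivers the claimed Gaussian-type bound. A minor but genuine subtlety is the slack between $\mu_+$ and the true mean $\mu$: because we only assume $\mu_+ \ge \mu$, the centering must be performed around $\mu$ and the inequality $\mu_+ + \lambda - \mu \ge \lambda$ used to keep the bound clean and independent of $\mu_+$. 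For the lower tail (ii), the substitution $X_i \mapsto 1-X_i$ turns the event $\{X \le \mu_- - \lambda\}$ into an upper-tail event of exactly the same form, so no new estimates are needed.
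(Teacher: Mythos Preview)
Your argument is correct and is the standard Chernoff--Cram\'er/Hoeffding derivation: Markov's inequality applied to $e^{tX}$, independence to factorize the MGF, Hoeffding's lemma for a $[0,1]$-valued variable to get $\mean[e^{t(X-\mu)}]\le e^{nt^2/8}$, absorption of the slack $\mu_+ - \mu \ge 0$, and optimization at $t=4\lambda/n$. The handling of the lower tail by the substitution $X_i\mapsto 1-X_i$ is also fine (and indeed yields a bound valid for any $\mu_-\le\mu$, so the stated restriction $\lambda<\mu_-$ is not actually needed for the inequality to hold).

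There is nothing to compare against, though: the paper does not prove this theorem. It is listed in the appendix as a standard tool, with the remark ``For an overview on the forms of Chernoff bounds see \cite{chernoff1} or \cite{chernoff2},'' and no proof is given. Your write-up therefore supplies a self-contained justification that the paper simply outsources to the literature.
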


The Berry-Eseen theorem is well treated in \cite{korolev2010}, and it gives an estimation on ``how far'' is the distribution of the normalized sum of i.i.d.\ random variables to the standard normal distribution.

\begin{theorem}[Berry-Eseen]\label{thm:berryeseen}
    Let $X_1, \dots, X_n$ be $n$ i.i.d.\ (either discrete or continuous) random variables with zero mean, variance $\sigma^2>0$, and finite third moment. Let $Z$ the standard normal random variable, with zero mean and variance equal to 1. Let $F_n(x)$ be the cumulative function of $ \frac{S_n}{\sigma\sqrt{n}}$, where $S_n = \sum_{i=1}^n X_i$, and $\Phi(x)$ that of $Z$. Then, there exists a positive constant $C>0$ such that
    \[
        \sup_{x\in \realnum} \abs{F_n(x) - \Phi(x)} \le \frac{C}{\sqrt{n}}
    \]
    for all $n\ge 1$.
\end{theorem}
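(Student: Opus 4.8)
The plan is to run the classical characteristic-function argument, since the statement is precisely the Berry--Esseen bound for i.i.d.\ summands with finite third moment. Write $\rho = \mean[\abs{X_1}^3]$ and, after rescaling by $\sigma$, assume $\sigma = 1$, so that the quantity of interest is $S_n/\sqrt n$ with characteristic function $\psi_n(t) = \varphi(t/\sqrt n)^n$, where $\varphi(t) = \mean[e^{itX_1}]$, while $\Phi$ has characteristic function $e^{-t^2/2}$. The entire argument reduces to controlling $\psi_n(t) - e^{-t^2/2}$ and feeding this estimate into a smoothing inequality.

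First I would invoke \emph{Esseen's smoothing lemma}: for any distribution function $F$ and any differentiable $G$ with $\sup_x\abs{G'(x)} \le M$, and for every $T>0$,
\[
    \sup_{x\in\realnum}\abs{F(x) - G(x)} \le \frac{1}{\pi}\int_{-T}^{T}\abs{\frac{\hat F(t) - \hat G(t)}{t}}\,dt + \frac{24M}{\pi T},
\]
with $\hat F,\hat G$ the respective characteristic functions. For $G=\Phi$ one has $M = 1/\sqrt{2\pi}$, so the last term is $O(1/T)$; taking $T = a\sqrt n$ for a suitable constant $a$ makes it $O(1/\sqrt n)$, already matching the target rate. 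It then remains to bound the integral by $O(1/\sqrt n)$.

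For that I would expand $\varphi$ to third order. Since $\mean[X_1]=0$, $\mean[X_1^2]=1$, and $\rho<\infty$, the inequality $\abs{e^{ix}-1-ix+x^2/2}\le \abs{x}^3/6$ gives $\varphi(u) = 1 - u^2/2 + \theta(u)$ with $\abs{\theta(u)}\le \rho\abs{u}^3/6$. Substituting $u = t/\sqrt n$ and using the elementary bound $\abs{z^n - w^n}\le n\,\abs{z-w}\,\max(\abs z,\abs w)^{n-1}$ with $z=\varphi(u)$ and $w=e^{-u^2/2}$ (so that $w^n = e^{-t^2/2}$), one obtains, on the window $\abs t\le a\sqrt n$,
\[
    \abs{\psi_n(t) - e^{-t^2/2}} \le \frac{C_1\rho\,\abs{t}^3}{\sqrt n}\,e^{-t^2/4},
\]
for an absolute constant $C_1$, provided $a$ is small enough that $\abs{\varphi(u)}$ stays bounded away from $1$. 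Dividing by $\abs t$ and integrating, the factor $\abs{t}^2 e^{-t^2/4}$ is integrable over $\realnum$, so the integral is $O(\rho/\sqrt n)$; combined with the smoothing term this yields $\sup_x\abs{F_n(x)-\Phi(x)} \le C/\sqrt n$, with $C$ absorbing the factor $\rho/\sigma^3$ coming from the rescaling.

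The main obstacle is this last step: making the estimate on $\abs{\psi_n - e^{-t^2/2}}$ uniform across the \emph{whole} window $\abs t\le a\sqrt n$ rather than only near $t=0$. One must fix $a$ so that $\abs{\varphi(t/\sqrt n)}$ decays at a Gaussian rate under the $n$-th power while the cubic remainder stays in force; this is exactly where the finite-third-moment hypothesis is used and where the constant $C$ is produced. The smoothing lemma and the rescaling to $\sigma=1$ I would treat as routine, citing the former as standard.
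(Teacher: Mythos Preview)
The paper does not prove this statement at all: Theorem~\ref{thm:berryeseen} is listed in the ``Useful Tools'' appendix as a classical result, with a reference to the literature (\cite{korolev2010}) and no argument given. So there is no paper proof to compare against.

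Your sketch is the standard characteristic-function route (Esseen's smoothing lemma plus a third-order expansion of $\varphi$), and as an outline it is correct. The only substantive point I would flag is the passage from the local estimate $\varphi(u)=1-u^2/2+\theta(u)$ to the uniform bound on $\abs{\psi_n(t)-e^{-t^2/2}}$ over the whole window $\abs{t}\le a\sqrt n$: the inequality $\abs{z^n-w^n}\le n\abs{z-w}\max(\abs z,\abs w)^{n-1}$ by itself does not give the $e^{-t^2/4}$ decay you claim, since $\max(\abs{\varphi(u)},e^{-u^2/2})$ could be as large as $\abs{\varphi(u)}$, and one needs a separate bound of the form $\abs{\varphi(u)}\le e^{-cu^2}$ for $\abs u$ small (which follows from the second-moment assumption) to get the Gaussian envelope. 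You acknowledge this as ``the main obstacle'', and that is exactly right; filling it in is routine but not automatic from what you wrote. In any case, for the purposes of this paper the theorem is quoted, not proved, so a citation would suffice.
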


\section{Appendix: Proofs}\label{app:preliminaries}


\begin{proof}[Equations \ref{expectation_S} and \ref{expectation_Q}]
    \begin{align*}
        \mean\left[S\bigm| \mathbf{x} \right] = \ & \mean\left[A\bigm| \mathbf{x} \right] - \mean\left[B\bigm| \mathbf{x} \right]\nonumber
    \\ = \ & \frac{1-2p}{n}\left[s(a+b)+2qs\right] + \frac{p}{n}\left[s(a+b)\right]\nonumber
    \\ = \ & s\frac{1-2p}{n}\left[n+q\right] + s\frac{p}{n}\left[n-q\right] \nonumber
    \\ = \ & s\left(1-p+(1-3p)\frac{q}{n}\right),\\
    \mean\left[Q\bigm| \mathbf{x} \right] = \ & + \frac{a}{n}\left[p(a+q)+(1-2p)b\right] \nonumber
    \\ & + \frac{b}{n}\left[p(b+q)+(1-2p)a\right] \nonumber
    \\ & + \frac{q}{n}\left[p(a+b)+(1-2p)q\right] \nonumber
    \\ = \ & \frac{p}{n}\left[a^2+b^2+2q(a+b)\right] \nonumber
    \\ & + \frac{1-2p}{n}\left[2ab+q^2\right] \nonumber
    \\ = \ & pn + \frac{1-3p}{n}\left[2ab+q^2\right] \nonumber
    \\ = \ & pn+\frac{1-3p}{2n}\left[2q^2+(n-q)^2-s^2\right].
    \end{align*}
\end{proof}

\begin{proof}[Proof of Lemma \ref{lem:equiv}]
    The equivalence between the two processes is showed through a coupling. Formally, consider the complete graph of $n$ nodes, namely $K_n$, over which the former process runs. We define another graph $G_n$, which contains a sub-graph isomorphic to $K_n$ in the following way. Let $K_n'$ be a copy of $K_n$, and let $H$ be a graph of $\additionalNodes =\frac{\pnoise}{1-\pnoise}n$ isolated nodes (which will be the \emph{stubborn agents)}. Then, each node $u\in H$ is connected by edges to all nodes of $K_n'$, namely, each node of $K_n'$ has as its neighborhood the whole set of nodes of $K_n' \cup H$, while each node of $H$ has as its neighborhood only the set of nodes of $K_n'$. The nodes of $H$ are such that $\additionalNodes \cdot p_{1}$ are stubborn agents supporting opinion 1,  $\additionalNodes \cdot p_{2}$ are stubborn agents supporting opinion 2, and so on. Observe that $\sum_{i=1}^m p_{i} = 1$, so this partition is well defined. 
    
    The \udyn behaves in exactly the same way over $G_n$, with the exception that the stubborn agents never change their opinion and that there is no noise perturbing communications between agents. The coupling is any bijective function $f: K_n \to K_n'$ such that, for each $v\in K_n$,  $v$ and $f(v)$ support the same opinion at the beginning of the process. Consider the two resulting Markov processes $\{\mathbf{X}_t\}_{t\ge 0}$ over $K_n$ and $\{\mathbf{X}_t'\}_{t\ge 0}$ over $K_n'$, denoting the opinion configuration at time $t$ in $K_n$ and in $K_n'$, respectively. It is easy to see that the two transition matrix are exactly the same (this is the meaning of \emph{equivalence} between the two processes). Indeed, in the former model (a), the probability an agent pulls opinion $j$ at any given round is
    \[
        (1-\pnoise)\frac{c_j}{n} + \pnoise \cdot p_j\ ,
    \]
    where $c_j$ is the size of the community of agents supporting opinion $j$; in the model defined in (b), the probability a non-stubborn agent pulls opinion $j$ at any given round is 
    \[
        \frac{c_j+\additionalNodes\cdot p_j}{n+\additionalNodes} = \frac{c_j+\frac{\pnoise}{1-\pnoise}n\cdot p_j}{n+\frac{\pnoise}{1-\pnoise}n} = (1-\pnoise)\cdot\frac{c_j}{n}+\pnoise\cdot p_{j}\ .
    \]
    
\end{proof}

\bibliographystyle{abbrv}

\bibliography{biblio}

\begin{thebibliography}{10}

\bibitem{acemoglu_opinion_2012}
D.~Acemoglu, G.~Como, F.~Fagnani, and A.~E. Ozdaglar.
\newblock Opinion fluctuations and disagreement in social networks.
\newblock {\em Math. Oper. Res.}, 38(1):1--27, 2013.

\bibitem{angluin_simple_2007}
D.~Angluin, J.~Aspnes, and D.~Eisenstat.
\newblock A simple population protocol for fast robust approximate majority.
\newblock {\em Distributed Computing}, 21(2):87--102, 2008.

\bibitem{AFF19}
V.~Auletta, A.~Fanelli, and D.~Ferraioli.
\newblock Consensus in opinion formation processes in fully evolving
  environments.
\newblock In {\em AAAI'19}, pages 6022--6029, 2019.

\bibitem{axelrod_dissemination_1997}
R.~Axelrod.
\newblock The {Dissemination} of {Culture}: {A} {Model} with {Local}
  {Convergence} and {Global} {Polarization}.
\newblock {\em Journal of Conflict Resolution}, 41(2):203--226, Apr. 1997.

\bibitem{mas_complex_2016}
Q.~Bai, F.~Ren, K.~Fujita, M.~Zhang, and T.~Ito.
\newblock {\em Multi-agent and Complex Systems}.
\newblock Springer Publishing Company, Incorporated, 1st edition, 2016.

\bibitem{mas_engineering_2018}
M.~Baldoni, J.~P. M{\"u}ller, I.~Nunes, and R.~Zalila-Wenkstern.
\newblock {\em Engineering Multi-Agent Systems: 4th International Workshop,
  EMAS 2016, Singapore, Singapore, May 9-10, 2016, Revised, Selected, and
  Invited Papers}, volume 10093.
\newblock Springer, 2016.

\bibitem{BCN20}
L.~Becchetti, A.~E.~F. Clementi, and E.~Natale.
\newblock Consensus dynamics: An overview.
\newblock {\em {SIGACT} News}, 51(1):58--104, 2020.

\bibitem{benezit_interval_2009}
F.~Benezit, P.~Thiran, and M.~Vetterli.
\newblock Interval consensus: {From} quantized gossip to voting.
\newblock In {\em {ICASSP} 2009}, pages 3661--3664, Apr. 2009.

\bibitem{berenbrink_efficient_2016}
P.~Berenbrink, T.~Friedetzky, G.~Giakkoupis, and P.~Kling.
\newblock Efficient {Plurality} {Consensus}, {Or}: the {Benefits} of {Cleaning}
  up from {Time} to {Time}.
\newblock In {\em {ICALP} 2016}, volume~55 of {\em {LIPIcs}}, pages
  136:1--136:14, 2016.

\bibitem{boczkowski_minimizing_2018}
L.~Boczkowski, A.~Korman, and E.~Natale.
\newblock Minimizing message size in stochastic communication patterns: fast
  self-stabilizing protocols with 3 bits.
\newblock {\em Distributed Computing}, pages 1--19, Mar. 2018.

\bibitem{boczkowski_limits_2018}
L.~Boczkowski, E.~Natale, O.~Feinerman, and A.~Korman.
\newblock Limits on reliable information flows through stochastic populations.
\newblock {\em PLOS Computational Biology}, 14(6):e1006195, June 2018.

\bibitem{cardelli_cell_2012}
L.~Cardelli and A.~Csik{\'{a}}sz{-}Nagy.
\newblock The {Cell} {Cycle} {Switch} {Computes} {Approximate} {Majority}.
\newblock {\em Scientific Reports}, 2:656, Sept. 2012.

\bibitem{clementi_consensus_2018}
A.~Clementi, L.~Gual{\`{a}}, E.~Natale, F.~Pasquale, G.~Scornavacca, and
  L.~Trevisan.
\newblock Consensus {Needs} {Broadcast} in {Noiseless} {Models} but can be
  {Exponentially} {Easier} in the {Presence} of {Noise}.
\newblock report, CNRS, 2018.

\bibitem{DBLP:conf/mfcs/ClementiGGNPS18}
A.~E.~F. Clementi, M.~Ghaffari, L.~Gual{\`{a}}, E.~Natale, F.~Pasquale, and
  G.~Scornavacca.
\newblock A tight analysis of the parallel undecided-state dynamics with two
  colors.
\newblock In {\em {MFCS} 2018}, pages 28:1--28:15, 2018.

\bibitem{CHK18}
A.~Coates, L.~Han, and A.~Kleerekoper.
\newblock A unified framework for opinion dynamics.
\newblock In {\em AAMAS '18}, pages 1079--1086, 2018.

\bibitem{condon_approximate_2019}
A.~Condon, M.~Hajiaghayi, D.~Kirkpatrick, and J.~Manuch.
\newblock Approximate majority analyses using tri-molecular chemical reaction
  networks.
\newblock {\em Natural Computing}, Aug. 2019.

\bibitem{cooper_fast_2017}
C.~Cooper, T.~Radzik, N.~Rivera, and T.~Shiraga.
\newblock Fast {Plurality} {Consensus} in {Regular} {Expanders}.
\newblock In {\em {DISC} 2017}, volume~91 of {\em {LIPIcs}}, pages 13:1--13:16,
  2017.

\bibitem{cover_elements_2006}
T.~M. Cover and J.~A. Thomas.
\newblock {\em Elements of {Information} {Theory}}.
\newblock Wiley-Interscience, Hoboken, N.J, 2. edition, Sept. 2006.

\bibitem{cruciani_phase_2018}
E.~Cruciani, E.~Natale, A.~Nusser, and G.~Scornavacca.
\newblock Phase {Transition} of the 2-{Choices} {Dynamics} on
  {Core}-{Periphery} {Networks}.
\newblock In {\em {AAMAS} '18}, pages 777--785, 2018.

\bibitem{cruciani_distributed_2019}
E.~Cruciani, E.~Natale, and G.~Scornavacca.
\newblock Distributed {Community} {Detection} via {Metastability} of the
  2-{Choices} {Dynamics}.
\newblock In {\em {AAAI} 2019}, Honolulu, Hawaii, United States, Jan. 2019.

\bibitem{DNAW00}
G.~Deffuant, D.~Neau, F.~Amblard, and G.~Weisbuch.
\newblock Mixing beliefs among interacting agents.
\newblock {\em Advances in Complex Systems}, 03(01n04):87--98, 2000.

\bibitem{chernoff2}
B.~Doerr.
\newblock Probabilistic tools for the analysis of randomized optimization
  heuristics.
\newblock {\em CoRR}, abs/1801.06733, 2018.

\bibitem{draief_convergence_2012}
M.~Draief and M.~Vojnovic.
\newblock Convergence speed of binary interval consensus.
\newblock {\em SIAM J. on Control and Optimization}, 50(3):1087--1109, 2012.

\bibitem{chernoff1}
D.~P. Dubhashi and A.~Panconesi.
\newblock {\em Concentration of Measure for the Analysis of Randomized
  Algorithms}.
\newblock Cambridge University Press, 2009.

\bibitem{elsasser_brief_2017}
R.~Els{\"{a}}sser, T.~Friedetzky, D.~Kaaser, F.~Mallmann-Trenn, and H.~Trinker.
\newblock Brief {Announcement}: {Rapid} {Asynchronous} {Plurality} {Consensus}.
\newblock In {\em {PODC} '17}, pages 363--365, 2017.

\bibitem{emanuele_natale_computational_2017}
{Emanuele Natale}.
\newblock {\em On the {Computational} {Power} of {Simple} {Dynamics}}.
\newblock {PhD} {Thesis}, Sapienza University of Rome, 2017.

\bibitem{feinerman_breathe_2017}
O.~Feinerman, B.~Haeupler, and A.~Korman.
\newblock Breathe before speaking: efficient information dissemination despite
  noisy, limited and anonymous communication.
\newblock {\em Distributed Computing}, 30(5), Oct. 2017.

\bibitem{fraigniaud_noisy_2018}
P.~Fraigniaud and E.~Natale.
\newblock Noisy rumor spreading and plurality consensus.
\newblock {\em Distributed Computing}, pages 1--20, June 2018.

\bibitem{ghaffari_nearly-tight_2018}
M.~Ghaffari and J.~Lengler.
\newblock Nearly-{Tight} {Analysis} for 2-{Choice} and 3-{Majority} {Consensus}
  {Dynamics}.
\newblock In {\em {PODC} '18}, pages 305--313, 2018.

\bibitem{ghaffari_polylogarithmic_2016}
M.~Ghaffari and M.~Parter.
\newblock A {Polylogarithmic} {Gossip} {Algorithm} for {Plurality} {Consensus}.
\newblock In {\em {PODC} '16}, pages 117--126, 2016.

\bibitem{HP01}
Y.~Hassin and D.~Peleg.
\newblock Distributed {Probabilistic} {Polling} and {Applications} to
  {Proportionate} {Agreement}.
\newblock {\em Information and Computation}, 171(2):248--268, Dec. 2001.

\bibitem{hoory_expander_2006}
S.~Hoory, N.~Linial, and W.~Wigderson.
\newblock Expander graphs and their applications.
\newblock {\em Bull. Amer. Math. Soc. (N.S)}, 43:439--561, 2006.

\bibitem{korolev2010}
V.~Korolev and I.~Shevtsova.
\newblock On the upper bound for the absolute constant in the berry–esseen
  inequality.
\newblock {\em Theory of Probability and its Applications}, 54, 01 2010.

\bibitem{lin_robust_2007}
W.~Lin, L.~Zhixin, and G.~Lei.
\newblock Robust {Consensus} of {Multi}-agent {Systems} with {Noise}.
\newblock In {\em 2007 {Chinese} {Control} {Conference}}, pages 737--741, July
  2007.

\bibitem{mertzios_determining_2016}
G.~B. Mertzios, S.~E. Nikoletseas, C.~L. Raptopoulos, and P.~G. Spirakis.
\newblock Determining majority in networks with local interactions and very
  small local memory.
\newblock {\em Distributed Computing}, 30(1):1--16, 2016.

\bibitem{mobilia_does_2003}
M.~Mobilia.
\newblock Does a {Single} {Zealot} {Affect} an {Infinite} {Group} of {Voters}?
\newblock {\em Physical Review Letters}, 91(2):028701, July 2003.

\bibitem{mobilia_role_2007}
M.~Mobilia, A.~Petersen, and S.~Redner.
\newblock On the role of zealotry in the voter model.
\newblock {\em Journal of Statistical Mechanics: Theory and Experiment},
  2007(08):P08029, 2007.

\bibitem{mossel_opinion_2017}
E.~Mossel and O.~Tamuz.
\newblock Opinion exchange dynamics.
\newblock {\em Probability Surveys}, 14:155--204, 2017.

\bibitem{perron_using_2009}
E.~Perron, D.~Vasudevan, and M.~Vojnović.
\newblock Using {Three} {States} for {Binary} {Consensus} on {Complete}
  {Graphs}.
\newblock In {\em {IEEE} {INFOCOM} 2009}, pages 2527--2535, Apr. 2009.

\bibitem{shimizu_phase_2019}
N.~Shimizu and T.~Shiraga.
\newblock Phase {Transitions} of {Best}-of-{Two} and {Best}-of-{Three} on
  {Stochastic} {Block} {Models}.
\newblock In {\em {DISC} 2019}, July 2019.

\bibitem{Vicsek95}
T.~Vicsek, A.~Czir\'ok, E.~Ben-Jacob, I.~Cohen, and O.~Shochet.
\newblock Novel type of phase transition in a system of self-driven particles.
\newblock {\em Phys. Rev. Lett.}, 75:1226--1229, Aug 1995.

\bibitem{weisbuch_meet_2002}
G.~Weisbuch, G.~Deffuant, F.~Amblard, and J.-P. Nadal.
\newblock Meet, discuss, and segregate!
\newblock {\em Complexity}, 7(3):55--63, 2002.

\bibitem{yildiz_binary_2013}
E.~Yildiz, A.~Ozdaglar, D.~Acemoglu, A.~Saberi, and A.~Scaglione.
\newblock Binary {Opinion} {Dynamics} with {Stubborn} {Agents}.
\newblock {\em ACM Trans. Econ. Comput.}, 1(4):19:1--19:30, Dec. 2013.

\end{thebibliography}

\end{document}